\title{Sample-Based Matroid Prophet Inequalities}
\author{Hu Fu}
\email{fuhu@mail.shufe.edu.cn}
\affiliation{%
  \department{ITCS, Key Laboratory of Interdisciplinary Research of Computation and Economics}
  \institution{Shanghai University of Finance and Economics}
  \country{China}
}
\author{Pinyan Lu}
\email{lu.pinyan@mail.shufe.edu.cn}
\affiliation{%
  \department{ITCS, Key Laboratory of Interdisciplinary Research of Computation and Economics}
  \institution{Shanghai University of Finance and Economics}
  \country{China}
}
\author{Zhihao Gavin Tang}
\email{tang.zhihao@mail.shufe.edu.cn}
\affiliation{%
  \department{ITCS, Key Laboratory of Interdisciplinary Research of Computation and Economics}
  \institution{Shanghai University of Finance and Economics}
  \country{China}
}
\author{Hongxun Wu}
\email{wuhx@berkeley.edu}
\affiliation{%
  \institution{UC Berkeley}
  \country{USA}
}
\author{Jinzhao Wu}
\email{jinzhao.wu@yale.edu}
\affiliation{%
  \institution{Yale University}
  \country{USA}
}
\author{Qianfan Zhang}
\email{qianfan@princeton.edu}
\affiliation{%
  \institution{Princeton University}
  \country{USA}
}
\newcommand{\AutoAdjust}[3]{\mathchoice{ \left #1 #2  \right #3}{#1 #2 #3}{#1 #2 #3}{#1 #2 #3} }
\newcommand{\Xcomment}[1]{{}}
\newcommand{\InParentheses}[1]{\AutoAdjust{(}{#1}{)}}
\newcommand{\InBrackets}[1]{\AutoAdjust{[}{#1}{]}}%
\newcommand{\Ex}[2][]{\operatorname{\mathbf E}_{#1}\InBrackets{#2}}
\newcommand{\E}[2][]{\operatorname{\mathbf E}_{#1}\InBrackets{#2}}
\newcommand{\Prx}[2][]{\operatorname{\mathbf{Pr}}_{#1}\InBrackets{#2}}
\renewcommand{\Pr}{\mathbf{Pr}}
\newcommand{\dd}{\mathrm{d}}  %
\newcommand{\given}{\;\vert\;}
\newcommand{\eqdef}{\stackrel{\textnormal{def}}{=}}
\newcommand{\noaccents}[1]{#1}
\newcommand{\cF}{\mathcal{F}}
\newcommand{\newagentvar}[3][\noaccents]{%
\expandafter\newcommand\expandafter{\csname #2\endcsname}{#1{#3}}%
\expandafter\newcommand\expandafter{\csname #2s\endcsname}{#1{\boldsymbol{#3}}}%
\expandafter\newcommand\expandafter{\csname #2smi\endcsname}[1][i]{#1{\boldsymbol{#3}}_{\text{-}##1}}%
\expandafter\newcommand\expandafter{\csname #2i\endcsname}[1][i]{#1{#3}_{##1}}%
\expandafter\newcommand\expandafter{\csname #2ith\endcsname}[1][i]{#1{#3}_{(##1)}}%
}
\DeclareMathOperator{\poly}{poly}
\DeclareMathOperator{\opt}{OPT}
\DeclareMathOperator{\OPT}{OPT}
\DeclareMathOperator{\ALG}{ALG}
\DeclareMathOperator{\argmax}{argmax}
\DeclareMathOperator{\Span}{span}
\DeclareMathOperator{\rank}{rank}
\newcommand{\matroid}{\mathcal M}
\newcommand{\indset}{\mathcal I}
\newcommand{\event}{\mathcal{E}}
\newcommand{\polytope}{\mathcal{P}}
\newcommand{\I}[2][]{\operatorname{\mathbbm 1}_{#1}\InBrackets{#2}}
\theoremstyle{plain}
\newtheorem{theorem}{Theorem}[section]
\newtheorem{lemma}[theorem]{Lemma}
\newtheorem{fact}[theorem]{Fact}
\newtheorem{observation}[theorem]{Observation}
\newtheorem{remark}[theorem]{Remark}
\begin{abstract}
We study matroid prophet inequalities when distributions are unknown and accessible only through samples. While single-sample prophet inequalities for special matroids are known, no constant-factor competitive algorithm with even a sublinear number of samples was known for general matroids. Adding more to the stake, the single-sample version of the question for general matroids has close (two-way) connections with the long-standing matroid secretary conjecture.

In this work, we give a \((\sfrac14 - \varepsilon)\)-competitive matroid prophet inequality with only $O_\varepsilon(\poly \log n)$ samples. Our algorithm consists of two parts: (i) a novel \emph{quantile-based} reduction from matroid prophet inequalities to online contention resolution schemes (OCRSs) with \(O_\varepsilon(\log n)\) samples, and (ii) a \((\sfrac14 - \varepsilon)\)-selectable matroid OCRS with \(O_\varepsilon(\poly \log n)\) samples which carefully addresses an adaptivity challenge.

\end{abstract}
\begin{document}

\maketitle
\newpage

\section{Introduction}
\label{sec:intro}
In the classical prophet inequality problem introduced by \citet{KS77,KS78}, $n$ items, with values $\val_1,\val_2,\dots,\val_n$ drawn independently from known distributions $\dist_1,\dist_2,\dots,\dist_n$, arrive one by one.
An algorithm, upon observing the value of each item, must decide irrevocably whether to take it or discard it forever.
The algorithm can take at most one item in total, and aims to maximize the expected value of the item taken.
we measure the performance of an algorithm via its \emph{competitive ratio}: an algorithm is \emph{$c$-competitive} (for $c \in [0, 1]$) if its expected value is at least $c$ fraction of what can be achieved by a \emph{prophet}, who knows the realizations of the values and always takes the highest value $\max_i \vali$, getting in expectation $\Ex{\max_i \vali}$.
Krengel, Sucheston, and Garling~\citep{KS77,KS78} showed that $1/2$ is the optimal competitive ratio in this so-called single-item setting.

Beyond interest of their own as canonical optimal stopping problems, prophet inequalities also have strong connections with online mechanism design and posted price mechanisms~\citep{HKS07,CHMS10}. 
Many variants beyond the single-item case have been studied extensively (e.g., \citep{KW12,Alaei14,FGL14,FSZ16,Rub16,RS17,DKL20,DFKL20,ehsani2018prophet,EFGT20,hill1982comparisons,correa2017posted,peng2022order,alaei2012online,GW19,correa2023constant}). 
One of the significant extensions is \emph{matroid prophet inequalities} discovered by~\citet{KW12}, where 
the set of items taken is subject to a matroid feasibility constraint.
The single-item setting is the special case when the matroid is the 1-uniform matroid.
\citet{KW12} presented a $1/2$-competitive algorithm that works for any matroid, which matches the best possible ratio even in the single-item case.

Many existing works on prophet inequalities assumed complete knowledge of distributions, which may be unrealistic both in practice and for some applications such as prior-independent mechanism design~\citep{HR09,DRY10, azar2019prior}. It is natural to ask whether similar performance can be achieved with only limited access to distributions (e.g., historical data points). This question was first formulated as \emph{prophet inequalities with limited information} (specifically, \emph{sample-based} prophet inequalities) by~\citet{AKW14}, where the online algorithm only has access to a limited number of samples from each distribution $\disti$. Such a sample-based paradigm is closer to practice and naturally provides some robustness against distribution shifts. 

More specifically, there are two particularly interesting facets of the problem:
\begin{enumerate}[label={\arabic*.}]
     \item \emph{What is the best competitive ratio achievable with a single/constant number of samples?}
    \item \emph{What is the least number of samples needed to achieve a constant competitive ratio?}
\end{enumerate}
   
The first facet was studied by~\citet{AKW14}. They showed that a single sample from each distribution suffices for constant-factor competitive prophet inequalities for a number of feasibility constraints.
Since then, there have been many subsequent works on different settings such as single-item~\citep{RWW20}, identical distributions~\citep{CDFS19,KNR20,CDFSZ21,CCES23}, non-adversarial arrivals~\citep{CCES20,CZ23}, and various other feasibility constraints~\citep{KNR22,CDFFLLPPR22}. 
While some of these works studied sample-based prophet inequalities for specific classes of matroids, such as graphic or transversal matroids, not much was known for general matroids.
Notably, \citet{AKW14} design single-sample prophet inequalities via a black-box reduction to a large subclass of algorithms for \emph{secretary problems}\footnote{The secretary problem~\citep{Dyn63} is another classical model in online decision making that is closely related to prophet inequalities, where values are chosen adversarially rather than sampled from distributions, but items arrive in a uniformly random order.}. Based on the reduction, they obtained a single-sample $1/O(\log \log \rank)$-competitive prophet inequality for general matroids from the state-of-the-art matroid secretary algorithm~\citep{Lac14,FSZ14} with the same ratio.

It turns out that the sample-based prophet inequalities and the secretary problem are deeply connected. In fact, solving either facets completely requires resolving the famous matroid secretary conjecture by~\citet{BIK07}. For example, \citet{CDFFLLPPR22} discovered a partial reverse to the aforementioned reduction of \citet{AKW14}.
Very recently, \citet{Wen23} obtained a black-box reduction from the secretary problem to single-sample prophet inequalities (which also extends to the constant-sample prophet inequalities).
As a consequence, simultaneously achieving constant competitive ratio and constant number of samples implies a constant-factor competitive algorithm for matroid secretary problem, which would resolve the matroid secretary conjecture.

In this work, we make progress on the second facet of the problem. We show that $O_{\varepsilon}(\log^4 n)$ samples suffice for $(1 / 4 - \varepsilon)$-competitive prophet inequalities for general matroids. Before our work, with even $o(n)$ samples, no constant-factor competitive prophet inequalities for general matroids were known. This is due to the limitations of previous techniques:
\begin{itemize}
    \item In the original algorithm of~\citet{KW12}, upon the arrival of each item, its value is compared against a threshold. Crucially, this threshold is the expectation of a random variable that depends on the items currently taken. 
    \begin{itemize}
        \item There is the \emph{lack-of-concentration issue} with this strategy: Since the values of the items are unbounded, there is no concentration with any bounded number of samples. This prevents us from estimating the expectation threshold to any non-trivial accuracy.
        \item Even if the values are bounded and we have concentration, there is still the \emph{adaptivity issue}: The random variable for the item $i$ depends on whether items $1, 2, \dots, i - 1$ are accepted and, therefore, depends on previous estimates. Even when there is concentration, one still cannot union bound over such adaptive estimations. 
    \end{itemize}
\item The approach of~\citet{FSZ16} reduces prophet inequalities to online contention resolution schemes (OCRSs). The reduction also needs to set similar thresholds.
    \begin{itemize}
        \item It partially gets around the \emph{lack-of-concentration issue} by directly using the realization of some random variable as the threshold (rather than the expectation). To generate one such realization, they need one sample from each of the distributions $\dist_1,\dist_2,\dots,\dist_n$. Due to the requirement of OCRSs, they must use independent realizations for the $n$ items, this already requires $\Omega(n)$ samples. 
        \item Although there is no adaptivity in the reduction, their OCRSs also require full knowledge of the distributions. If one tries to implement the OCRSs with samples, a similar \emph{adaptivity issue} shows up. 
    \end{itemize}
\end{itemize}

We overcome these two issues with new techniques. Our first technical contribution is a quantile-based strategy for setting thresholds (in the reduction to OCRSs) and its analysis. Roughly speaking, we use the \emph{median} of some random variables as thresholds, which is reminiscent of the classical \citet{SC84} result for the rank-1 case. With a few samples, the estimate of the median is guaranteed to be a $(1/2 - \varepsilon, 1/2 + \varepsilon)$ quantile for some $\varepsilon > 0$ with high probability. This overcomes the \emph{lack-of-concentration issue}. The analysis of this strategy is highly nontrivial and involves a weighted generalization of matroid strong basis exchange lemma from submodular optimization literature. To the best of our knowledge, it has not been applied to matroid prophet inequalities before. More details are discussed in \Cref{sec:prophet-overview}. 

Our second contribution is a sample-based matroid OCRS overcoming the \emph{adaptivity issue}, which is an interesting result on its own. To obtain such OCRS, we necessarily need to randomize the matroid OCRS of~\citet{FSZ16} in an ingenious way to handle adaptivity. The approach we use is specific to OCRSs and differs from the usual randomization methods in adaptive data analysis. More details are discussed in \Cref{sec:orcs}. In conclusion, we believe that this work brings interesting techniques to the study of matroid prophet inequalities with limited information. 

\subsection{Further Related Works}

\citet{FSZ16} introduce the concept of Online Contention Resolution Schemes (OCRSs) and present a $\frac 1 4$-selectable OCRS for matroids. Our sampled-based OCRS in Section~\ref{sec:orcs} is based on their algorithm. This algorithm is later improved by \citet{LS18} who obtain a $\frac 1 2$-selectable matroid OCRS by a clever reduction to the matroid prophet inequalities problem. OCRSs are also considered in other settings including matching \cite{EFGT20}, knapsack problem \cite{JMZ22}, and $k$-uniform matroids \cite{JMZ22, DW23}. Recent studies have expanded to investigate OCRSs with limited information. \citet{HPT22} show that there exists a \(\frac{1}{e}\)-selectable oblivious OCRS for rank-1 matroids. In contrast, for graphical and transversal matroids, they prove that no OCRS utilizing a constant number of samples can attain a constant selectability.  \citet{santiago2023simple} address a scenario in which the marginal active probability $\alloc_e$ is disclosed upon the arrival of element $e$. They provide a constant selectable random-order OCRS for graphical matroids.

Online decision making with limited information is attracting growing attention. In addition to the sample-based prophet inequalities discussed earlier, \citet{li2023prophet} explore  I.I.D. prophet inequalities with only  access to  quantile queries and \citet{nuti2023secretary} study maximizing the probability of selecting the largest number in the secretary problem with a single sample. There are some works on posted-price mechanisms that also imply prophet inequalities with polynomially many samples \cite{DFKL20,banihashem2024power}.

\section{Preliminaries}
Throughout the paper, we use bold letters $\vals, \allocs$ to denote vectors and $\vali, \alloci$ to denote their entries.

\paragraph{Prophet Inequalities.}
In the prophet inequality problem, we are given a ground set of items~$U=[n]$, a downward-closed family of feasible sets $\cF \subseteq 2^{U}$, and a distribution $\disti$ for each $i \in U$.
Items arrive in a fixed order $1,2,\dots,n$.\footnote{Our sample-based algorithm for matroid prophet inequality/OCRS actually works against \emph{almighty adversary}, who determines the arrival order adaptively with full knowledge of all realizations of randomness and the decisions made by the algorithm (see Remark~\ref{rem:almighty-adversary} for details). Nevertheless, we assume a fixed order (i.e., against \emph{offline adversary}) throughout the paper for ease of reading. \label{footnote:arrival-order}}
As each item arrives, a value $\vali$ independently drawn from~$\disti$ is revealed to the algorithm, and an irrevocable decision must be made whether to include the item in the output $A$, while keeping $A \in \cF$.

For $c \in [0,1]$, we say the algorithm induces a {\em $c$-competitive} prophet inequality for $\cF$ if, for any $\dist_1,\dist_2,\dots,\dist_n$,
$$\Ex{\sum_{i \in A} \vali} \ge c\Ex{\max_{S \in \cF} \sum_{i \in S} \vali}$$
where the expectation is taken with respect to the joint distribution of $\dist_1,\dist_2,\dots,\dist_n$ and internal randomness of the algorithm (including samples in the limited information setting).

In the limited information setting, instead of distributions $\dist_1,\dist_2,\dots,\dist_n$, the algorithm will take a number of samples from each $\disti$ as the offline input.
We say an algorithm uses $k$ samples if it takes $k$ samples from each $\disti$.

\paragraph{Online Contention Resolution Schemes (OCRSs)}
Given a ground set of elements~$U=[n]$ and a downward-closed family of feasible sets $\cF \subseteq 2^{U}$, we can define the polytope of $\cF$ as the convex hull of all characteristic vectors of feasible sets:
$$\polytope_\cF = \mathrm{conv}(\{\mathbf{1}_I \mid I \in \cF\}) \subseteq [0,1]^n.$$

An OCRS for $\polytope_\cF$ takes an vector $\allocs \in \polytope_\cF$ as input.
Let $R(\allocs) \subseteq U$ be a random set where each element $i \in U$ is in $R(\allocs)$ with probability $\alloci$ independently.
The algorithm sees the membership in~$R(\allocs)$ of the elements in~$U$ arriving in a fixed order $1,2,\dots,n$;\footref{footnote:arrival-order} when each element $i \in U$ arrives, if it is active (i.e., $i \in R(\allocs)$), the OCRS must decide irrevocably whether to include $i$ in its output~$A$ while keeping $A \in \cF$.

For $c \in [0,1]$, an OCRS is said to be \emph{$c$-selectable} for $\cF$ if, for any $\allocs \in \polytope_{\cF}$,
$$\Pr[i \in A \mid i \in R(\allocs)] \ge c \quad \forall i \in U$$
where the probability is measured with respect to $R(\allocs)$ and internal randomness of the OCRS (including samples in the limited information setting).

In the limited information setting, instead of $\allocs$, the algorithm will take a number of samples of $R(\allocs)$ as the offline input.
We say an OCRS uses $k$ samples if it takes $k$ samples of $R(\allocs)$.

\paragraph{Matroids}

A matroid $\matroid = (U,\indset)$ is defined by a ground set of elements~$U=[n]$ and a non-empty downward-closed family of {\em independent sets} $\cF \subseteq 2^{U}$ with the {\em exchange property}, i.e., for every $A,B \in \indset$ where $|A|>|B|$, there exists some $i \in A \setminus B$ such that $B \cup \{i\} \in \indset$.
Some notations for matroids are used throughout the paper:
\begin{itemize}
    \item The {\em rank} of a set $S \subseteq U$ is the size of the largest independent set contained in $S$: 
    $$\rank(S)=\max\{|I| \mid I \in \indset, I \subseteq S\}.$$
    \item The {\em span} of a set $S \subseteq U$ is the set of elements that is not independent from $S$: $$\Span(S)=\{i \in U \mid \rank(S) = \rank(S \cup \{i\})\}.$$
    \item An independent set $S \in \indset$ is called a \emph{basis} if it spans the ground set, i.e., $\Span(S)=U$.
    \item The {\em restriction} of $\matroid$ to a set $S \subseteq U$ is a matroid 
    $\matroid|_{S}=(S,\indset |_{S}) = (S, \{I \in \indset \mid I \subseteq S\})$.
    \item The {\em contraction} of $\matroid$ by a set $S \subseteq U$ is a matroid $\matroid / S = (U \setminus S, \indset')$ where
    $$\indset'=\{I \in \indset |_{U \setminus S} \mid \rank(I) + \rank(S) = \rank(I \cup S)\}.$$
\end{itemize}
The polytope of a matroid $\matroid$ can be represented as $$\polytope_{\matroid} = \left\{\allocs \in \mathbb [0, 1]^{n} ~\middle|~ \sum_{i \in S} \alloci \leq \rank(S), \forall S \subseteq U\right\}.$$

\section{Matroid Prophet Inequalities from Samples}
\label{sec:prophet}
In this section, we reduce the matroid prophet inequality to matroid OCRS using only $O_{\varepsilon}(\log n)$ samples. Together with our $O_{\varepsilon}(\log^4 n)$-sample matroid OCRS, which we will present in \Cref{sec:orcs}, it proves the following theorem.

\begin{restatable}{theorem}{prophet} \label{thm:matroid_prophet}
With $O_\varepsilon(\log^4 n)$ samples, there is a $(\frac{1}{4}-\varepsilon)$-competitive prophet inequality for general matroids of size $n$ and any $\varepsilon>0$.
\end{restatable}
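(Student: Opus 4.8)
The plan is to prove \Cref{thm:matroid_prophet} by composing the two modular ingredients announced in the introduction: the quantile-based reduction from matroid prophet inequalities to matroid OCRSs developed in this section (using $O_{\varepsilon'}(\log n)$ samples), and the sample-based $(\tfrac14-\varepsilon')$-selectable matroid OCRS of \Cref{sec:orcs} (using $O_{\varepsilon'}(\log^4 n)$ samples). I would fix a small parameter $\varepsilon'$, to be tuned at the end, run the reduction so that it produces a prophet inequality whose competitive ratio is, up to a multiplicative $(1-O(\varepsilon'))$ loss, the selectability of whatever OCRS it is handed, and then feed it the OCRS of \Cref{sec:orcs} with selectability $c=\tfrac14-\varepsilon'$. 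Choosing $\varepsilon'$ a suitable constant fraction of $\varepsilon$ turns ratio $(1-O(\varepsilon'))\cdot(\tfrac14-\varepsilon')$ into $\tfrac14-\varepsilon$. The samples drawn per distribution are those used by the reduction to form the thresholds plus those used to simulate and run the OCRS on the threshold-clearing set; these two budgets genuinely add (the reduction feeds fresh sign patterns ``$\vali \ge \threshi$'' to the OCRS module rather than sharing randomness), giving a total of $O_{\varepsilon'}(\log n) + O_{\varepsilon'}(\log^4 n) = O_\varepsilon(\log^4 n)$.

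For the reduction itself I would follow the FSZ template of \citet{FSZ16} but replace their expectation-valued thresholds by quantile-valued ones. The target is a vector $\allocs \in \polytope_\matroid$ together with thresholds $\threshi$ such that (i) element $i$ is ``active'' precisely when $\vali \ge \threshi$, so the active set is distributed as $R(\allocs)$ and can be processed by a $c$-selectable OCRS, and (ii) the expected value collected, which is at least $c\sum_i \alloci\,\Ex{\vali \mid \vali \ge \threshi}$, compares to $\Ex{\max_{S\in\cF}\sum_{i\in S}\vali}$ up to the factor $c$. The choice of $\threshi$ as (an empirical estimate of) the \emph{median} of a suitable bottleneck-type random variable, in the spirit of the classical \citet{SC84} rank-$1$ rule, is exactly what makes this implementable from samples: an empirical median over $O_{\varepsilon'}(\log n)$ samples is, with high probability, a $(\tfrac12-\varepsilon',\tfrac12+\varepsilon')$-quantile no matter how heavy-tailed the value distribution is, so no concentration of the unbounded $\vali$ is needed; one then pays only a $(1\pm O(\varepsilon'))$ factor for the quantile being approximate, and absorbs the high-probability failure event into the $\varepsilon'$ loss by discarding its contribution. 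The step that makes this comparison work — bounding the expected max-weight basis of the matroid restricted to the threshold-clearing elements against $\Ex{\max_{S\in\cF}\sum_{i\in S}\vali}$ once the surviving elements are scaled to a point of $\polytope_\matroid$ — is where I would invoke a weighted generalization of the matroid strong basis exchange lemma.

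The main obstacle, as flagged in the introduction, is not the composition bookkeeping above but the two ingredients being composed. On the reduction side, it is getting the quantile-based argument to lose only a $(1-\varepsilon)$ factor: the strong-basis-exchange comparison is delicate already with exact medians, and it must be made robust to $\threshi$ being only an approximate median. On the OCRS side, it is making the matroid OCRS tolerate the adaptivity introduced when its per-element decisions depend on sample-based estimates, which a naive union bound cannot handle and which forces the careful randomization of \Cref{sec:orcs}. Both of these are carried out in \Cref{sec:prophet} and \Cref{sec:orcs} respectively; what remains for the proof of \Cref{thm:matroid_prophet} is to state the two interface statements precisely — a reduction lemma (``a $c$-selectable $k$-sample matroid OCRS yields an $(c-O(\varepsilon'))$-competitive $(k+O_{\varepsilon'}(\log n))$-sample matroid prophet inequality'') and the OCRS guarantee from \Cref{sec:orcs} — and to verify that the error terms and sample counts combine as claimed.
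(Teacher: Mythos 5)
Your proposal matches the paper's proof of \Cref{thm:matroid_prophet}: the paper likewise proves it by conditioning on the $(1-\varepsilon)$-probability event that the quantile thresholds are good (\Cref{lemma:threshold_concentration}), checking that the induced activation vector lies in $\polytope_\matroid$ (\Cref{lemma:ocrs_validity}), feeding the activations to the $(\tfrac14-\varepsilon)$-selectable sample-based OCRS of \Cref{thm:matroid_ocrs}, and invoking \Cref{lemma:prophet_competitive_ratio} to compare the expected active weight to $\Ex{\opt}$, with the $O_\varepsilon(\log n)$ and $O_\varepsilon(\log^4 n)$ sample budgets adding. The one small discrepancy is that you describe activation as the deterministic event $\vali\ge\threshi$ with $\threshi$ a single median, whereas the paper uses several quantile thresholds $T_i^{(k)}$ and a \emph{randomized} activation rule (active with probability $p_k$ when $\vali$ falls in the $k$-th band) precisely to keep $\allocs$ inside $\polytope_\matroid$ and to avoid an extra constant-factor loss — but since you defer the reduction's internals to \Cref{subsec:prophet}, the composition argument itself is the same.
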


\subsection{Overview of Our Techniques} \label{sec:prophet-overview}

Before getting into our matroid prophet inequalities, let us first sketch the main idea and comparison with prior works. 

\paragraph*{Notations.} For simplicity, we now define a minimum set of notations that we will only use in this subsection. In the following, we assume that the items arrive in order $1, 2, \dots, n$ and $A_i$ is the set of items accepted by our prophet inequalities among the first $i$ items. 

For subsets $S,A \subseteq [n]$, we will use the random variable $$\OPT^{\vals{}}(S, A) \eqdef \max\{\val(I) \mid I \subseteq S, I \cup A \in \mathcal{I}\}$$ to denote the maximum-weight subset $I \subseteq S$, such that, $I \cup A$ is an independent set. For any independent set $I$, we define $$\tau_i^{\vals{}}(I) \eqdef \min\{v_j \mid j \in I, I - j + i \in \indset\}$$ to be the minimum value among the elements $j \in I$ that $i$ can exchange with. The existence of $j$ is guaranteed by the exchange property of matroids.

\paragraph*{Approaches in prior works.} Most matroid prophet inequalities follow a very general threshold-based framework:

\begin{itemize}
    \item Before the arrival of an item $i$, select a threshold $T_i$. If the value of $i$ exceeds the threshold, we call it an active item. 
    \item Then either accepts all active items whenever possible, or performs contention resolution over the active items.
\end{itemize}

Therefore, the key to designing matroid prophet inequalities is choosing the appropriate thresholds $\{T_i\}_{i \in [n]}$. Below, we list several popular choices. and explain why they break down in the sampled-based setting:

\begin{itemize}
    \item (\textit{Expectation-based.}) In the optimal matroid prophet inequalities by \citet{KW12}, $T_i$ is set to $\frac{1}{2} \E[\val]{\tau^{\vals{}}_i(\OPT^{\vals{}}([n] \setminus A_{i - 1}, A_{i - 1}))}$. After their arrivals, all active items are accepted.
    
    Such expectation-based thresholds $T_i$ are ubiquitous in the prophet inequalities and posted-price mechanisms literature (e.g. \citep{FGL14, GW19}). However, in sample-based setting, consider an item which has a large value $n^{10}$ with probability $1 / n$ and otherwise has a value of $0$. With only $O(\poly \log n)$ samples, one cannot hope to see it realized with the large value, and therefore cannot estimate the expectation. \\
    
    \item (\textit{Fresh-samples-based.}) In the reduction to OCRS by \citet{FSZ16}, they draw a fresh sample $\vals{}_{-i}' \sim \dist_{-i}$ for each arriving item $i$. The threshold $T_i$ is set to $\tau_i^{\vals{}'}(\OPT^{\vals{}'})$, where $\OPT^{\vals{}'} = \OPT^{\vals{}'}([n], \emptyset)$ is the maximum-weight independent set with respect to $\vals{}'$. Afterwards, the algorithm performs online contention resolution among those active items. 

    This approach is later used in prophet inequalities for $k$-uniform matroids~\citep{JMZ22, DW23} and bipartite matching~\citep{EFGT22}. By definition, each threshold in $\{T_i\}_{i \in [n]}$ requires one sample of the entire valuation vector $\vals$. In total, the reduction needs $\Omega(n)$ samples. \\

    \item (\textit{Single-sample-based.}) Naturally, one can try to instantiate the same fresh-samples-based thresholds with only a single sample $\vals{}'$ shared among all items. This idea was explored by \citet{Shaddin19} in matroid secretary, which is a slightly different setting.
    
    The problem is that OCRS requires the events $E_i$ that item $i$ is active to be independent for different $i \in [n]$. Using a single sample introduces positive correlations between these events. \citet{Shaddin19} shows that such positive correlated items still admit a constant-selectable CRS, but a constant-selectable OCRS for these items will imply the resolution of matroid secretary conjecture.
\end{itemize}

\paragraph{Our Approach. (Quantile-based.)} Historically, for single-item prophet inequalities, the seminal result by \citet{SC84} shows that the median of the maximum value distribution gives the optimal $2$-approximation. Such quantile-based thresholds are extended to the $k$-uniform matroids by \citet{chawla2020non}. Naturally, one might wonder: Is such a choice still applicable to general matroids? To the best of our knowledge, this question still remains unexplored (no matter whether you allow contention resolution at the end or not). %

In the sample-based setting, because previous thresholds break down, we are forced to understand quantile-based thresholds. Recall that $\OPT^{\vals{}} = \OPT^{\vals{}}([n], \emptyset)$ is the maximum-weight independent set with respect to random valuation $\vals$. In this work, we essentially set the threshold $T_i$ to be the (approximate) median of $\tau^{\vals{}}_i(\OPT^{\vals{}})$. (For simplicity, let us assume all $T_i$'s are the exact medians throughout this overview.) Then apply contention resolution. At first glance, this expression may look similar to the fresh-samples-based thresholds of \citet{FSZ16}. However, note that they are using \emph{the realization} of a fresh sample $\vals{}'$ for each threshold $T_i$. But here we estimate \emph{the medians} used to set all $T_i$'s using the same set of $O_{\varepsilon}(\log n)$ samples.

\paragraph{Main Difficulty.} To show that such thresholds, together with OCRS, provide a constant approximation, one needs to lower bound the total weight of active elements. That is, we wish to show that the contribution from nonactive items, $\sum_{i \in [n]} \vali \cdot \mathbbm{1}[\vali < T_i]$, is at most that from active items $\sum_{i \in [n]} \vali \cdot \mathbbm{1}[\vali \ge T_i]$. This turns out to be a non-trivial task. 

For example, let us examine the case of $k$-uniform matroids. In this case, $\tau^{\vals{}}_i(\OPT^{\vals{}})$ is always equal to the $k$-th largest value among all the items, denoted by $\val[k]$. Let $T'$ be the median of this value $\val[k]$. We first set a deterministic benchmark, $k \cdot T'$:
\begin{itemize}
    \item With probability $1/2$, we know that $\val[k] \ge T'$ and that the contribution of active elements $\sum_{i \in [n]} \vali \cdot \mathbbm{1}[\vali \ge T_i]$ is at least $k \cdot T'$ in this case. There is no contribution from nonactive items.
    \item With the other $1/2$ probability, we know that $\val[k] < T'$ and the contribution of nonactive items $\sum_{i \in [n]} \vali \cdot \mathbbm{1}[\vali < T_i]$ can be at most $k \cdot T'$ in this case. %
\end{itemize}

Thus, we know that $\sum_{i \in [n]} \vali \cdot \mathbbm{1}[\vali \ge T_i] \geq k \cdot T' / 2 \geq \sum_{i \in [n]} \vali \cdot \mathbbm{1}[\vali < T_i].$ We note that it is crucial that we are using a global argument: It first ``gathers'' the contribution of those elements $i$ with $\vali \ge T'$, comparing it against a benchmark $k \cdot T'$. Then it ``gathers'' all the contribution of those elements with $\vali < T'$ and upper bounds it with the benchmark. 

We note that such a global argument is necessary in a certain sense. Consider the following example:  We add a new element that is deterministically $T' - \varepsilon$ to the $k$-uniform matroid. It is always below the threshold $T'$, but is in the optimal solution with probability $1/2$. This results in a non-negligible contribution of $(T' - \varepsilon) / 2$. Since it is never active, we are forced to ``gather'' the contribution of (possibly many) other active elements to compare with it. 

Now, generalizing this argument to an arbitrary matroid, it is a priori not clear what the deterministic benchmark should be. We show that the weight of the optimal independent set w.r.t. weights $\{T_i\}_{i \in [n]}$, i.e. $\OPT^T$ is a good benchmark. To prove that $\sum_{i \in [n]} \vali \cdot \mathbbm{1}[\vali \ge T_i] \geq \OPT^T / 2 \geq \sum_{i \in [n]} \vali \cdot \mathbbm{1}[\vali < T_i]$, we need a weighted generalization of the strong basis exchange property (Lemma~\ref{lemma:monotone_mapping}), which was used in submodular maximization~\citep{BFG19} and has not been applied to prophet inequalities before. 

Finally, to avoid losing competitive ratio in this reduction, we slightly generalize this approach and set multiple thresholds using different quantiles.

\subsection{Reduction from Matroid Prophet Inequality to OCRS with Samples}

\label{subsec:prophet}

Recall that we use $\matroid = (U, \indset)$ to denote a matroid (where $U=[n]$), and for each item $i \in U$, its value $\vali$ is drawn independently from $\disti$. For simplicity, we assume that all $\disti$'s are continuous distributions. For general distributions with point masses, a standard tie-breaking technique can be applied. For completeness, we include it in Appendix \ref{appendix:point_mass}. 

We first give some useful notations that we will use in this subsection. %
Given the realization of values $\vals$, we define $\OPT(\vals) \eqdef \max_{S \in \indset} \sum_{j \in S} \vali[j]$ as the value of the maximum independent set and $\OPT_{i}(\valsmi) \eqdef \max_{S \in \indset, i \notin S} \sum_{j \in S} \vali[j]$ as the optimal value when item $i$ is removed. We assume w.l.o.g. that every $\OPT_i(\valsmi)$ is a basis. We can always add dummy elements with zero value to ensure it. For simplicity of notation, we omit $\vals$ when there is no obfuscation and abuse $\opt, \opt_{i}$ to denote both optimal sets of items and their total values. 

Given $\valsmi$, define $\tau_i$ to be the smallest value among the items in $\OPT_{i}$ that $i$ can exchange with.
That is to say, $\tau_i(\valsmi) \eqdef \min\{\vali[j] \mid \OPT_{i} - j + i \in \indset \}$. 
Again, we omit $\valsmi$ when it is clear from the context.

The following observations regarding $\tau_i$ are useful in our later analysis.
\begin{observation}
\label{obs:is_threshold}
Fixing $\valsmi$, if $\vali > \tau_i(\valsmi)$, $i \in \opt(\vals)$; if $\vali < \tau_i(\valsmi)$, $i \notin \opt(\vals)$.\footnote{We ignore the equality case which happens with zero probability.}
\end{observation}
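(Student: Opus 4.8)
The plan is to prove \Cref{obs:is_threshold}, the statement that fixing $\valsmi$, item $i$ enters the optimal independent set $\opt(\vals)$ iff its value $\vali$ exceeds the exchange threshold $\tau_i(\valsmi)$. Since this is a basic matroid-optimization fact rather than a deep theorem, I would give a short direct argument via the matroid greedy algorithm together with a single exchange step; the ``hard part'' here is really just being careful about which matroid structure is invoked.

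First I would recall the setup: $\opt_i = \opt_i(\valsmi)$ is, by assumption, a basis of $\matroid$ that maximizes total weight among independent sets avoiding $i$, and $\tau_i = \min\{\vali[j] : j \in \opt_i,\ \opt_i - j + i \in \indset\}$, with the minimizing element call it $j^\star$. For the first implication, suppose $\vali > \tau_i = \vali[j^\star]$. Then $\opt_i - j^\star + i$ is independent and has strictly larger weight than $\opt_i$. I claim this forces $i \in \opt(\vals)$: run the greedy algorithm for the max-weight independent set with respect to $\vals$. Since all $\disti$ are continuous, ties occur with probability zero, so the greedy output $\opt(\vals)$ is the unique max-weight basis. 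If $i \notin \opt(\vals)$, then $\opt(\vals)$ is a max-weight basis of the contracted/restricted structure avoiding $i$, hence $\val(\opt(\vals)) \le \val(\opt_i) < \val(\opt_i - j^\star + i) \le \opt(\vals)$ — a contradiction. (One should spell out that $\opt_i$ being optimal among $i$-avoiding sets means $\val(\opt(\vals)) \le \opt_i$ whenever $i \notin \opt(\vals)$, which is immediate from the definition of $\opt_i$.)

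For the converse, suppose $\vali < \tau_i$; I want $i \notin \opt(\vals)$. Assume for contradiction that $i \in \opt(\vals)$. By the matroid exchange/augmentation property applied to the bases $\opt(\vals)$ and $\opt_i$ (both bases, so equal size), removing $i$ from $\opt(\vals)$ and then augmenting from $\opt_i$, there is some $j \in \opt_i \setminus \opt(\vals)$ with $\opt(\vals) - i + j \in \indset$. Actually the cleaner route is: consider the fundamental circuit of $i$ in $\opt_i$; any $j$ in that circuit satisfies $\opt_i - j + i \in \indset$, hence $\vali[j] \ge \tau_i > \vali$ for every such $j$. Now since $i \in \opt(\vals)$ and $\opt_i$ is a basis avoiding $i$, there exists $j$ in the fundamental circuit of $i$ w.r.t.\ $\opt_i$ with $\opt(\vals) - i + j \in \indset$; this is the standard basis-exchange fact. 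Then $\opt(\vals) - i + j$ is independent with weight $\val(\opt(\vals)) - \vali + \vali[j] > \val(\opt(\vals))$, contradicting optimality of $\opt(\vals)$.

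I expect the main (minor) obstacle to be stating the exchange step at exactly the right level of generality: one needs that if $i$ belongs to a basis $B = \opt(\vals)$ and $B' = \opt_i$ is a basis not containing $i$, then some element of the fundamental circuit $C(i, B')$ can be swapped into $B$ in place of $i$, i.e.\ $B - i + j \in \indset$ for some $j \in C(i,B') \setminus i$. This follows from strong basis exchange (or from applying the augmentation property to $B - i$ and $B'$), and every such $j$ has $\vali[j] \ge \tau_i$ by definition of $\tau_i$. Combined with $\vali < \tau_i$ this yields the strict improvement and the contradiction. The continuity assumption on the $\disti$'s is what lets us ignore equalities (as the footnote notes), so no case analysis for ties is needed.
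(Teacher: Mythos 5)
Your proof is correct and takes essentially the same approach as the paper: the forward direction is the same direct exchange argument (swap $i$ for the minimizer $j^\star$ in $\opt_i$ to get a strictly heavier basis, contradicting optimality if $i\notin\opt$), and the reverse direction rests on strong basis exchange — your fundamental-circuit phrasing is just a repackaging of that lemma, which is exactly what the paper invokes (its Lemma~\ref{lemma:strong_exchange}). The detours through the greedy algorithm and continuity in the forward direction aren't needed, but the core argument is identical.
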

\begin{proof}
Let $j$ be the item whose value equals $\tau_i$. In the case when $\vali > \tau_i$, consider exchanging $i$ with $j$ in $\OPT_{i}$. This results in a basis of value $\OPT_{i} - \tau_i + \vali > \OPT_{i}$. Hence $\OPT_{i}$ is not optimal, and consequently $i \in \OPT$. 

For the other direction, we prove by contradiction. Suppose $i \in \OPT$, by Lemma \ref{lemma:strong_exchange}, there is an element $j \in \OPT_i$ such that $\OPT - i + j$ and $\OPT_i - j + i$ are both bases. Since $\OPT - i + j \in \indset$, by optimality of $\OPT$, $v_j \leq v_i$. Moreover, since $\OPT_i - j + i \in \indset$, by the definition of $\tau_i$, we have $\tau_i \leq v_j$. Hence $\tau_i > v_j$ implies $i \not\in \OPT$.
\end{proof}

\begin{observation}
\label{obs:tau_property}
Suppose $i \in \OPT$, $j \notin \OPT$ and $\OPT- i + j \in \indset$. Then $\vali \ge \tau_j$.
\end{observation}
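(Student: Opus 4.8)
The plan is to reduce the claim to the single observation that, because $j \notin \OPT$, the set $\OPT$ is itself a feasible independent set avoiding $j$, and hence $\OPT_j$ must coincide with $\OPT$. First I would establish $\OPT_j = \OPT$ (as sets): by definition $\OPT_j(\valsmi[j])$ is a maximum-weight independent set among those not containing $j$, and since $j \notin \OPT$ the set $\OPT$ is one of the competitors in that maximization, so $\sum_{k \in \OPT_j}\vali[k] \ge \sum_{k \in \OPT}\vali[k]$. Conversely $\OPT$ is the global maximum-weight independent set, so the reverse inequality also holds, and the two weights are equal. Since all $\disti$ are continuous, with probability one no two distinct independent sets have the same weight, so the two maximizers coincide and $\OPT_j = \OPT$.

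Next I would simply unfold the definition of $\tau_j$. We have $\tau_j(\valsmi[j]) = \min\{\vali[k] \mid \OPT_j - k + j \in \indset\}$; substituting $\OPT_j = \OPT$ turns this into $\tau_j = \min\{\vali[k] \mid \OPT - k + j \in \indset\}$. The hypothesis $\OPT - i + j \in \indset$ says precisely that $i$ is one of the elements $k$ over which this minimum is taken, so $\tau_j \le \vali$, which is exactly the statement.

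The only point requiring a word of care — and it is minor — is the legitimacy of the identification $\OPT_j = \OPT$. It rests on the genericity already assumed in this subsection (continuous $\disti$, hence a.s.\ unique maximizers) together with the standing convention that $\OPT_j$ is padded to a basis with zero-value dummy elements; I would remark that this padding is harmless here, since dummy elements never affect whether $\OPT - k + j$ is independent for a genuine element $k$. There is no combinatorial obstacle: unlike Observation~\ref{obs:is_threshold}, which needed the strong exchange lemma, this proof uses only the bare definitions of $\OPT$, $\OPT_j$, and $\tau_j$.
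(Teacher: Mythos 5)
Your proof is correct and follows the same route as the paper: establish $\OPT_j = \OPT$ from $j \notin \OPT$, then observe that $\OPT - i + j \in \indset$ places $i$ among the candidates in the minimum defining $\tau_j$. The only difference is that you spell out the uniqueness-of-maximizer and padding justifications that the paper leaves implicit, which is harmless.
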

\begin{proof}
Since $j \not\in \OPT$, we have $\OPT_j = \OPT$. Together with the fact that $\OPT - i + j$ is also an independent set, we have $\tau_j \leq \vali$ (by the definition of $\tau_j$). 
\end{proof}

Our algorithm consists of two stages. In the first stage, we estimate the thresholds $\{T^{(k)}_i\}$ from $O(\log n)$ samples. In the second stage, we use an OCRS to select a subset of active elements, which is sampled by an activation rule defined with thresholds. 
We present the first stage in the rest of this section. Let $m = \lfloor \log_{1 + \varepsilon}(1/\varepsilon) \rfloor$.

\paragraph{Learning Thresholds.} Given $N$ i.i.d.\ samples $\vals^s \sim \times_{i} \dist_i$, where $s \in [N]$.
For each item $i$, let $T^{(k)}_i$ be the $\lceil\varepsilon(1+\varepsilon)^kN\rceil$-th smallest value in $\{\tau_i(\valsmi^s) \mid s \in [N]\}$ for $0\le k< m$ and $T^{(m)}_i = \infty$.

The next lemma states that the thresholds $\{T^{(k)}_i\}_{i \in U}$ approximate the functions $p^i(v) \eqdef \Prx{i \in \opt \mid v_i=v}$ within $\varepsilon$ error. The proof is a direct application of Chernoff bound and we defer it to Appendix~\ref{appendix:threshold_concentration}.
Let $p_k = \varepsilon(1+\varepsilon)^k -\varepsilon^2$ for $0 \le k < m$. Note by Observation \ref{obs:is_threshold}, we have $p^i(v) = \Pr[v > \tau_i]$ for any fixed $v$, and $\{T^{(k)}_i\}$ are fixed values (instead of random variables) here. 

\begin{lemma} \label{lemma:threshold_concentration}
For any $\varepsilon \in (0,1)$, with $N=O(\log(\frac{2nm}{\varepsilon}) \cdot \varepsilon^{-4})$ samples, $p^i(T^{(k)}_i) = \Pr[T^{(k)}_i > \tau_i] \in [p_k,p_k+2\varepsilon^2]$ for all $i \in U$ and $0 \le k < m$ with probability at least $1 - \varepsilon$. We refer to such thresholds as \emph{good thresholds}.
\end{lemma}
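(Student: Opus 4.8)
The plan is to recognize $p^i(T^{(k)}_i)$ as an order statistic of $N$ i.i.d.\ uniform random variables, and then conclude by a Chernoff/Hoeffding bound together with a union bound over all pairs $(i,k)$.

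First I would fix $i \in U$ and observe that the values $\tau_i(\valsmi^s)$, $s \in [N]$, are i.i.d.\ copies of $\tau_i = \tau_i(\valsmi)$, since $\tau_i$ depends only on $\valsmi$. Let $F_i$ denote the CDF of $\tau_i$. Because every $\disti$ is continuous and $\tau_i$ a.s.\ equals $\vali[j]$ for some (random) index $j$, the variable $\tau_i$ is atomless, so $F_i$ is continuous; hence by the probability integral transform $V^s_i \eqdef F_i(\tau_i(\valsmi^s))$ are i.i.d.\ $\mathrm{Unif}[0,1]$, and, $F_i$ being nondecreasing, it carries the $r$-th smallest of $\{\tau_i(\valsmi^s)\}_s$ to the $r$-th smallest of $\{V^s_i\}_s$. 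By Observation~\ref{obs:is_threshold}, $p^i(v) = \Pr[v > \tau_i] = F_i(v)$ for every fixed $v$, so $p^i(T^{(k)}_i) = F_i(T^{(k)}_i) = V_{i,(r)}$, the $r$-th order statistic of $\{V^s_i\}_s$, where $r = \lceil \varepsilon(1+\varepsilon)^k N \rceil$. Writing $q \eqdef \varepsilon(1+\varepsilon)^k$ — note $q \in [\varepsilon, 1)$ for $0 \le k < m = \lfloor \log_{1+\varepsilon}(1/\varepsilon)\rfloor$, and $p_k = q - \varepsilon^2$ — it remains to show $V_{i,(r)} \in [q - \varepsilon^2,\, q + \varepsilon^2]$ with high probability.

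For the lower tail, $V_{i,(r)} < q - \varepsilon^2$ forces at least $r \ge qN$ of the $V^s_i$ to fall below $q - \varepsilon^2$; that count is $\mathrm{Bin}(N, q - \varepsilon^2)$, i.e.\ a deviation of at least $\varepsilon^2 N$ above its mean $(q - \varepsilon^2)N$, which by Hoeffding's inequality has probability at most $e^{-2\varepsilon^4 N}$. Symmetrically, $V_{i,(r)} > q + \varepsilon^2$ forces fewer than $r \le qN + 1$ of the $V^s_i$ to fall below $q + \varepsilon^2$; that count is $\mathrm{Bin}(N, q + \varepsilon^2)$, a deviation of at least $\varepsilon^2 N$ below its mean $(q + \varepsilon^2)N$, again of probability at most $e^{-2\varepsilon^4 N}$ (and if $q + \varepsilon^2 > 1$ this bound is vacuous, since $F_i(T^{(k)}_i) \le 1$ always, so we may assume $q + \varepsilon^2 \le 1$). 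Union bounding these two events over all $i \in U$ and all $0 \le k < m$ gives total failure probability at most $2nm\, e^{-2\varepsilon^4 N}$, which is at most $\varepsilon$ once $N = \Theta(\varepsilon^{-4}\log(2nm/\varepsilon))$ — precisely the claimed sample size.

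I do not expect any step to be a genuine obstacle: this is the textbook empirical-quantile concentration argument. The only points needing care are the continuity of $F_i$ (so that the probability integral transform applies) and the off-by-one in the rounding $r = \lceil qN\rceil$, both of which are absorbed into the $\pm\varepsilon^2$ slack; the single conceptual move is the identification $p^i(T^{(k)}_i) = V_{i,(r)}$, after which the estimate is routine.
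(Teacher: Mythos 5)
Your proof is correct and takes essentially the same approach as the paper's. The paper defines for each sample $s$ the indicators $l^s_{i,k} = \indicator[G_i(\tau^s_i) < \varepsilon(1+\varepsilon)^k - \varepsilon^2]$ and $r^s_{i,k} = \indicator[G_i(\tau^s_i) < \varepsilon(1+\varepsilon)^k + \varepsilon^2]$ (where $G_i$ is the CDF of $\tau_i$) and applies a Chernoff bound to the sums $\sum_s l^s_{i,k}$ and $\sum_s r^s_{i,k}$, which is exactly your binomial-count argument in disguise — your $V^s_i = F_i(\tau_i(\valsmi^s))$ are their $G_i(\tau^s_i)$, and your two tail events are precisely $\{\sum_s l^s_{i,k} \ge \varepsilon(1+\varepsilon)^k N\}$ and $\{\sum_s r^s_{i,k} \le \varepsilon(1+\varepsilon)^k N\}$. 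You make the probability-integral-transform step and the atomlessness of $\tau_i$ explicit (the paper relies on these implicitly via the blanket assumption that the $\disti$ are continuous), and you phrase the conclusion as concentration of an empirical quantile rather than of an indicator sum, but the substance — same CDF identity $p^i(v) = F_i(v)$, same two one-sided deviation events, same Chernoff/Hoeffding rate $e^{-\Theta(\varepsilon^4 N)}$, same union bound over $nm$ pairs — is identical.
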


Once we have estimated these thresholds, we view them as fixed value rather than random variables (i.e. conditioning on the value of these thresholds). This is important for the independence required by OCRSs.

\paragraph{Activation Rule.} With these thresholds, we specify a rule of determining whether an item $i$ is active. If $v_i \in [T^{(k)}_i, T^{(k+1)}_i)$ for some $ 0\le k<m$, let $i$ be active with probability $p_k$; else $v_i \in [0, T^{(0)}_i)$, let $i$ be active with probability $0$.
Let $\alloci = \Prx{i \text{ is active}}$, where the probability space only involves randomness of $v_i \sim \dist_i$.

For any $\vals \sim \times_i \dist_i$, we can then apply this activation rule to each of the items to generate a set of active items. It is important that in this stage, we do not reuse the samples from the first stage. The reason is that there are correlations between the thresholds we learnt for different items, but OCRS requires each item to be active independently. Thus we treat the thresholds now as fixed values and use the real realization of $\vals$ to generate $R(\allocs)$. Observe that, in this way, each element is active independently.

\begin{lemma}
	\label{lemma:ocrs_validity}
	If $\{T^{(k)}_i\}$ are good thresholds, then $\allocs \in \polytope_{\matroid}$.
\end{lemma}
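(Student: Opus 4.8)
The goal is to show $\allocs \in \polytope_{\matroid}$, i.e., that for every $S \subseteq U$ we have $\sum_{i \in S} \alloci \le \rank(S)$. Since $\alloci = \Prx{i \text{ is active}}$ and the activation of each item depends only on its own value $v_i \sim \dist_i$, I would first rewrite $\alloci$ in a more convenient form. The key observation is that by Observation~\ref{obs:is_threshold}, the event $\{v_i > \tau_i(\valsmi)\}$ is exactly the event $\{i \in \opt(\vals)\}$, and the thresholds $T^{(k)}_i$ were chosen precisely so that $\Prx{T^{(k)}_i > \tau_i} = p^i(T^{(k)}_i) \in [p_k, p_k + 2\varepsilon^2]$ when they are good. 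So I would like to couple the activation indicator of $i$ with the membership indicator $\indicator[i \in \opt(\vals)]$ for a suitable independent draw of $\vals$.

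**Main step: a coupling argument.** The cleanest route is to construct, for each item $i$, a coupling between ``$i$ is active'' and ``$i \in \opt(\tilde\vals)$'' for a fresh value vector $\tilde\vals \sim \times_j \dist_j$, such that $i$ active implies $i \in \opt(\tilde\vals)$. Concretely: when $v_i \in [T^{(k)}_i, T^{(k+1)}_i)$, item $i$ is active with probability $p_k$; by goodness of thresholds, $p_k \le p^i(T^{(k)}_i) = \Prx[\valsmi]{T^{(k)}_i > \tau_i(\valsmi)}$, and since on the event $v_i \in [T^{(k)}_i, T^{(k+1)}_i)$ we have $v_i \ge T^{(k)}_i$, we get $\Prx[\valsmi]{v_i > \tau_i(\valsmi)} \ge \Prx[\valsmi]{T^{(k)}_i > \tau_i(\valsmi)} \ge p_k$. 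Thus conditioned on $v_i \in [T^{(k)}_i, T^{(k+1)}_i)$ one can couple the (probability-$p_k$) activation with a draw of $\valsmi \sim \dist_{-i}$ so that activation implies $v_i > \tau_i(\valsmi)$, hence (Observation~\ref{obs:is_threshold}) $i \in \opt(v_i, \valsmi)$. Taking $\tilde\vals = (v_i, \valsmi)$ with $v_i$ drawn from $\dist_i$ conditioned appropriately — actually, more carefully, I'd draw $\tilde v_i \sim \dist_i$ freshly and couple so that whenever $i$ is active, $\tilde v_i$ falls into the right range and exceeds $\tau_i(\valsmi)$ — gives a joint distribution in which $\{i \text{ active}\} \subseteq \{i \in \opt(\tilde\vals)\}$ and $\tilde\vals$ has the correct product marginal $\times_j \dist_j$.

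**Concluding.** Given such a coupling for a single vector $\tilde\vals$ (the same $\tilde\vals$ serving all $i \in S$ simultaneously — this is the point that needs care, since the couplings for different $i$ must be made jointly consistent with one product-distributed $\tilde\vals$; the natural way is to realize it as: draw $\tilde\vals \sim \times_j \dist_j$, then for each $i$ independently toss extra coins to decide activation in a way monotone-coupled to the event $\tilde v_i > \tau_i(\tilde\valsmi)$), we get
$$
\sum_{i \in S} \alloci = \sum_{i \in S} \Prx{i \text{ active}} \le \sum_{i \in S} \Prx{i \in \opt(\tilde\vals)} = \Ex{\,\bigl|\opt(\tilde\vals) \cap S\bigr|\,} \le \rank(S),
$$
where the last inequality holds deterministically since $\opt(\tilde\vals)$ is an independent set, so $\opt(\tilde\vals) \cap S$ is independent and has size at most $\rank(S)$. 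Since $S$ was arbitrary, $\allocs \in \polytope_{\matroid}$.

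**Expected main obstacle.** The delicate part is making the per-item couplings jointly consistent with a single product-distributed $\tilde\vals$, because $\tau_i(\tilde\valsmi)$ depends on all other coordinates of $\tilde\vals$, so I cannot couple the activation of $i$ to $\tilde v_i$ alone in isolation from the rest. The resolution is that activation of $i$ depends only on $v_i$ (its real value, independent of everything), and I only need the marginal statement $\Prx{i \text{ active}} \le \Prx{i \in \opt(\tilde\vals)}$ for each $i$ separately — the inequality $\sum_{i\in S}\alloci \le \Ex{|\opt(\tilde\vals)\cap S|}$ then follows just by linearity of expectation, with $\tilde\vals$ a single global draw and no need for the events $\{i \text{ active}\}$ and $\{i\in\opt(\tilde\vals)\}$ to be coupled across different $i$ at all. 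So the argument reduces to the clean per-item inequality $\alloci \le p^i(\text{something} \le v_i\text{'s range}) \le \Prx{i\in\opt}$, which follows directly from goodness of thresholds (Lemma~\ref{lemma:threshold_concentration}) and Observation~\ref{obs:is_threshold}; I should double-check the boundary case $v_i \in [0, T^{(0)}_i)$ (activation probability $0$, trivially fine) and the top range $[T^{(m-1)}_i, \infty)$ with $T^{(m)}_i = \infty$.
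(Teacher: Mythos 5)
Your proof is correct and, once you discard the coupling detour you yourself flag as unnecessary in the final paragraph, reduces to essentially the paper's argument: establish the pointwise dominance $\Prx{i \text{ active} \mid v_i = v} \le p^i(v)$ by the same case analysis (zero below $T^{(0)}_i$; for $v \in [T^{(k)}_i, T^{(k+1)}_i)$ use goodness to get $p_k \le p^i(T^{(k)}_i)$ and monotonicity of $p^i$ to get $p^i(T^{(k)}_i) \le p^i(v)$), integrate to get $\alloci \le \alloci^* := \Prx{i \in \opt}$, and conclude from $\allocs^* \in \polytope_\matroid$. The only cosmetic difference is that the paper invokes downward-closure of $\polytope_\matroid$ implicitly via $\allocs \le \allocs^*$, whereas you unfold it into the rank inequalities $\sum_{i\in S}\alloci \le \Ex{|\opt\cap S|} \le \rank(S)$ — the same step.
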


\begin{proof}
	Let $\alloci^*=\Prx{i \in \OPT}$ for every $i \in U$. Since $\OPT \in \indset$, we have $\allocs^* \in \polytope_{\matroid}$. Consider the conditional probability of $i$ being active when $\vali = v$.
	\begin{itemize}
	\item If $v<T^{(0)}_i$, $\Prx{i \text{ is active} \mid v_i=v} = 0 \le p^i(v)$. 
	\item If $v \in [T^{(k)}_i, T^{(k+1)}_i)$ for $0\le k <m$, we have 
	\[
	\Prx{i \text{ is active} \mid \vali=v} = p_k \le p^i(T^{(k)}_i) \le p^i(v),
	\]
	where the first inequality follows from the definition of $T^{(k)}_i$ and the second inequality follows from the monotonicity of the function $p^i$.
	\end{itemize}
	Therefore, for each $i\in U$,
	\[
	\alloci = \int_v \Prx{i \text{ is active} \mid \vali = v} \dd \dist_i(v) \le \int_v p^i(v) \dd \dist_i(v) = \alloci^*,
	\]
	that concludes the proof of the lemma.
\end{proof}

Furthermore, the expected values of active items are large compared to the expected optimum. This is the most technical lemma and the proof is built on a monotone basis exchange lemma (Lemma~\ref{lemma:monotone_mapping}) for matroids from \cite{BFG19}.

\begin{lemma}
	\label{lemma:prophet_competitive_ratio}
	If $\{T^{(k)}_i\}$ are good thresholds, then $\E{\sum_{i \in R(\allocs)} v_i} \ge (1-O(\varepsilon)) \cdot \E{\opt}$.
\end{lemma}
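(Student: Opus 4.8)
The plan is to compare $\E{\sum_{i \in R(\allocs)} v_i}$ and $\E{\OPT}$ both against a deterministic benchmark built from the thresholds, mimicking the $k$-uniform warm-up in the overview but using the optimal independent set with respect to the threshold weights as the benchmark. Concretely, I would proceed in three stages. First, I set up the benchmark: let $w_i$ be an appropriate "quantile-weighted threshold" attached to item $i$ --- roughly $w_i = \sum_{k} (p_k - p_{k-1}) T^{(k)}_i$ (telescoping the contribution of the levels up to $T^{(k)}_i$), which is exactly the quantity such that for a single item $\E{v_i \mathbbm 1[i \text{ active}]}$ relates cleanly to $w_i$ --- and let $\OPT^w = \max_{S \in \indset} \sum_{i \in S} w_i$. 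The goal is then the chain
\[
\E{\sum_{i \in R(\allocs)} v_i} \;\ge\; (1 - O(\varepsilon))\,\OPT^w \;\ge\; (1 - O(\varepsilon))\,\E{\OPT}.
\]

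For the \emph{second inequality}, $\OPT^w \ge (1-O(\varepsilon))\E{\OPT}$: I would show that for each realization $\vals$, the set $\OPT(\vals)$ has threshold-weight $\sum_{i \in \OPT(\vals)} w_i$ that is (up to the discretization slack controlled by the geometric grid with ratio $1+\varepsilon$ and the lowest level $\varepsilon$) at least a $(1-O(\varepsilon))$ fraction of $\E{v_i \mid i \in \OPT, \dots}$-type contributions; more carefully, one integrates: $\E{\OPT} = \sum_i \E{v_i \mathbbm 1[i \in \OPT]} = \sum_i \int_0^\infty \Pr[v_i > t,\ i \in \OPT]\,dt \le \sum_i \int_0^\infty \Pr[v_i > \max(t,\tau_i)]\,dt$, and Observation~\ref{obs:is_threshold} lets me replace $\mathbbm 1[i \in \OPT]$ by $\mathbbm 1[v_i > \tau_i]$. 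Then $p^i(T^{(k)}_i) \in [p_k, p_k + 2\varepsilon^2]$ (good thresholds, Lemma~\ref{lemma:threshold_concentration}) means $T^{(k)}_i$ is essentially the $(1-p_k)$-quantile of $\tau_i$, so the $w_i$ defined from the grid $\{p_k\}$ approximate $\E{v_i \mathbbm 1[i \in \OPT]}$ from below within $(1-O(\varepsilon))$. Since this holds termwise, and since for \emph{each} realization $\OPT(\vals) \in \indset$, we get $\OPT^w \ge \sum_i w_i \cdot \mathbbm 1[i \text{ "typically" in } \OPT]$; but to make this a statement about a single fixed set I would instead use that $\E{\OPT} \le \sum_i \E{v_i\mathbbm 1[v_i > \tau_i]}$ and bound each term by $w_i / (1-O(\varepsilon))$ summed over a single maximum-weight set --- this is where I must be careful, and I expect to route it through the fact that $\OPT^w$ dominates $\sum_i w_i x_i^\star$ fractionally where $x_i^\star = \Pr[i \in \OPT] \in \polytope_\matroid$.

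For the \emph{first inequality}, $\E{\sum_{i \in R(\allocs)} v_i} \ge (1-O(\varepsilon))\OPT^w$, I need to lower-bound the total value of active elements. Here is where the main obstacle lies and where Lemma~\ref{lemma:monotone_mapping} (the monotone/weighted strong basis exchange lemma from \cite{BFG19}) enters. The idea, generalizing the global argument from the overview: write $v_i = v_i\mathbbm 1[v_i \ge T_i^{(\cdot)}] + v_i \mathbbm 1[v_i < T_i^{(\cdot)}]$ at each threshold level and argue that the "below-threshold" mass that $\OPT^w$ could be collecting is compensated by "above-threshold" active mass elsewhere. Fix a basis $B^w$ achieving $\OPT^w$. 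For a realization $\vals$, I want a weight-monotone injection from $B^w$ into the active set (or into $\OPT(\vals)$) that maps each $i \in B^w$ to an element $\sigma(i)$ with $v_{\sigma(i)} \ge$ (the relevant threshold of $i$); Lemma~\ref{lemma:monotone_mapping} should provide exactly such a structure-preserving correspondence between two bases ordered by weight. Summing $v_{\sigma(i)} \ge T^{(\cdot)}_i$ over the image reconstructs a $(1-O(\varepsilon))$ fraction of $\OPT^w$ from active elements, after accounting (via the $p_k$-grid, exactly as in the $k$-uniform case) for the event probabilities. The delicate points I anticipate: (i) matching the random activation probabilities $p_k$ to the deterministic threshold levels so that the expectation over $\vals$ of the reconstructed active value is $(1-O(\varepsilon))\OPT^w$ and not merely $\frac12 \OPT^w$ --- this is why multiple quantiles $T^{(0)},\dots,T^{(m-1)}$ rather than a single median are used; (ii) ensuring the monotone map from Lemma~\ref{lemma:monotone_mapping} is compatible with the per-element threshold structure (the map must send a $B^w$-element $i$ to something whose value clears $i$'s threshold, which is exactly the content of Observation~\ref{obs:tau_property} composed with the exchange structure); and (iii) handling the lowest band $[0, T^{(0)}_i)$ and the top band $[\ ,\infty)$ cleanly, which contributes the residual $O(\varepsilon)$ loss from the geometric grid. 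I expect step (ii) --- threading the weighted basis-exchange lemma through the level structure of the thresholds --- to be the technical heart of the argument.
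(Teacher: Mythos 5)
Your chain $\E{\sum_{i \in R(\allocs)} v_i}\ge(1-O(\varepsilon))\OPT^w\ge(1-O(\varepsilon))\E{\opt}$ fails at the second inequality: the benchmark $\OPT^w$ is built from finitely many quantiles of the $\tau_i$'s and is therefore a bounded quantity that cannot dominate $\E{\opt}$ when the value distributions have heavy tails. Concretely, take $v_i \in \{0,M\}$ each with probability $1/2$ and $\tau_i \equiv 1$: every $T_i^{(k)}$ is $\le 1$, so $w_i=O(1)$, yet $\E{v_i \I{i\in\opt}}=M/2$ can be made arbitrarily large. This is exactly the lack-of-concentration obstacle the paper highlights for expectation-based thresholds; no deterministic quantile-weighted benchmark can serve as an intermediate upper bound on $\E{\opt}$.

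The paper routes around this with a different decomposition. It splits $\E{\opt}$ into its above-$T^{(0)}$ and below-$T^{(0)}$ parts. The above-$T^{(0)}$ part --- which carries the unbounded tail --- is compared directly and \emph{termwise} to $\E{\sum_{i\in R(\allocs)}v_i}$ with no benchmark at all: for $v\in[T_i^{(k)},T_i^{(k+1)})$, the good-threshold property together with the activation rule gives $p^i(v)\le(1+O(\varepsilon))\Prx{i\in R(\allocs)\mid v_i=v}$, and integrating over $v$ yields $\E{\sum_{i\in\OPT}v_i\I{v_i\ge T_i^{(0)}}}\le(1+O(\varepsilon))\E{\sum_{i\in R(\allocs)}v_i}$. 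Only the below-$T^{(0)}$ part is handled via a benchmark, and that benchmark is $\sum_{j\in W}T_j^{(0)}$ for $W=\argmax_{S\in\indset}\sum_{i\in S}T_i^{(0)}$ --- a \emph{single-level} threshold basis, not your multi-level $w$. The monotone basis exchange Lemma~\ref{lemma:monotone_mapping} is applied once, between $W$ and $\OPT(\vals)$, to produce a bijection $f$ with $v_{f(j)}\ge\tau_j$ and $T_{f(j)}^{(0)}\le T_j^{(0)}$, which pinches $\E{\sum_{i\in\OPT}v_i\I{v_i<T_i^{(0)}}}\le(\varepsilon+\varepsilon^2)\sum_{j\in W}T_j^{(0)}\le O(\varepsilon)\E{\opt}$ (Lemma~\ref{lemma:prophet_small_values}). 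So your instinct about using the weighted basis exchange to match a threshold-optimal basis against $\OPT(\vals)$ is correct, but it should happen only at level $0$ rather than being threaded through all levels, and the potentially unbounded contribution above the threshold must flow into the active mass directly, not through a deterministic benchmark.
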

\begin{proof}
 We defer its proof to \Cref{subsec:bases_exchange}. 
\end{proof}

Now, we have all the components to prove our main theorem.
\prophet*

\begin{proof}
We first use $O_{\varepsilon}(\log n)$ samples to learn the thresholds $\{T^{(k)}_i\}$. By Lemma~\ref{lemma:threshold_concentration} and Lemma~\ref{lemma:ocrs_validity}, the thresholds are good with at least $(1-\varepsilon)$ probability and the corresponding activation rule results in a valid $\allocs \in \polytope_{\matroid}$.
We then use $O_\varepsilon(\log^4 n)$ samples to learn the corresponding $(\frac{1}{4}-\varepsilon)$-selectable OCRS by Theorem~\ref{thm:matroid_ocrs}.
Finally, in the real run of the prophet inequality, upon the arrival of an item $i$, we use the activation rule constructed above to decide whether $i$ active and feed it to the OCRS.
	
Let $\ALG$ be the random set of elements selected by the algorithm. Let $\event$ be the event that $\{T^{(k)}_i\}$ are good thresholds. 
\begin{align*}
\E{\sum_{i \in \ALG} \vali} \ge{} & \Prx{\event} \cdot \operatorname{\mathbf E}\left[ \sum_{i\in \ALG} \vali \;\middle|\; \event \right] \\
\ge{} & (1-\varepsilon) \cdot \left(\frac{1}{4} - \varepsilon\right) \operatorname{\mathbf E}\left[ \sum_{i\in R(\allocs)}\vali \;\middle|\; \event \right] & \tag{Lemma~\ref{lemma:threshold_concentration} and Theorem~\ref{thm:matroid_ocrs}} \\
\ge{} & (1-\varepsilon) \cdot \left(\frac{1}{4} - \varepsilon \right) \cdot \left(1-O(\varepsilon) \right) \cdot \E{\opt} & \tag{Lemma~\ref{lemma:prophet_competitive_ratio}} \\
={} & \left(\frac{1}{4}-O(\varepsilon)\right) \cdot \E{\opt},
\end{align*}
that concludes the proof of the theorem.
\end{proof}

\begin{remark}[Against almighty adversary]\label{rem:almighty-adversary}
    Although here we state the prophet inequality problem with \emph{offline adversary}, i.e., assuming the items arrive in a fixed order $1,2,\dots,n$, our sample-based algorithm actually works against stronger \emph{almighty adversary}, who determines the arrival order adaptively with full knowledge of all realizations of randomness and the decisions made by the algorithm.
    The reason is that the thresholds are computed in a non-adaptive way (by using samples only), and the sample-based OCRS is a greedy OCRS, which works against almighty adversary (see \citep{FSZ16} for details).
\end{remark}

\subsection{Proof of Lemma~\ref{lemma:prophet_competitive_ratio}}
\label{subsec:bases_exchange}

Observe that we select items in a conservative way, in the sense that $\Prx{i \text{ is active} \mid v_i =v} \le \Prx{i \in \opt \mid v_i=v}$ for all $i,v$. 
Nevertheless, we prove that the expected values of the active items are $\varepsilon$-close to the optimum.

We will need the following monotone basis exchange lemma for matroids. For completeness, we include a proof of it in \Cref{appendix:monotone-exchange}.

\begin{lemma}[Lemma 2.4, \citep{BFG19}]\label{lemma:monotone_mapping}
	Let $\matroid = (U, \indset)$ be a weighted matroid with weight function $w: U \rightarrow \mathbb{R}_{\ge 0}$. Suppose $A \in \mathcal{I}$ is the basis with maximum total weight. For every basis $B$, there exists a bijection $f: A \rightarrow B$ such that for all $x \in A$, $B - f(x) + x$ is a basis and $w(f(x)) \leq w(x)$. 
	Moreover, the bijection $f$ satisfies $f(x) = x$ for all $x \in A \cap B$. 
\end{lemma}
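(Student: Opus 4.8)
The plan is to prove the lemma by induction on the rank of $\matroid$, after first reducing to the case $A\cap B=\emptyset$. For the reduction, set $f(x)=x$ for every $x\in A\cap B$; such assignments trivially meet all three requirements, since $B-x+x=B$ is a basis, $w(x)\le w(x)$, and $f(x)=x$. It then remains to build a bijection between $A\setminus B$ and $B\setminus A$ with the same exchange and weight properties, and this is exactly the lemma for the contracted matroid $\matroid/(A\cap B)$: there $A\setminus B$ and $B\setminus A$ are disjoint bases, and $A\setminus B$ is still a maximum-weight basis, because every basis of $\matroid/(A\cap B)$ becomes a basis of $\matroid$ after adjoining $A\cap B$. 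Note also that if $g$ is a bijection furnished for the contraction, the exchange ``$(B\setminus A)-g(x)+x$ is a basis of $\matroid/(A\cap B)$'' lifts to ``$B-g(x)+x\in\indset$ in $\matroid$'' simply by re-adjoining $A\cap B$. So from now on $A$ and $B$ are \emph{disjoint} bases of $\matroid$, $A$ has maximum weight, and we seek a bijection $f\colon A\to B$ with $B-f(x)+x\in\indset$ and $w(f(x))\le w(x)$ for all $x\in A$.

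For the inductive step (the base case, rank $0$, is trivial), let $x^*$ be an element of $A$ of minimum weight. By the strong basis exchange property there is $y^*\in B$ for which both $A-x^*+y^*$ and $B-y^*+x^*$ are bases; define $f(x^*)=y^*$. Since $A$ is a maximum-weight basis and $A-x^*+y^*$ is a basis of the same cardinality, $w(A)\ge w(A)-w(x^*)+w(y^*)$, hence $w(y^*)\le w(x^*)$, as needed. The key point is that $A\setminus\{x^*\}$ is a maximum-weight basis of the contraction $\matroid/y^*$. To see this, take any basis $C$ of $\matroid/y^*$, so that $C\cup\{y^*\}$ is a basis of $\matroid$, and invoke the standard basis-exchange bijection $\sigma\colon A\to C\cup\{y^*\}$ (so $A-a+\sigma(a)\in\indset$ for every $a\in A$); by maximality of $A$ this forces $w(\sigma(a))\le w(a)$ for all $a$. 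Since $y^*\notin A$, we have $y^*=\sigma(a_0)$ for some $a_0\in A$, and $w(a_0)\ge w(x^*)$ because $x^*$ has minimum weight in $A$. Therefore
\[
w(A)-w\bigl(C\cup\{y^*\}\bigr)=\sum_{a\in A}\bigl(w(a)-w(\sigma(a))\bigr)\ \ge\ w(a_0)-w(y^*)\ \ge\ w(x^*)-w(y^*),
\]
which rearranges to $w(C)\le w(A)-w(x^*)=w(A\setminus\{x^*\})$; and $A\setminus\{x^*\}$ is itself a basis of $\matroid/y^*$ because $A-x^*+y^*$ is a basis of $\matroid$.

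Now apply the induction hypothesis inside $\matroid/y^*$ to the disjoint bases $A\setminus\{x^*\}$ (of maximum weight, as just shown) and $B\setminus\{y^*\}$, whose common size is one smaller; this yields a bijection $g\colon A\setminus\{x^*\}\to B\setminus\{y^*\}$ with $(B\setminus\{y^*\})-g(x)+x$ a basis of $\matroid/y^*$ and $w(g(x))\le w(x)$ for all $x$. Extending $g$ by $f(x^*)=y^*$ gives a bijection $A\to B$; the weight inequality holds throughout, and the exchange condition holds because for $x\ne x^*$ one has $\bigl((B\setminus\{y^*\})-g(x)+x\bigr)\cup\{y^*\}=B-g(x)+x$, which is therefore a basis of $\matroid$, while for $x^*$ we chose $y^*$ so that $B-y^*+x^*\in\indset$. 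Finally the fixed-point condition $f(x)=x$ on $A\cap B$ is restored when we undo the initial reduction, and it is satisfied by construction.

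I expect the main obstacle to be finding the right reduction for the induction. The naive moves fail: deleting an element of $A\setminus B$ and moving it into $B$ in order to shrink $A\triangle B$ destroys the hypothesis that $A$ is maximum-weight, whereas contracting an element of $A$ leaves $B$ too large and forces one to modify $B$ — but the exchange condition $B-f(x)+x\in\indset$ is stated relative to $B$, so it does not transfer once $B$ changes. The fix is to contract the \emph{chosen} element $y^*\in B$, which keeps $B\setminus\{y^*\}$ a basis and carries the $B$-exchanges over verbatim, and to take $x^*$ to be of globally minimum weight in $A$, which is precisely what keeps $A\setminus\{x^*\}$ maximum-weight in $\matroid/y^*$; verifying this last fact (the displayed inequality) is the only genuinely non-routine step.
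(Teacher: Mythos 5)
Your proof is correct, and it takes a genuinely different route from the paper's. The paper's argument is an explicit iterative construction: it assumes (WLOG, after sorting by weight) that $A$ is produced by the greedy algorithm, processes the elements of $A$ from lightest to heaviest, and at each step applies strong basis exchange (Lemma~\ref{lemma:strong_exchange}) between a progressively modified copy of $A$ and the \emph{fixed} basis $B$; the exchange step directly delivers the required ``$B - f(x) + x$ is a basis'' property, and the inequality $w(f(x)) \le w(x)$ is then deduced from greedy optimality (if $w(f(a_i)) > w(a_i)$, greedy would have selected $f(a_i)$ earlier). You instead reduce to disjoint bases by contracting $A \cap B$ and induct on rank: one strong-exchange application at the minimum-weight element $x^* \in A$ defines $f(x^*) = y^*$, and you recurse in $\matroid / y^*$. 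The crux of your argument is the claim that $A \setminus \{x^*\}$ remains a maximum-weight basis of $\matroid / y^*$, which you verify by invoking the classical $A$-side exchange bijection (the paper's Lemma~\ref{lemma:mapping_easy}, i.e., Brualdi's bijection) together with the minimality of $w(x^*)$; your accounting there, isolating the term $w(a_0) - w(y^*) \ge w(x^*) - w(y^*)$ rather than just using nonnegativity of the sum, is exactly what is needed and is correct, as are the lifting of bases under contraction and the handling of $A \cap B$. The trade-off: the paper's proof is self-contained (only strong exchange plus the greedy characterization of maximum-weight bases) and directly constructive, while yours is a cleaner induction that isolates a single nontrivial invariant but leans on an additional standard theorem (which the paper states without proof anyway, so this is a legitimate dependency and not circular, since that lemma's exchanges are relative to $A$, not $B$). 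It is a nice coincidence that both arguments handle the lightest remaining element of $A$ first, but the mechanics are genuinely different.
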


It is worthwhile to make a comparison to the following lemma in the literature. 

\begin{lemma}[Lemma 1, \citep{KW12}] \label{lemma:mapping_easy}
	Let $\matroid = (U, \indset)$ be a weighted matroid with weight function $w: U \rightarrow \mathbb{R}_{\ge 0}$. Suppose $A \in \indset$ is the basis with maximum total weight. For every basis $B$, there exists a bijection $g: B \rightarrow A$ such that for all $y \in B$, $A - g(y) + y$ is a basis and $w(g(y)) \geq w(y)$.
\end{lemma}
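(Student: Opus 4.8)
The plan is to obtain the bijection from a perfect matching in an exchange graph via Hall's theorem, and then read off the weight inequality for free from the optimality of $A$. First I would form the bipartite graph $H$ with the two sides $B$ and $A$, joining $y \in B$ to $a \in A$ by an edge exactly when $A - a + y$ is a basis. Since $A$ and $B$ are both bases, $|A| = |B|$, so a perfect matching of $H$ is precisely a bijection $g \colon B \to A$ for which $A - g(y) + y$ is a basis for every $y \in B$. For any such $g$ the weight condition is then immediate: $A - g(y) + y$ is a basis, so the assumed maximality of $w(A)$ over all bases gives $w(A - g(y) + y) \le w(A)$, i.e.\ $w(A) - w(g(y)) + w(y) \le w(A)$, hence $w(g(y)) \ge w(y)$. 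Thus the whole content is in producing the perfect matching.

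To check Hall's condition on the $B$-side, I would fix $S \subseteq B$ and let $T \subseteq A$ be the set of $a \in A$ adjacent in $H$ to some element of $S$; the goal is $|T| \ge |S|$. The key claim is $S \subseteq \Span(T)$. Take $y \in S$. If $y \in A$, then for $a \in A$ the set $A - a + y$ is independent only for $a = y$ (for $a \ne y$ it equals $A \setminus \{a\}$, which is too small to be a basis), so $y \in T \subseteq \Span(T)$. If $y \notin A$, then $A + y$ contains a unique circuit $C(y, A)$, and the elements $a \in A$ with $A - a + y$ a basis are exactly those of $C(y, A) \setminus \{y\}$; since $y$ lies in the span of any circuit minus one of its elements, and $C(y, A) \setminus \{y\} \subseteq T$, again $y \in \Span(T)$. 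Hence $S \subseteq \Span(T)$. Now $S$ is independent (a subset of the basis $B$), so $|S| = \rank(S)$; and $T$ is independent (a subset of the basis $A$), so $\rank(T) = |T|$; combining with $\rank(\Span(T)) = \rank(T)$ yields $|S| = \rank(S) \le \rank(\Span(T)) = |T|$, which is exactly Hall's condition.

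Hall's theorem then gives a perfect matching of $H$, hence the bijection $g$, and the weight inequality noted above completes the proof; as a by-product the $y \in A$ case forces $g(y) = y$ on $A \cap B$, although the statement does not ask for this. I do not anticipate a serious obstacle here: the only step requiring genuine care is the span claim — correctly identifying the neighborhood of each $y \in B$ with its fundamental circuit in $A$ (and handling $y \in A$ separately) — together with the standard matroid facts that a rank-sized independent set is a basis and that adjoining spanned elements does not change the rank; everything else is bookkeeping.
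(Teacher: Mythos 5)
Your proof is correct, but it takes a genuinely different route from the paper, which in fact does not prove this lemma at all: it is quoted from \citep{KW12}, and the only in-text remark is precisely the observation you also make, namely that the weight inequality $w(g(y)) \ge w(y)$ is automatic for \emph{any} bijection $g$ with $A - g(y) + y$ a basis, since otherwise $w(A - g(y) + y) > w(A)$ would contradict the optimality of $A$. The genuine content — existence of such an exchange bijection — you supply via Hall's theorem on the bipartite exchange graph: your identification of the neighborhood of $y \in B \setminus A$ with the fundamental circuit $C(y,A) \setminus \{y\}$ is the standard fact, the degenerate case $y \in A \cap B$ is handled correctly, and the chain $|S| = \rank(S) \le \rank(\Span(T)) = \rank(T) = |T|$ gives Hall's condition; this is essentially Brualdi's bijective basis-exchange theorem, and the argument is sound. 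By contrast, the paper's appendix proof of the companion Lemma~\ref{lemma:monotone_mapping} (where the bijection goes in the opposite direction and the weight condition is \emph{not} automatic) is constructive: it builds the map element by element, in decreasing weight order along the greedy basis $A$, invoking the strong basis exchange property (Lemma~\ref{lemma:strong_exchange}) at each step. Your Hall-type argument buys a short, self-contained, but non-constructive existence proof that needs no weight ordering; the paper's iterative construction is the kind of machinery one needs when, as in Lemma~\ref{lemma:monotone_mapping}, the monotonicity of weights must be engineered into the bijection rather than inherited for free from the optimality of $A$.
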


Observe that the only difference between the two lemmas is the direction of the mappings $f,g$. We note that for every bijection $g$ such that $A-g(y)+y$ is a basis, we must have $w(g(y)) \geq w(y)$ as otherwise $w(A - g(y) + y) > w(A)$. In contrast, Lemma~\ref{lemma:monotone_mapping} does not hold for every bijection $f$ such that $B-f(x)+x$ is a basis.

Now, we are ready to prove Lemma~\ref{lemma:prophet_competitive_ratio}.
The crucial part is to bound the contribution of those items with $\vali < T_i^{(0)}$.

\begin{lemma}\label{lemma:prophet_small_values}
	If $\{T_i^{(k)}\}$ are good thresholds, 
	\[
	\E{\sum_{i\in\OPT}\vali \cdot \I{\vali < T_i^{(0)}}} \le O(\varepsilon) \cdot \E{\opt}.
	\]
\end{lemma}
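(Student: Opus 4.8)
The plan is to bound $\E{\sum_{i\in\OPT}\vali \cdot \I{\vali < T_i^{(0)}}}$ by comparing, distribution by distribution, the ``small-value'' contribution of $\OPT$ against a deterministic benchmark built from the thresholds, and then showing the benchmark itself is $O(\varepsilon)\cdot\E{\opt}$. The natural benchmark suggested by the overview is the maximum-weight independent set with respect to the weights $T_i^{(0)}$, call it $\OPT^{T^{(0)}}$. The first step is the \emph{lower-bound half}: for a fixed realization $\vals$, apply the monotone basis exchange lemma (Lemma~\ref{lemma:monotone_mapping}) with the weight function $w(i) = T_i^{(0)}$ (using dummy elements so $\OPT^{T^{(0)}}$ is a basis), letting $A = \OPT^{T^{(0)}}$ and $B = \OPT(\vals)$. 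This gives a bijection $f: A \to B$ with $B - f(x) + x \in \indset$ and $T^{(0)}_{f(x)} \le T^{(0)}_x$, and crucially $f(x)=x$ on $A\cap B$. For each $x \in A$ with $f(x) \ne x$, the exchange $B - f(x) + x \in \indset$ means $\OPT(\vals) - f(x) + x$ is independent; by Observation~\ref{obs:tau_property} (applied with $i = f(x) \in \OPT$, $j = x \notin \OPT$) this forces $v_{f(x)} \ge \tau_x(\valsmi[x])$. So whenever $f(x) \in \OPT$ contributes a ``small'' term $v_{f(x)}\I{v_{f(x)} < T^{(0)}_{f(x)}}$, we can charge it: $v_{f(x)} < T^{(0)}_{f(x)} \le T^{(0)}_x$, and the event $v_{f(x)} < T^{(0)}_{f(x)}$ together with $v_{f(x)} \ge \tau_x$ is an event of the form $\{\tau_x \le v_{f(x)} < T^{(0)}_{f(x)}\}$.

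The second step is to take expectations and exploit that the thresholds are \emph{good}. I would split the sum $\sum_{i\in\OPT}\vali\I{\vali < T_i^{(0)}}$ over $\OPT$ into the part where $f^{-1}(i) = i$ (i.e., $i \in \OPT \cap \OPT^{T^{(0)}}$) and the part where $f^{-1}(i) \ne i$. For $i$ in the first part, $i \in \OPT^{T^{(0)}}$, so $v_i \I{v_i < T_i^{(0)}} < T_i^{(0)} \le \OPT^{T^{(0)}}$ summed up is at most $\OPT^{T^{(0)}}$ (as these $i$ form an independent set with respect to the $T^{(0)}$-weights). For $i$ in the second part, with $x = f^{-1}(i)$, we have $v_i < T_i^{(0)} \le T_x^{(0)}$ and $v_i \ge \tau_x$; since $f$ is a bijection onto $\OPT$ these $x$'s are distinct and form (a subset of) the independent set $\OPT^{T^{(0)}}$, so $\sum v_i \I{\cdots} \le \sum_x T_x^{(0)} \cdot \I{\tau_x \le v_i < T_x^{(0)}} \le \sum_{x \in \OPT^{T^{(0)}}} T_x^{(0)} \cdot \I{\text{bad event at } x}$. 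The key point: the probability (over $v_i$, which is independent of $\valsmi[x]$ that determines $\tau_x$ and hence $T_x^{(0)}$ is a fixed learned value) that $v_i \in [\tau_x, T_x^{(0)})$ is exactly $p^x(T_x^{(0)}) - \Prx{v_i < \tau_x}$... wait, more carefully, $\Prx{\tau_x \le v \le T_x^{(0)}} = p^x(T_x^{(0)}) = \Pr[T_x^{(0)} > \tau_x]$, and by Lemma~\ref{lemma:threshold_concentration} (good thresholds), $p^x(T_x^{(0)}) \le p_0 + 2\varepsilon^2 = \varepsilon(1+\varepsilon) - \varepsilon^2 + 2\varepsilon^2 = \varepsilon(1+\varepsilon) + \varepsilon^2 = O(\varepsilon)$. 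So conditioned on good thresholds, each bad-event term contributes at most $O(\varepsilon) \cdot T_x^{(0)}$ in expectation over $v_i$. Summing over $x \in \OPT^{T^{(0)}}$ gives $O(\varepsilon) \cdot \OPT^{T^{(0)}}$.

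The third step is the \emph{upper-bound half}: showing $\E{\OPT^{T^{(0)}}} \le O(1)\cdot \E{\opt}$, in fact I expect $\E{\OPT^{T^{(0)}}} \le \E{\opt}$ or something close. For each fixed realization vector of thresholds, take the independent set $\OPT^{T^{(0)}}$; for each $x$ in it, $T_x^{(0)}$ is the $\lceil \varepsilon(1+\varepsilon)^0 N \rceil$-th smallest sample of $\tau_x$, roughly the $\varepsilon$-quantile, so $\Prx{v_x \ge T_x^{(0)}} \approx 1 - \varepsilon$ is large; equivalently $T_x^{(0)} \le $ (some low quantile of $v_x$). Then $\E{v_x \mid v_x \ge \text{something}}$ type bound, or more directly: since $\OPT^{T^{(0)}}$ is independent and $v_x \ge T_x^{(0)}$ with probability $\ge 1 - p_0 - 2\varepsilon^2 \ge 1 - O(\varepsilon)$ (from good thresholds, $p^x(T_x^{(0)}) \le p_0 + 2\varepsilon^2$), we get $T_x^{(0)} \le \frac{1}{1-O(\varepsilon)} \E{v_x \I{v_x \ge T_x^{(0)}}}$... hmm, that's not quite it either since $v_x \ge T_x^{(0)}$ doesn't put $x$ in $\OPT$. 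The cleaner route: $T_x^{(0)} \cdot \Prx{v_x > \tau_x \mid \text{thresholds}}$... actually the right statement is that $\OPT^{T^{(0)}}$ being independent means $\sum_{x \in \OPT^{T^{(0)}}} T_x^{(0)}\I{x \text{ active}}$ is at most the value of active items, and each $x\in \OPT^{T^{(0)}}$ is active with probability $\ge p_0 = \Omega(\varepsilon)$ when $v_x \ge T_x^{(0)}$... this circles back. I think the intended bound is different: one shows $\OPT^{T^{(0)}} \le \OPT(\vals)$ pointwise is \emph{false}, so instead we need $\E{\OPT^{T^{(0)}}} = O(\E{\opt})$ via: for each $x$, $T_x^{(0)}$ is at most the median/quantile of $\tau_x$, and $\tau_x \le v_j$ for the exchange partner $j \in \OPT_x$, so $T_x^{(0)}$ is small compared to values that \emph{are} in $\OPT_x$; a counting/LP-duality argument over the matroid then yields $\E{\OPT^{T^{(0)}}} \le \E{\opt}$ or a constant multiple.

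\textbf{Main obstacle.} The hard part is making the upper-bound half ($\E{\OPT^{T^{(0)}}} = O(\E{\opt})$) rigorous: relating the maximum-weight independent set under the \emph{learned threshold weights} to the expected true optimum requires carefully using that each $T_x^{(0)}$ is a genuine low quantile of $\tau_x$ (so $\Prx{v_x \ge T_x^{(0)}}$ is bounded away from $0$ — specifically $\ge p_0$ given good thresholds), combined with the structural fact that $\OPT^{T^{(0)}}$ is an independent set so there is a matching into elements achieving comparable value in $\OPT$ (again via a basis-exchange/monotone-mapping argument, now in the reverse direction using Lemma~\ref{lemma:mapping_easy} or Lemma~\ref{lemma:monotone_mapping}). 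Handling the conditioning on the event of good thresholds (which has probability $\ge 1-\varepsilon$) throughout, and keeping the two exchange bijections consistent, is where the bookkeeping is most delicate. I would organize the final write-up as: (1) the pointwise exchange inequality via Lemma~\ref{lemma:monotone_mapping}; (2) taking expectations over $v_i$ with thresholds fixed, using the good-threshold quantile bound to get the factor $O(\varepsilon)$; (3) the separate claim $\E{\OPT^{T^{(0)}} \mid \event} = O(\E{\opt})$; and (4) combining, noting $\Prx{\event} \ge 1 - \varepsilon$ and that the unconditional expectation only differs by lower-order terms since values are nonnegative.
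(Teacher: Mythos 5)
Your proposal correctly identifies the benchmark $W = \argmax_{S \in \indset}\sum_{i\in S}T_i^{(0)}$, the application of Lemma~\ref{lemma:monotone_mapping} with $A = W$, $B = \OPT(\vals)$, and the crucial pointwise inequality $\vali[f(j)] \ge \tau_j(\valsmi[j])$ (for $j \in W \setminus \OPT$ via Observation~\ref{obs:tau_property}, and for $j \in W \cap \OPT$ via $f(j)=j$ and Observation~\ref{obs:is_threshold}; you should handle both cases uniformly rather than splitting). The bound on the small-value contribution, $\sum_{i\in\OPT}\vali\I{\vali < T_i^{(0)}} \le \sum_{j\in W}T_j^{(0)}\I{\vali[f(j)] < T_j^{(0)}}$ together with $\Prx{\vali[f(j)] < T_j^{(0)}} \le \Prx{\tau_j < T_j^{(0)}} = p^j(T_j^{(0)}) \le p_0 + 2\varepsilon^2$, is exactly the paper's step. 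Up to this point you are on track.

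The gap is your ``upper-bound half,'' $\sum_{j\in W}T_j^{(0)} \le O(1)\cdot \E{\opt}$, which you flag as the main obstacle. There you wander into wrong territory: you reason about $\Prx{\vali[x] \ge T_x^{(0)}}$, but the good-threshold guarantee controls $\Prx{\tau_x \ge T_x^{(0)}}$, not $\Prx{\vali[x] \ge T_x^{(0)}}$; these are different random variables (one is a statistic of $\valsmi[x]$, the other is $\vali[x]$ itself), and conflating them does not lead anywhere. You also speculate about a ``reverse direction'' application of Lemma~\ref{lemma:mapping_easy} or an LP-duality/counting argument. None of that is needed: \emph{the same bijection $f: W \to \OPT$ from step one closes the gap}. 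Because $f$ is a bijection onto $\OPT$, one has $\E{\opt} = \E{\sum_{i \in \OPT}\vali} = \E{\sum_{j\in W}\vali[f(j)]}$. Then use the trivial pointwise bound $\vali[f(j)] \ge T_j^{(0)}\I{\vali[f(j)] \ge T_j^{(0)}}$ (nonnegativity of values) together with the probability bound $\Prx{\vali[f(j)] \ge T_j^{(0)}} \ge \Prx{\tau_j \ge T_j^{(0)}} \ge 1 - \varepsilon - \varepsilon^2$, which you already established in step one. Taking expectations termwise gives $\E{\opt} \ge (1-\varepsilon-\varepsilon^2)\sum_{j\in W}T_j^{(0)}$. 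Combining with the $(\varepsilon + \varepsilon^2)\sum_{j\in W}T_j^{(0)}$ bound from step two yields the $O(\varepsilon)\cdot\E{\opt}$ conclusion. So the missing idea is not a new lemma or a new mapping: it is simply that the one bijection $f$ already simultaneously supplies both the upper bound on the small-value mass and the lower bound on $\E{\opt}$ in terms of $\sum_{j\in W}T_j^{(0)}$.
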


\begin{proof}
	Let $W=\argmax_{S\in\indset}\sum_{i\in S} T_i^{(0)}$ be the maximum weighted basis where item $i$ has value $T_i^{(0)}$.
	Fix arbitrary $\vals$ and $\OPT(\vals)$, by Lemma \ref{lemma:monotone_mapping}, there exists a bijection $f: W \rightarrow \OPT$ such that for any $j \in W$, $\OPT-f(j) + j \in \indset$ and $T_{f(j)}^{(0)} \le T_j^{(0)}$. Moreover, for all $j \in W \cap \OPT$, $f(j) = j$. 
	
	By Observation~\ref{obs:is_threshold}, for every $j \in W \cap \OPT$, we have $f(j)=j$ and $\vali[j] \ge \tau_j(\valsmi[j])$. For every $j \in W \setminus \OPT$, by Observation \ref{obs:tau_property}, we also have $\vali[f(j)] \ge \tau_j(\valsmi[j])$.
	Hence, $\vali[f(j)] \ge \tau_j$ for all $j \in U$. Consequently,
	\begin{equation}
	\Prx{\vali[f(j)] \ge T_j^{(0)}} \ge \Prx{\tau_j(\valsmi[j]) \ge T_j^{(0)}} \ge 1- \varepsilon - \varepsilon^2, \label{eqn:prob}
	\end{equation}
	where the last inequality holds when $\{T_i^{(k)}\}$ are good.
	Then, 
	\begin{equation}
	\E{\sum_{i\in\OPT}\vali} = \E{\sum_{j \in W} \vali[f(j)]} \ge \E{\sum_{j\in W} T_j^{(0)} \cdot \I{\vali[f(j)] \ge T_j^{(0)}}} \ge (1-\varepsilon-\varepsilon^2)\sum_{j\in W} T_j^{(0)}. \label{eqn:lowerbound}
	\end{equation}
	Recall that by the construction of $f$, we have that $T_{f(j)}^{(0)} \le T_j^{(0)}$, $\forall j \in W$. Equivalently, we have $T_{f^{\text{-}1}(i)}^{(0)} \ge T_i^{(0)}$, $\forall i \in \OPT$. Finally,
	\begin{align*}
	\E{\sum_{i\in\OPT}\vali \I{\vali < T_i^{(0)}}} \le{} & \E{\sum_{i\in\OPT}\vali \I{\vali < T_{f^{\text{-}1}(i)}^{(0)}}} \le \E{\sum_{i\in\OPT}T_{f^{\text{-}1}(i)}^{(0)} \I{\vali < T_{f^{\text{-}1}(i)}^{(0)}}}\\
	={} & \E{\sum_{j\in W} T_j^{(0)} \I{\vali[f(j)] < T_j^{(0)}}} \stackrel{\eqref{eqn:prob}}{\le} (\varepsilon+\varepsilon^2) \sum_{j\in W} T_j^{(0)} \\
	\stackrel{\eqref{eqn:lowerbound}}{\le}{} & \frac{\varepsilon+\varepsilon^2}{1-\varepsilon-\varepsilon^2} \E{\opt} \le O(\varepsilon) \cdot \E{\opt}. 
	\qedhere
	\end{align*}
\end{proof}

\begin{proof}[Proof of Lemma~\ref{lemma:prophet_competitive_ratio}]
For every $i \in U$ and $v \in [T_i^{(k)},T_{i}^{(k+1)})$, we have 
\begin{align*}
p^i(v) \le p^i(T_{i}^{(k+1)}) \le{} & \varepsilon(1+\varepsilon)^{k+1} + \varepsilon^2 & \tag{$\{T_{i}^{(k)}\}$ are good} \\
\le{} & (1+O(\varepsilon)) \cdot (\varepsilon(1+\varepsilon)^k - \varepsilon^2) \\
={} & (1+O(\varepsilon)) \Prx{i \in R(\allocs) \mid \vali=v}. & \tag{activation rule}
\end{align*}
Therefore,
\begin{align*}
\E{\sum_{i \in \OPT} \vali \I{\vali \ge T_i^{(0)}}} ={} & \sum_{i \in U} \int_{T_i^{(0)}}^{\infty} v p^i(v) \dd F_i(v) \\ 
\le{} & (1+O(\varepsilon)) \sum_{i \in [n]} \int_{T_i^{(0)}}^{\infty} v \Prx{i \in R(\allocs) \mid v_i = v} \dd F_i(v) \\ 
={} & (1 + O(\varepsilon))\E{\sum_{i \in R(\allocs)} \vali}.
\end{align*}

Combining this inequality with Lemma~\ref{lemma:prophet_small_values}, we conclude that
\begin{align*}
\E{\sum_{i\in R(\allocs)}\vali} \ge{} & \frac{1}{1+O(\varepsilon)} \E{\sum_{i\in\OPT}\vali\I{\vali\ge T_i^{(0)}}} \\
={} & \frac{1}{1+O(\varepsilon)} \left(\E{\opt} - \E{\sum_{i\in\OPT}\vali\I{\vali < T_i^{(0)}}} \right) \ge \frac{1}{1+O(\varepsilon)} \E{\opt}. \qedhere
\end{align*}
\end{proof}

\section{Matroid OCRS from Samples}
\label{sec:orcs}
Motivated by our sample-based reduction, our goal now is to design a constant selectable matroid OCRS using polylogarithmic samples. A previous work~\citep{HPT22} studied lower bounds for this problem.  Upon careful inspection, their counterexample shows that any constant selectable matroid OCRS requires $\Omega(\log n)$ samples. (Details are given in Appendix~\ref{appendix:ocrs-sample-lower-bound}.) In this work, we design a matroid OCRS with $O_{\varepsilon}(\log^4 n)$ samples:

\begin{theorem}\label{thm:matroid_ocrs}
With $O_{\varepsilon}(\log^4 n)$ samples, there is a $(\frac{1}{4} - \varepsilon)$-selectable OCRS for general matroids of size $n$ and any $\varepsilon > 0$.
\end{theorem}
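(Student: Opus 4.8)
The plan is to start from the $\tfrac14$-selectable matroid OCRS of~\citet{FSZ16} and turn it into one that runs from samples of $R(\allocs)$ only, losing at most an additive $\varepsilon$ in selectability. Recall the shape of that scheme: it scans the elements in arrival order, maintains the independent set $I$ of elements selected so far, and when an active element $e$ arrives with $I+e\in\indset$ it selects $e$ according to a randomized rule parameterised by $\allocs$ and the matroid; it is exactly this rule --- which damps how much ``blocking mass'' earlier elements can place into $\Span$ --- that delivers $\Pr[e\in A\mid e\in R(\allocs)]\ge\tfrac14$ from $\allocs\in\polytope_\matroid$. Crucially, the rule for $e$ never needs $\allocs$ as a vector; it only consults a few \emph{statistics of the distribution of $R(\allocs)$ restricted to the prefix} $\{1,\dots,e-1\}$ (together with the current partial solution $I$), e.g.\ quantities of the form $\Pr[\,e\in\Span(R(\allocs)\cap S)\,]$ and conditional variants. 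These are precisely the quantities that $K$ i.i.d.\ samples $R^{(1)},\dots,R^{(K)}$ of $R(\allocs)$ allow us to estimate.

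The lack-of-concentration concern at the OCRS level is handled by a discretisation/truncation preprocessing: round the marginals to powers of $1+\varepsilon$, and set aside the elements whose marginal is below a $\poly(\varepsilon/n)$ threshold --- by a union bound over the at most $n$ of them, with probability $1-\varepsilon$ none is ever active, so they can be treated crudely without affecting selectability by more than $O(\varepsilon)$. On the $O_\varepsilon(\log n)$ surviving scales an additive sample error of $\varepsilon$ becomes a multiplicative error of $\varepsilon$, and a union bound over the $n$ elements and the scales makes every queried element-level statistic $(1\pm\varepsilon)$-accurate with probability $1-1/\poly(n)$ from $O_\varepsilon(\polylog n)$ samples.

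The genuine obstacle --- and the step I expect to be hardest --- is that the statistics the \citet{FSZ16} rule consults for $e$ are \emph{conditioned on the realised run}: the damping applied to $e$ depends on the set $I_{<e}$ of elements already selected, and $I_{<e}$ is a function both of the scheme's internal coins and of the very estimates we computed from $R^{(1)},\dots,R^{(K)}$. Hence the event ``our estimate is inaccurate for the statistic that is actually queried at step $e$'' is correlated with the trajectory; there are up to $2^{n}$ trajectories, so one sample set cannot certify all of them by a union bound, and re-drawing fresh samples at each of the $n$ steps would inflate the budget to $\Omega(n)$. I would resolve this by a re-randomisation of the scheme that is specific to the OCRS structure (and unlike the noise-injection used in adaptive data analysis): rather than feeding the true state $I_{<e}$ into the rule for $e$, feed in a \emph{freshly simulated} prefix state drawn from the scheme's own distribution on the pre-computed estimated instance, so that the quantity consulted at step $e$ is, conditioned on the estimates, drawn from a \emph{fixed} distribution that does not depend on the actual trajectory --- equivalently, all of the ``how hard to damp $e$'' randomness is committed obliviously, before the run. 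Then only $O(\polylog n)$ distinct statistics are ever queried, and a single sample set certifies all of them with high probability.

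What remains is to show that this surrogate scheme is still $\bigl(\tfrac14-O(\varepsilon)\bigr)$-selectable, which calls for a quantitatively robust re-run of the \citet{FSZ16} analysis: its selectability bound must be shown to depend Lipschitz-continuously on the consulted statistics (so that $(1\pm\varepsilon)$-accurate estimates cost only $O(\varepsilon)$), one must bound the extra error incurred by replacing conditioning on the true run with the simulated surrogate, and one must check that the per-element errors aggregate to only $O(\varepsilon)$ rather than $O(n\varepsilon)$ --- which is where the additive structure of FSZ16's blocking bound is used, and where feasibility (which stays exact, since we never select an infeasible element) must be distinguished from the selection probabilities (which drift). Rescaling $\varepsilon$ and bookkeeping the union bounds --- over the $n$ elements, the $O_\varepsilon(\log n)$ discretisation scales, and the simulated configurations --- gives a $(\tfrac14-\varepsilon)$-selectable OCRS with $O_\varepsilon(\log^4 n)$ samples, as claimed.
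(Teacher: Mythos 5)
Your proposal starts from the right place (port the~\citet{FSZ16} scheme to the sample setting and fight the adaptivity of the estimated statistics), but it misdiagnoses \emph{where} the adaptivity lives, and the proposed fix addresses the wrong problem.

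The~\citet{FSZ16} matroid OCRS is not an online scheme that consults trajectory-dependent statistics at each arrival. It is a two-phase scheme: an \emph{offline} chain-decomposition phase that recursively computes protected sets $N_0 \supsetneq N_1 \supsetneq \cdots \supsetneq N_\ell = \emptyset$, and an \emph{online} phase that, within each layer $N_i \setminus N_{i+1}$, runs a plain greedy on the contracted matroid $\matroid|_{N_i}/N_{i+1}$ --- no probabilities are queried at runtime, and the selection rule for $e$ does not depend on $I_{<e}$ through any consulted statistic. The adaptivity all sits in the offline $\mathsf{Select}$ procedure: it repeatedly asks ``is $\Pr[e \in \Span((R(\allocs)\cap N)\cup S\setminus\{e\})] > c$?'' while $S$ grows, and each decision to enlarge $S$ reshapes the next query. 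Your proposed remedy --- feed a freshly simulated prefix state into the rule for $e$ so that the runtime query distribution is fixed --- therefore targets a runtime adaptivity that does not exist, while leaving the real offline adaptivity in $\widehat{\mathsf{Select}}$ untouched. Similarly, the ``round the marginals to powers of $1+\varepsilon$'' preprocessing is not available to you: the adversary chooses $\allocs$, the algorithm only sees samples of $R(\allocs)$, and the chain decomposition's queries are about spans of the random set, not about individual marginals; discretising estimated marginals does nothing to limit the space of queried spanning events.

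What the paper actually does is quite different and you would need it. First, a sandwiching lemma (Lemma~\ref{lemma:bound-sample-based-protection}) shows that despite the within-call adaptivity of $\widehat{\mathsf{Select}}$, its output $\widehat{S}$ is bracketed between two \emph{deterministic} protected sets, $\select(N,c+\delta)\subseteq\widehat{S}\subseteq\select(N,c-\delta)$, with high probability --- proved by a monotonicity/induction argument where each side's reference set is fixed, so Chernoff plus a union bound over the $n$ candidate elements suffices. This bracket is still not enough, because elements in the ``gap'' $\select(N,c-\delta)\setminus\select(N,c+\delta)$ have no selectability guarantee under $\widehat{S}$. The second ingredient --- and the one your proposal has no analogue of --- is to randomize the threshold over $k=\Theta_\varepsilon(\log n)$ equally spaced values $c_1<\dots<c_{k+1}$, picking one uniformly per layer, so that for any fixed element $e$ the probability of landing in the gap is only $1/k$ per layer. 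A per-element stochastic-process argument (Cases 1/2/3) then shows the failure probability over $\ell=O_\varepsilon(\log n)$ layers is $1-(1-1/k)^\ell = O(\varepsilon)$. Fresh samples per layer, $\Theta_\varepsilon(\log^3 n)$ each, over $O_\varepsilon(\log n)$ layers gives the $O_\varepsilon(\log^4 n)$ total. Without the sandwiching lemma and the threshold-randomization device, the claim ``only $O(\polylog n)$ distinct statistics are ever queried'' in your proposal is unjustified, and the argument does not go through.
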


Our OCRS is based on the matroid OCRS proposed by~\citet{FSZ16}. Recall that in online contention resolution, each element $e \in N$ is active independently with probability $\alloc_e$. Let $R(\allocs)$ be the random set of active elements. For sample-based schemes, this vector $\allocs$ is hidden, instead the algorithm gets $s$ samples of $R(\allocs)$. To implement the algorithm in~\citep{FSZ16}, it suffices to implement an oracle that answers queries of form $\Pr[e \in \Span(R(\allocs) \cup S)]$ for element $e \in N$ and subset $S \subseteq N$ with error up to $\varepsilon$.

For a single fixed query, evaluating the empirical probability over $\frac{\log(n)}{\varepsilon^2}$ samples guarantees the success with high probability. At first sight, it is tempting to argue that these samples suffice by a union bound over all $\poly(n)$ queries made by their algorithm. However, such argument is only true for non-adaptive queries. For adaptive queries, the error in the answers to previous queries can reveal information about the samples that leads to a problematic query. Resolving this issue is a central problem in adaptive data analysis. As the OCRS by~\citet{FSZ16} dynamically updates the set $S$ based on previous queries, it makes $\poly(n)$ highly adaptive queries. Studies in adaptive data analysis show that, in general, answering these many adaptive queries up to $\varepsilon = 0.1$ error already requires $\poly(n)$ samples~\citep{HU14, SU15}. In our work, we protect against such adaptivity with new ideas that are specific to OCRSs.

\subsection{Recap: Chain Decomposition}

In this section, we briefly sketch the idea of the OCRS by~\citet{FSZ16}. We are given a matroid $\matroid = (U, \mathcal{I})$, where $U = [n]$ is the universe, and $\mathcal{I}$ is the family of independent sets. Suppose $\allocs \in b \cdot \polytope_\matroid$ is a vector within the matroid polytope shrank by a factor of $b$, where $b$ is a constant factor that we will pick later. Each element $e \in U$ is active independently with probability $\alloc_e$. Let $R(\allocs)$ be the random set of active elements (w.r.t. the vector $\allocs$).

\paragraph{Protected elements.} Consider the greedy algorithm that always takes an active element whenever possible. Its selectability for an element $e \in U$ can be lower-bounded: $$\Prx{e\text{ is accepted by greedy} \given e \text{ is active}} \geq 1 - \Pr[e \in \Span(R(\allocs) \setminus \{e\})].$$ The idea of chain decomposition is that we are going to protect those elements $e \in U$ with a selectability less than $1 - c$. We prioritize those elements by adding them to a set of protected elements $S$. Now the modified greedy algorithm is going to pretend that all elements in $S$ are already taken at the beginning, then take any active element whenever possible. As the set $S$ is now nonempty, the selectability for an element $e \in U \setminus S$ becomes
$$\Prx{e\text{ is accepted by modified greedy} \given e \text{ is active}} \geq 1 - \Pr[e \in \Span(R(\allocs) \cup S \setminus \{e\})].$$
There may be new problematic elements with this probability less than $1 - c$. We then have to enlarge the set $S$ by adding those elements. In this way, the set $S$ is \emph{dynamically} changing based on the result of the previous queries of such a probability. Then new queries depend on the changing $S$, which makes them highly \emph{adaptive} in nature. This process is summarized in Algorithm~\ref{alg:protection}. It takes a submatroid $(N, \mathcal{I} \vert_N)$ for $N \subseteq U$ and returns the protected set $S$ for that submatroid. We note that although the choice of the element $e$ at Line~\ref{line:choose_element} is arbitrary, the returned set $S$ is always unique. (See Appendix~\ref{appendix:ocrs-protection-uniqueness}.)

\begin{algorithm}
\SetKwFunction{select}{$\mathsf{Select}$}
\SetKwProg{myfunc}{Function}{}{}
\myfunc{\select{$N$, $c$}}{
$S \gets \emptyset$

\While{$\exists e \in N\setminus S, \ \Prx{e \in \Span(((R(\allocs) \cap N) \cup S) \setminus \{e\})} > c$ \label{line:choose_element}} {add such element $e$ to $S$.} 

\Return{$S$}
}

\caption{Select protected elements for $N$}
\label{alg:protection}
\end{algorithm}

\paragraph{Chain decomposition} After selecting the protected set $S$, note that the elements outside $S$ can be handled by the modified greedy algorithm and achieve $(1 - c)$-selectability. Furthermore, the following lemma guarantees that set $S$ returned by $\select(N, c)$ is always of a smaller rank than $N$ when $b \leq c$. 

\begin{lemma}[Section 2.1.1, \citep{FSZ16}] \label{lemma:OCRS:MatroidTerminate}
For any submatroid $(N, \mathcal{I}\vert_N)$, vector $\allocs \in b \cdot \polytope_\matroid$, and $c > 0$, we always have 
$$\rank(S) < \frac{b}{c} \cdot \rank(N)\text{ where }S = \select(N, c).$$
\end{lemma}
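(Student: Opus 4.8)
\medskip

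The plan is to argue that every element added to $S$ by $\select(N, c)$ contributes to a ``span-closure'' structure whose total weight, measured against $\allocs$, can be bounded both from above (via the rank constraint in $\polytope_\matroid$) and from below (via the defining inequality of the while-loop). First I would set up notation: let $S = \{e_1, e_2, \dots, e_t\}$ be the elements added in the order the algorithm picks them, and for each step $\ell$ let $S_{\ell} = \{e_1, \dots, e_{\ell}\}$ be the protected set just before $e_{\ell+1}$ is chosen. I would also write $R = R(\allocs) \cap N$ for the (random) active set restricted to $N$. The key object is the event that $e_{\ell+1} \in \Span((R \cup S_\ell) \setminus \{e_{\ell+1}\})$, which by the while-loop condition has probability strictly greater than $c$. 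Since $e_{\ell+1} \notin S_\ell$, and since $R$ never contains $e_{\ell+1}$ with probability more than $\alloc_{e_{\ell+1}}$ —actually we want the complementary view— I would instead reason about when $e_{\ell+1}$ is \emph{not} spanned.

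\medskip

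The cleaner route, which I expect is the intended one, is a weighting/counting argument directly on ranks. Consider running the span-closure: the set $\Span(S)$ in the matroid $\matroid|_N$ has rank exactly $\rank(S)$. The idea is that each protected element $e_{\ell+1}$, at the moment it is added, is ``usually'' inside $\Span(R \cup S_\ell)$, i.e., it is rank-redundant over the active elements plus the already-protected ones. Summing a suitable indicator/probability over $\ell$ and taking expectations over $R$, the left side counts (a lower bound on) $c \cdot \rank(S)$ — because there are $\rank(S)$ ``independent directions'' among the protected elements, each contributing probability $> c$ of being spanned — while the right side is bounded by $\E[\rank(R)] \le \E[|R|] = \sum_{e \in N} \alloc_e \le b \cdot \rank(N)$, using $\allocs \in b \cdot \polytope_\matroid$ and the rank constraint for the set $S = N$. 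Rearranging gives $c \cdot \rank(S) < b \cdot \rank(N)$, which is the claim. I would be careful to pass from $S$ to an independent subset of $S$ of size $\rank(S)$ (a basis of $\matroid|_S$) so that ``independent directions'' is literally correct, and to handle the fact that the while-loop condition is about $(R \cup S_\ell)\setminus\{e_{\ell+1}\}$ rather than $R \cup S_\ell$ — but removing one element only changes rank by at most one, and since $e_{\ell+1}$ is spanned this removal does not matter for the span event.

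\medskip

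The main obstacle I anticipate is making the ``$\rank(S)$ independent directions each contribute $>c$'' step fully rigorous, because the elements of $S$ are not independent as a set in general and the events ``$e_{\ell+1}$ spanned by $R \cup S_\ell$'' are correlated across $\ell$. The right way to handle this is to restrict attention to a maximal independent subset $B \subseteq S$ with $|B| = \rank(S)$ and to observe that along the sequence, each element of $B$, when it was added, was genuinely contributing a new direction to $\Span(S_\ell)$ yet was (with probability $>c$) already in $\Span(R \cup S_\ell \setminus \{e\})$; chaining these and using linearity of expectation over $R$ converts the per-element probability bound into the aggregate bound $c|B| < \E[\rank(R \cup \text{stuff})]$. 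One then bounds the right-hand rank by $|R|$ in expectation. Since the detailed bookkeeping here is exactly the content of Section~2.1.1 of~\citep{FSZ16}, I would present the argument at the level of these structural steps and cite their analysis for the routine parts of the rank accounting.
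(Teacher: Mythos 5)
Your proposal is correct and takes essentially the same route as the paper: both pick the $\rank(S)$ elements of $S$ that increased its rank during the algorithm, use the while-loop condition to say each of them is spanned by $R \cup (\text{earlier picks})$ with probability $> c$, and close by bounding the expected contribution of $R$ via $\E[|R(\allocs) \cap N|] \le \sum_{e \in N} \alloc_e \le b\,\rank(N)$. The paper phrases the accounting as $\rank(S) \le \E[\rank(S \cup R)] \le \E[|R|] + (1-c)\rank(S)$, whereas you phrase it as $c\cdot\rank(S) < \E[\rank(R)] \le \E[|R|]$; these are the same inequality rearranged, so the two write-ups only differ cosmetically.
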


\begin{proof}
 We include a proof in~\Cref{appendix:ocrs-termination} for completeness. 
\end{proof}

Therefore, as long as we pick $c \geq b$, the subproblem in $S$ is always strictly smaller. We can recursively apply the idea to the submatroid $(S, \mathcal{I}\vert_S)$. This leads to the following algorithm:

\begin{algorithm}
\SetKwFunction{decompose}{$\mathsf{Decompose}$}
\SetKwProg{myproc}{Procedure}{}{}
\myproc{\decompose{$U$, $c$}}{
$N_0 \gets U$ 

$\ell \gets 0$

\While {$N_\ell \neq \emptyset$} {
$N_{\ell + 1} \gets \ $\select{$N_l$, $c$}

$\ell \gets \ell + 1$
}
}
\caption{Chain decomposition in \citet{FSZ16}}
\label{alg:chain-decomposition}
\end{algorithm}

By Lemma~\ref{lemma:OCRS:MatroidTerminate}, when $b < c - \varepsilon$, it is guaranteed to terminate in $\ell = O_{\varepsilon}(\log n)$ steps and produce the following chain decomposition
$$\emptyset = N_\ell \subsetneq N_{\ell-1} \subsetneq \cdots \subsetneq N_1 \subsetneq N_0 = U ~~~\text{where}~~~ N_{i+1} = \select(N_i,c) \text{ for } 0 \le i < \ell.$$
The final OCRS in~\citep{FSZ16} is simply running Algorithm~\ref{alg:modified-greedy}, a modified greedy algorithm, for each layer $N_i \setminus N_{i+1}$ separately.

\begin{algorithm}
$A \gets \emptyset$ 
\tcp*{Here $A$ is the set of accepted elements.}

\For{each arriving element $e \in N_i \setminus N_{i+1}$} {
\If {$A \cup N_{i + 1} \in \mathcal{I}$} {
$A \gets A \cup \{e\}$
} 
}
\Return{$A$}

\caption{Modified greedy algorithm for $N_i \setminus N_{i+1}$}
\label{alg:modified-greedy}
\end{algorithm}

In other words, \Cref{alg:modified-greedy} is equivalent to a greedy algorithm running on the matroid \(\matroid |_{N_i} / N_{i + 1}\), which is $\matroid$ restricted to $N_i$ and then contracted by $N_{i+1}$.
It is easy to see that running \Cref{alg:modified-greedy} for every $0\le i<\ell$ together always produces an independent set of $\matroid$.
And the selectability of any element $e \in N_i \setminus N_{i+1}$ is
\begin{equation*}
	\Prx{e\text{ is accepted} \given e \text{ is active}} \geq 1 - \Prx{e \in \Span(((R(\allocs) \cap N_i) \cup N_{i+1})) \setminus \{e\})} \ge 1-c.
\end{equation*}

\subsection{Selecting Protected Elements with Samples}

Now, we move on to introduce our sampled-based OCRS. As a natural first step, let us first try to implement Algorithm~\ref{alg:protection} with samples: 

\begin{algorithm}[H]
\SetKwFunction{sselect}{$\widehat{\mathsf{Select}}$}
\SetKwProg{myfunc}{Function}{}{}
\myfunc{\sselect{$N$, $c$}}{
Let $R_1, R_2, \dots, R_s$ be $s$ \emph{fresh} realized samples from $R(\allocs)$

For any event $E(R)$, define $\widehat{\mathbf{Pr}}[E] \coloneqq \frac{1}{s}\sum_{i=1}^s \mathbbm{1}[E(R_i)]$

$\widehat{S} \gets \emptyset$

\While{$\exists e \in N\setminus S,  \widehat{\mathbf{Pr}}[e \in \Span(((R \cap N) \cup \widehat{S}) \setminus \{e\})] > c$} {add such element $e$ to $\widehat{S}$}

\Return{$\widehat{S}$}
}

\caption{Select protected elements for $N$, with samples}
\label{alg:sample-select}
\end{algorithm}

Suppose that the algorithm returns a final set $\widehat{S}$. In order for the chain decomposition approach to work, we need the following property: For some constant $c'$ strictly less than $1$, 
\begin{equation}
\Prx[R(\allocs)]{e \in \Span(((R(\allocs) \cap N)\cup \widehat{S}) \setminus \{e\})} \le c', \ \ \  \forall e \in N \setminus \widehat{S}.\label{equ:property}
\end{equation}

It is tempting to think that if we take $s = \frac{\log n}{\delta^2}$ samples, by concentration, we will be able to use $\widehat{\mathbf{Pr}}[E]$ to answer polynomially many $\mathbf{Pr}[E]$ queries up to $\delta$ error with high probability, which ensures this property with $c' = c - \delta$. This is only true for non-adaptive queries. Here, in Algorithm~\ref{alg:sample-select}, it is crucial that the algorithm maintains a dynamic set $\widehat{S}$ and adaptively discovers new elements that need to be protected. The issue with adaptive queries is that the error in the previous answers reveals information about the samples. 

Looking closer, Algorithm~\ref{alg:sample-select} compares the answer to those $\mathbf{Pr}[E]$ queries with a threshold $c$. Consider a single $\mathbf{Pr}[E]$ query for estimating $\Prx{e \in \Span(((R(\allocs) \cap N)\cup \widehat{S}) \setminus \{e\})}$. If the true answer is less than $c - \delta$ (or greater than $c + \delta$), with high probability, the estimate returned by $\widehat{\mathbf{Pr}}$ will also be below (or above) the threshold $c$. Adaptivity appears when the true answer lies within $[c- \delta, c + \delta]$, whether the element $e$ is added to $\widehat{S}$ or not depends now on the error on the samples. It affects the set $\widehat{S}$, which determines the later queries.

\begin{remark}[Adaptive Data Analysis]
$\widehat{\mathbf{Pr}}[E]$ asks for the fraction of samples that satisfies an event $E$. These are called statistical queries. Answering statistical queries against adaptivity is a central problem in the field of adaptive data analysis. Lower bounds from there show, in general, with $s$ samples, one can answer at most $s^2$ many such adaptive queries~\citep{HU14, SU15}. There are a few exceptions, e.g. when the support of the sample is polynomially bounded \citep{bassily2016algorithmic}, or when all queries are threshold queries, and they are sparse \citep{dwork2009complexity, hardt2010multiplicative} (in the that sense that, only a bounded number of queries have above-threshold answers, which can be solved using the sparse vector technique). However, in our case, the support of $R(\allocs)$ can be as large as $2^n$. Although the queries are threshold queries, they are not sparse. As these existing tools seem insufficient, novel ideas are needed to resolve the adaptivity issue in Algorithm~\ref{alg:sample-select}. 
\end{remark}

To handle the adaptivity issue, let us observe a surprising property of $\sselect(N, c)$: despite the presence of adaptivity, $\sselect(N, c)$ is still sandwiched between two sets that satisfy Property~\eqref{equ:property}.

\begin{lemma}\label{lemma:bound-sample-based-protection}
    Fix any set $N \subseteq U$ and constants $0 < c_1 < c_2 < 1$. We define $\widehat{c} = \frac{c_1 + c_2}{2}$ and $\delta = \frac{c_2 - c_1}{2}$. Using $O\left(\frac{\log n\log\varepsilon^{-1}}{\delta^2}\right)$ samples, $$\select(N,c_2) \subseteq \sselect(N,\widehat{c}) \subseteq \select(N,c_1)$$ holds with probability at least $1-\frac{\poly(\varepsilon)}{\poly(n)}$.
\end{lemma}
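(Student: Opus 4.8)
The plan is to establish the two inclusions separately by induction on the elements added to each set, using concentration of $\widehat{\mathbf{Pr}}$ for the queries that actually get made. The key point — and the reason this works despite adaptivity — is that both $\select(N, c_1)$ and $\select(N, c_2)$ are \emph{fixed} sets (by the uniqueness property cited in Appendix~\ref{appendix:ocrs-protection-uniqueness}), so we may union-bound over a fixed, polynomial-size collection of queries, namely those of the form $\Pr[e \in \Span((R(\allocs) \cap N) \cup S) \setminus \{e\})]$ where $e$ ranges over $N$ and $S$ ranges over $\{\select(N, c_1), \select(N, c_2)\}$ and their subsets that arise along the unique construction orders. Concretely, with $s = O(\delta^{-2} \log n \log \varepsilon^{-1})$ samples, a Chernoff bound plus union bound gives that, with probability $1 - \poly(\varepsilon)/\poly(n)$, every such fixed query is estimated to within $\delta$; call this the \emph{good event}, and condition on it for the rest of the argument.

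\textbf{Upper inclusion $\sselect(N, \widehat{c}) \subseteq \select(N, c_1)$.} I would run $\sselect(N, \widehat{c})$ and track its growing set $\widehat S$, proving the invariant $\widehat S \subseteq \select(N, c_1)$ by induction on the elements added. Suppose the invariant holds and $\sselect$ is about to add $e$ because $\widehat{\mathbf{Pr}}[e \in \Span(((R \cap N) \cup \widehat S) \setminus \{e\})] > \widehat c$. On the good event, the true probability satisfies $\Pr[e \in \Span(((R(\allocs) \cap N) \cup \widehat S) \setminus \{e\})] > \widehat c - \delta = c_1$. Since $\widehat S \subseteq \select(N, c_1)$ and $\Span$ is monotone, $\Pr[e \in \Span(((R(\allocs) \cap N) \cup \select(N, c_1)) \setminus \{e\})] \ge \Pr[e \in \Span(((R(\allocs) \cap N) \cup \widehat S) \setminus \{e\})] > c_1$, so $e$ is an element that $\select(N, c_1)$ would protect — hence $e \in \select(N, c_1)$ by the uniqueness/closure property of the output of Algorithm~\ref{alg:protection} (any element ever witnessed to exceed the threshold, relative to a subset of the final protected set, lies in the final protected set). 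This preserves the invariant, and taking it at termination gives the inclusion. The subtlety to check is exactly this last closure statement about $\select$: that if $T \subseteq \select(N, c_1)$ and $\Pr[e \in \Span(((R(\allocs) \cap N) \cup T)\setminus\{e\})] > c_1$ then $e \in \select(N, c_1)$ — this should follow from the matroid argument underlying uniqueness (spanning is monotone in $S$, so the ``need-to-protect'' condition is monotone, and the greedy-closure process captures all such $e$).

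\textbf{Lower inclusion $\select(N, c_2) \subseteq \sselect(N, \widehat c)$.} Here I would argue that $\sselect$ cannot terminate until it has swallowed all of $\select(N, c_2)$. Fix the unique construction order $e_1, e_2, \dots, e_t$ of $\select(N, c_2)$ with prefixes $S_0 = \emptyset \subsetneq S_1 \subsetneq \cdots \subsetneq S_t = \select(N, c_2)$, where each $e_{j+1}$ satisfies $\Pr[e_{j+1} \in \Span(((R(\allocs) \cap N) \cup S_j) \setminus \{e_{j+1}\})] > c_2$. I claim $\widehat S \supseteq S_j$ is forced: inducting on $j$, if $S_j \subseteq \widehat S$ but $e_{j+1} \notin \widehat S$ at some point, then since $\widehat S \supseteq S_j$, monotonicity gives $\Pr[e_{j+1} \in \Span(((R(\allocs) \cap N) \cup \widehat S)\setminus\{e_{j+1}\})] \ge \Pr[e_{j+1} \in \Span(((R(\allocs) \cap N) \cup S_j)\setminus\{e_{j+1}\})] > c_2$, so on the good event $\widehat{\mathbf{Pr}}[\cdots] > c_2 - \delta = \widehat c$, meaning $\sselect$ cannot have terminated — it must still add some element, and this contradicts termination while $e_{j+1} \notin \widehat S$. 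Hence at termination $\widehat S \supseteq S_t = \select(N, c_2)$. (A mild wrinkle: the good event must cover queries of the form $\Pr[e_{j+1} \in \Span((R(\allocs)\cap N)\cup S) \setminus \{e_{j+1}\})]$ for all $S$ with $S_j \subseteq S \subseteq$ (eventual $\widehat S$); but by the upper inclusion $\widehat S \subseteq \select(N, c_1)$, all these $S$ lie in the fixed interval $[S_j, \select(N, c_1)]$ — a polynomial-size family once we use the fixed construction orders — so a single union bound over this explicitly-described fixed family suffices. It is cleanest to prove the upper inclusion first and then use $\widehat S \subseteq \select(N, c_1)$ as an a-priori bound when setting up the union bound for the lower inclusion.)

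\textbf{Main obstacle.} The real work is not the Chernoff/union-bound step but pinning down \emph{which fixed family of queries} to union-bound over so that the argument is genuinely non-adaptive. The danger is circularity: the queries $\sselect$ makes depend on $\widehat S$, which depends on the samples. The resolution is the sandwich structure itself — every set $\widehat S$ ever touches is trapped between $\emptyset$ and $\select(N, c_1)$, both fixed — so I would make sure to (i) first establish, using only the ``upper'' side of the good event (threshold $c_1$), that $\widehat S \subseteq \select(N, c_1)$ always, then (ii) note this confines all relevant queries to a deterministic polynomial-size set indexed by $(e, S)$ with $S \subseteq \select(N, c_1)$, and finally (iii) take one union bound over that deterministic set with both sides of the $\delta$-window, absorbing the $\log\varepsilon^{-1}$ factor to push the failure probability down to $\poly(\varepsilon)/\poly(n)$. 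Everything else is monotonicity of $\Span$ and the uniqueness/closure property of Algorithm~\ref{alg:protection}'s output.
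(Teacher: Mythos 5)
Your high-level plan — sandwich via two separate inclusions, induction on added elements, monotonicity of $\Span$ — matches the paper, but there is a genuine gap in how and \emph{where} you apply concentration. In the upper-inclusion argument you write: ``on the good event, the true probability satisfies $\Pr[e \in \Span(((R(\allocs) \cap N) \cup \widehat S) \setminus \{e\})] > \widehat c - \delta$.'' This applies the Chernoff control at the \emph{adaptive} set $\widehat S$, which is exactly what cannot be done: $\widehat S$ depends on the samples, so the deviation at $\widehat S$ is not covered by a union bound over fixed queries. You notice the issue and try to patch it in the ``mild wrinkle'' and ``main obstacle'' paragraphs by union-bounding over all $(e,S)$ with $S$ in the interval $[\emptyset, \select(N,c_1)]$ (respectively $[S_j, \select(N,c_1)]$), claiming this family is polynomial-size. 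It is not — it has size $2^{|\select(N,c_1)|}$, which is exponential in general — so the patch fails, and the proof as written does not close.

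The paper's proof avoids the issue by swapping the order of the two steps, and this swap is the whole point of the lemma. Instead of ``concentration at $\widehat S$, then monotonicity of $\Pr$ to reach $\select(N,c_1)$,'' the paper does ``monotonicity of $\widehat{\mathbf{Pr}}$ to move from $\widehat S$ to the fixed set $\select(N,c_1)$, \emph{then} concentration at $\select(N,c_1)$.'' The key observation making this work is that $\widehat{\mathbf{Pr}}[e \in \Span(((R\cap N)\cup S)\setminus\{e\})]$ is itself monotone in $S$ for each fixed multiset of samples (because $\Span$ is monotone sample-by-sample), and this monotonicity holds deterministically, regardless of adaptivity. Once the empirical quantity has been pushed to the fixed set $S_1 = \select(N,c_1)$, Chernoff applies with a union bound over only the $n$ choices of $e$; symmetrically for the lower inclusion, concentration is applied at the fixed construction prefixes $S_j$ of $\select(N,c_2)$ (at most $n$ of them, each with $n$ choices of $e$), and then monotonicity of $\widehat{\mathbf{Pr}}$ transfers the inequality to the random $\widehat S$. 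Your proposal has the right ingredients — both monotonicity and Chernoff — but composes them in the wrong order, and the resulting union bound is exponential rather than polynomial. Reordering the two steps (deterministic monotonicity of $\widehat{\mathbf{Pr}}$ first, then Chernoff at the fixed sets) is the missing idea.
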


\begin{proof}
    We prove these two cases separately:

    \begin{itemize}
        \item $\sselect(N,\widehat{c}) \subseteq \select(N,c_1)  $.

        Let $S_1 = \select(N,c_1)$. Consider the process of $\sselect(N,\widehat{c})$ which dynamically maintains the set $\widehat{S}$. We claim that $\widehat{S} \subseteq S_1$ throughout the process. We prove this via induction. Initially, we have $\widehat{S} = \emptyset \subseteq S_1$. Then at any step, if we are going to add $e$ to $\widehat{S}$, we must have $$\widehat{\Pr}[e \in \Span(((R(\allocs) \cap N) \cup \widehat{S} ) \setminus \{e\})] > \widehat{c}.$$
        By induction hypothesis, we know that $\widehat{S} \subseteq S_1$, which implies $(R(\allocs) \cap N) \cup \widehat{S} \subseteq (R(\allocs) \cap N) \cup S_1$. Hence, by monotonicity, $$\widehat{\Pr}[e \in \Span(((R(\allocs) \cap N) \cup S_1 ) \setminus \{e\})]] > \widehat{c}.$$
        Note $S_1 = \mathrm{Select}(N,c_1)$ is a fixed set that do not depend on samples. As we take $O\left(\frac{\log n \log\varepsilon^{-1}}{\delta^2}\right)$ samples, by Chernoff bound, with probability at least $1-\frac{\poly(\varepsilon)}{\poly(n)}$, we have 
        \begin{equation} \label{equ:estimation-S1}
        \left|\widehat{\Pr}[e \in \Span(((R(\allocs) \cap N) \cup S_1 ) \setminus \{e\})]] - \Pr[e \in \Span(((R(\allocs) \cap N) \cup S_1 ) \setminus \{e\})]\right| \leq \delta
        \end{equation}
        A union bound over all $n$ possibilities of $e$ shows that with probability $1-\frac{\poly(\varepsilon)}{\poly(n)}$, Equation~\eqref{equ:estimation-S1} holds throughout the process. Hence, for the newly added element $e$, we have $$\Pr[e \in \Span(((R(\allocs) \cap N) \cup S_1 ) \setminus \{e\})] > \widehat{c} - \delta \geq c_1.$$
        Since all elements in $U \setminus S_1$ are $(1 - c_1)$-selectable, we know that $e$ must be in $S_1$.

        \item $ \select(N,c_2) \subseteq \sselect(N,\widehat{c}) $.

        First of all, let $\widehat{S} = \sselect(N,\widehat{c})$. When \Cref{alg:sample-select} terminates, we know that for all $e \in U \setminus \widehat{S}$, $$\widehat{\mathbf{Pr}}[e \in \Span(((R \cap N) \cup \widehat{S}) \setminus \{e\})] \leq c.$$
        Consider the process of $\select(N, c_2)$, which dynamically maintains a set $S_2$. We claim that with high probability over the randomness of $\widehat{S}$, it holds that $S_2 \subseteq \widehat{S}$ through out this process. Again, we prove this by induction. Initially, we have $S_2 = \emptyset \subseteq \widehat{S}$. At any step, if we are going to add $e$ to $S_2$, we must have 
        $$\Pr[e \in \Span(((R(\allocs) \cap N) \cup S_2) \setminus \{e\})] > c_2.$$

        Note that the process of generating $S_2$ (i.e., the procedure $\select(N,c_2)$) does not depend on our samples. Similar to the previous case, by Chernoff bound, we get that with probability at least $1-\frac{\poly(\varepsilon)}{\poly(n)}$, 
        $$\widehat{\Pr}[e \in \Span(((R(\allocs) \cap N) \cup S_2) \setminus \{e\})] > c_2 - \delta > c_2 - \delta > \widehat{c}.$$

        By the induction hypothesis, we know that with $1-\frac{\poly(\varepsilon)}{\poly(n)}$ probability, $S_2 \subseteq \widehat{S}$, which implies $(R(\allocs) \cap N) \cup S_2 \subseteq (R(\allocs) \cap N) \cup \widehat{S}$. Hence, with the same high probability, 
        $$\widehat{\Pr}[e \in \Span(((R(\allocs) \cap N) \cup \widehat{S}) \setminus \{e\})] > \widehat{c}.$$

        If this happens, $e$ must be in the set $\widehat{S}$. Finally, we finish the proof by observing that we use union bound over at most $\poly(n)$ induction steps. \qedhere

    \end{itemize}
\end{proof}

According to Lemma~\ref{lemma:bound-sample-based-protection}, if we let $S_{c - \delta} = \select(N,c - \delta)$ and $S_{c + \delta}= \select(N,c + \delta)$, the set $\widehat{S}_c=\widehat{S}(N,c)$ must satisfy $S_{c + \delta} \subseteq \widehat{S}_c \subseteq S_{c - \delta}$ with high probability. Note that $S_{c - \delta}$ and $S_{c + \delta}$ are both valid choices for the set $S$ that satisfy \Cref{equ:property} with different constants. Since $\widehat{S}_c$ is sandwiched between them, it is very tempting to think that the set $\widehat{S}_c$ must also satisfy \Cref{equ:property}. If so, we would have proved that the original OCRS of \citet{FSZ16} (\Cref{alg:sample-chain-decomposition}) works with samples. However, this is not the case. 

To see this, consider the quantity $r_e \coloneqq \Pr[e \in \Span(((R(\allocs) \cap N)\cup \widehat{S}_c) \setminus \{e\})]$ for every element $e \in N \setminus \widehat{S}_c$: 
\begin{itemize}
    \item It cannot be the case that $e \in S_{c + \delta}$. If this is the case, we are guaranteed that $e \in \widehat{S}_c$ as well, which contradicts $e \in N \setminus \widehat{S}_c$.
    \item If $e \in N \setminus S_{c - \delta}$, we are guaranteed by \Cref{alg:protection} that $$\Pr[e \in \Span(((R(\allocs) \cap N)\cup S_{c - \delta}) \setminus \{e\})] \leq c - \delta.$$ Since $S_c \subseteq S_{c - \delta}$, by monotonicity, the quantity $r_e$ that we care about is also at most $c - \delta$.
    \item If $e \in S_{c - \delta} \setminus S_{c + \delta}$, on the one hand, $e$ is in $S_{c - \delta}$, so we do not have any non-trivial upper bound for $$\Pr[e \in \Span(((R(\allocs) \cap N)\cup S_{c - \delta}) \setminus \{e\})].$$
    On the other hand, $e \in N \setminus S_{c + \delta}$, although we know that $$\Pr[e \in \Span(((R(\allocs) \cap N)\cup S_{c + \delta}) \setminus \{e\})] \leq c + \delta.$$ But $\widehat{S}_c$ is (potentially) a larger set than $S_{c + \delta}$. It could be the case that $r_e$ is much larger than $c + \delta$. So we cannot hope to get a non-trivial upper bound for $r_e$.
\end{itemize}

In conclusion, the only problematic case is when $e \in S_{c - \delta} \setminus S_{c + \delta}$, where we do not have any guarantee of their selectability.

\subsection{Against Adaptivity using Randomization}

To illustrate our idea, we will first present a simple toy example to warm up. Then we will slightly generalize it to our actual construction.

\paragraph{Toy Example.} Our main idea for resolving such an adaptivity issue is simple to state: by choosing the protected set $\widehat{S}$ to be either $\widehat{S}_c$ or $\widehat{S}_{c+2\delta}$ uniformly at random, i.e.
$$\widehat{S} = \begin{cases}
    \sselect(N,c)& \text{w.p.}~\frac{1}{2},\\
    \sselect(N,c+2\delta)& \text{w.p.}~\frac{1}{2}.\\
\end{cases}$$
By Lemma~\ref{lemma:bound-sample-based-protection} and union bound, we know with high probability,
$$S_{c+3\delta} \subseteq \widehat{S}_{c+2\delta} \subseteq S_{c+\delta} \subseteq \widehat{S}_c \subseteq S_{c-\delta}.$$
We know if one chooses to protect $\widehat{S}_c$ (i.e., $\widehat{S} =  \sselect(N,c)$), we have no guarantee for the elements in $S_{c-\delta} \setminus S_{c+\delta}$. Similarly, if one chooses to protect $\widehat{S}_{c + 2\delta}$ (i.e., $\widehat{S} =  \sselect(N,c + 2\delta)$), elements in $S_{c+\delta} \setminus S_{c+3\delta}$ are problematic. So neither of these two works. But the crucial observation is that $S_{c-\delta} \setminus S_{c+\delta}$ and $S_{c+\delta} \setminus S_{c+3\delta}$ are two disjoint sets. If we just randomize between these two, we can imagine having guarantees for all elements with only a factor-of-$2$ loss in selectability.  

More specifically, for any element $e \in N$, 
\begin{itemize}
    \item with probability $1/2$, it is not in the problematic set of elements, which means with high probability, either $e \in \widehat{S}$ or $\Prx[R(\allocs)]{e \in \Span(((R(\allocs) \cap N)\cup \widehat{S}) \setminus \{e\})}  \leq c + 3\delta$;
    \item with the other $1/2$ probability, it is in the problematic set and we do not have any guarantee. 
\end{itemize}

Note that this only achieves \Cref{equ:property} with probability $1/2$. There is still a probability of $1/2$ that a bad event occurs: $e \in N \setminus \widehat{S}$ but $\Prx[R(\allocs)]{e \in \Span(((R(\allocs) \cap N)\cup \widehat{S}) \setminus \{e\})}$ is too large. 

This becomes a problem if we plug this procedure into the chain decomposition procedure (\Cref{alg:chain-decomposition}). From \Cref{lemma:OCRS:MatroidTerminate}, we know that the chain decomposition procedure has at most $\ell = O_{\varepsilon}(\log n)$ rounds. So it calls the procedure $\sselect$ $\ell$ times. Let us imagine the worst case: For every round in which $e$ is not in the problematic set, let us say that $e$ is in $\widehat{S}$ and the decomposition procedure continues with $e \in N$. Once we meet a round where $e$ is in the problematic set (with probability $1/2$), we lose all guarantees for the element $e$. But it is highly likely (with probability $1 - (1/2)^{O_{\varepsilon}(\log n)}$) that we will see one such problematic round.

\paragraph{Our Construction.} In order to fix this issue and achieve constant selectability, our final scheme needs a sequence $c_1 < c_2 < \dots < c_{k + 1}$ of $k + 1$ different thresholds that are equally spaced in $[c_1, c_{k + 1}]$. The scheme picks $j \in [k]$ uniformly at random and then uses $\widehat{c} = \frac{c_j + c_{j + 1}}{2}$. Note that for each round $i$, we are using $s$ fresh samples for $\sselect(\widehat{N}_i, \widehat{c})$. Finally, we are going to take $k = O_{\varepsilon}(\log n)$, so that the probability that the bad event happens, $1 - (1 - 1/k)^{O_{\varepsilon}(\log n)}$, is small enough. 

\begin{algorithm}
\SetKwFunction{sdecompose}{$\widehat{\mathsf{Decompose}}$}
\SetKwProg{myproc}{Procedure}{}{}
\myproc{\sdecompose{$U$, $c_1$, $c_2$, $\dots$, $c_{k + 1}$}}{
$\widehat{N}_0 \gets U$ 

$\ell \gets 0$

\While {$\widehat{N_\ell} \neq \emptyset$} { 
Sample $j_\ell \sim \mathrm{Uniform}([k])$ and let $\widehat{c} \gets \frac{c_{j_\ell} + c_{j_\ell + 1}}{2}$

$\widehat{N}_{\ell + 1} \gets \ $\sselect{$\widehat{N_\ell}$, $\widehat{c}$} %

$\ell \gets \ell + 1$
}
}
\caption{Chain decomposition, with samples}
\label{alg:sample-chain-decomposition}
\end{algorithm}

\subsection{Proof of \Cref{thm:matroid_ocrs}}

In this section, we complete the proof of Theorem~\ref{thm:matroid_ocrs} using~\Cref{alg:sample-chain-decomposition}. In our proof, we set \(b = \frac12\), which means that we shrink the ex-ante active probability $\allocs$ by half to ensure $\allocs \in \frac12 \cdot \polytope_\matroid$. Let 
\[\varnothing = \widehat{N}_\ell \subsetneq \widehat{N}_{\ell - 1} \subsetneq \cdots \subsetneq \widehat{N}_0 = U=[n]\]
be the chain decomposition generated by Algorithm~\ref{alg:sample-chain-decomposition}. 
Based on such decomposition, we aim to demonstrate that running the modified greedy algorithm, \Cref{alg:modified-greedy}, for each layer $\widehat{N}_i \setminus \widehat{N}_{i+1}$ separately achieves \((\frac{1}{4} - \varepsilon)\)-selectability using \( O_{\varepsilon}(\log^4 n) \) samples.

We first delineate the parameters for \Cref{alg:sample-chain-decomposition}, setting $k$ as $\frac{\log n}{\log(1 + \varepsilon) \log_{\frac{1}{4}}(1 - \varepsilon)} = \Theta_{\varepsilon}(\log n)$ and defining $\{c_j\}_{j = 1}^{k + 1}$ as an arithmetic sequence with initial term $c_1=\frac{1}{2} + \frac{\varepsilon}{2}$ and terminal term $c_{k+1}=\frac{1}{2} + \varepsilon$. In each layer $i$, \(s = \Theta\left(\frac{k^2 \log n \log\varepsilon^{-1}}{\varepsilon^2}\right) = \Theta_{\varepsilon}\InParentheses{\log^3 n}\) fresh samples are needed for $\sselect{$\widehat{N}_i, \widehat{c}$}$. %

\begin{lemma}
\label{lem:sample-chain-decomposition-property}
Let \(\ell\) denote the number of layers in the chain decomposition generated by \Cref{alg:sample-chain-decomposition}, with parameters set as above. The following holds with probability at least \(1 - \frac{\poly(\varepsilon)}{\poly(n)}\):
\begin{itemize}
    \item \(\ell = O_{\varepsilon}(\log n).\)
    \item \(\InParentheses{1 - \frac{1}{k}}^\ell \geq 1 - \varepsilon\).
\end{itemize}
\end{lemma}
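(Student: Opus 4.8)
The plan is to establish the two bullets in sequence, since the first directly feeds into the second. For the bound \(\ell = O_\varepsilon(\log n)\), the idea is to track \(\rank(\widehat N_i)\) and show it shrinks geometrically. By \Cref{lemma:bound-sample-based-protection}, each call \(\sselect(\widehat N_i, \widehat c)\) with \(\widehat c = \tfrac{c_{j_\ell} + c_{j_\ell + 1}}{2}\) returns a set sandwiched (with probability \(1 - \poly(\varepsilon)/\poly(n)\)) between \(\select(\widehat N_i, c_{j_\ell + 1})\) and \(\select(\widehat N_i, c_{j_\ell})\); in particular \(\widehat N_{i+1} \subseteq \select(\widehat N_i, c_{j_\ell})\), and since \(c_{j_\ell} \ge c_1 = \tfrac12 + \tfrac\varepsilon2\) while \(b = \tfrac12\), \Cref{lemma:OCRS:MatroidTerminate} gives \(\rank(\widehat N_{i+1}) < \tfrac{b}{c_{j_\ell}} \rank(\widehat N_i) \le \tfrac{1/2}{1/2 + \varepsilon/2}\rank(\widehat N_i) = \tfrac{1}{1+\varepsilon}\rank(\widehat N_i)\). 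First I would apply a union bound over the (at most \(n\), since rank strictly drops each round) layers so that this sandwiching holds simultaneously in every round; then \(\rank(\widehat N_\ell) < (1+\varepsilon)^{-\ell} n\), and since the rank is a nonnegative integer, once \((1+\varepsilon)^{-\ell} n \le 1\) we must have \(\widehat N_\ell\) of rank \(0\), hence empty (the \(\widehat N_i\) are nested and strictly decreasing in rank). This forces \(\ell \le \log_{1+\varepsilon} n + O(1) = O_\varepsilon(\log n)\).

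For the second bullet, I would plug in the chosen value \(k = \tfrac{\log n}{\log(1+\varepsilon)\,\log_{1/4}(1-\varepsilon)} = \Theta_\varepsilon(\log n)\). On the event from the first part, \(\ell \le \log_{1+\varepsilon} n + O(1)\), so
\[
\InParentheses{1 - \frac1k}^\ell \ge \InParentheses{1 - \frac1k}^{\log_{1+\varepsilon} n + O(1)}.
\]
Using \((1 - 1/k)^k \ge 1/4\) for \(k \ge 2\) (or more simply \(\ln(1 - 1/k) \ge -\tfrac{1}{k-1} \ge -\tfrac{2}{k}\)), one gets \((1 - 1/k)^\ell \ge 4^{-\ell/k}\) for \(k\) large enough, and then \(\ell / k \le \tfrac{\log_{1+\varepsilon} n + O(1)}{k}\). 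Substituting the definition of \(k\), the exponent becomes (up to the additive \(O(1)\), which only costs a \((1 - o_n(1))\) factor absorbable into the constants) at most \(\log_{1/4}(1-\varepsilon)\), so \(4^{-\ell/k} \ge 4^{-\log_{1/4}(1-\varepsilon)} = (1-\varepsilon)\). Hence \(\InParentheses{1 - \frac1k}^\ell \ge 1 - \varepsilon\) as claimed. The probability bound \(1 - \poly(\varepsilon)/\poly(n)\) is inherited from the single application of the union bound over layers in the first part.

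The main obstacle I anticipate is bookkeeping the failure probabilities and the \(O(1)\) slack in \(\ell\) cleanly: \Cref{lemma:bound-sample-based-protection} is stated for a fixed \(N\), but here each \(\widehat N_i\) is random (it depends on the previous fresh samples and the random choices \(j_0, \dots, j_{i-1}\)). I would handle this by conditioning round-by-round — given \(\widehat N_i\) and \(j_i\), the set \(S_2 := \select(\widehat N_i, c_{j_i+1})\) and \(S_1 := \select(\widehat N_i, c_{j_i})\) are deterministic functions of \(\widehat N_i\), so the fresh samples \(R_1, \dots, R_s\) used in round \(i\) are independent of them and Chernoff applies — and then union-bounding over the at most \(n\) rounds and the at most \(n\) elements \(e\) within each round, exactly as in the proof of \Cref{lemma:bound-sample-based-protection}. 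The number of fresh samples per round, \(s = \Theta(k^2 \log n \log \varepsilon^{-1} / \varepsilon^2)\), is chosen so that the per-round, per-element failure probability is \(\poly(\varepsilon)/\poly(n)\) small enough to survive these union bounds; I would just verify that the \(\delta = \Theta(\varepsilon/k)\) spacing of consecutive \(c_j\)'s (which is \((c_{k+1} - c_1)/k = \tfrac{\varepsilon/2}{k}\)) matches the \(\delta\) required by \Cref{lemma:bound-sample-based-protection}, i.e. that \(s = \Theta(\log n \log \varepsilon^{-1}/\delta^2)\) with this \(\delta\), which it does. The rest is routine asymptotics.
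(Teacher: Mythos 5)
Your proposal is correct and follows essentially the same route as the paper's proof: use Lemma~\ref{lemma:bound-sample-based-protection} to sandwich $\widehat{N}_{i+1}$ between $\select(\widehat{N}_i, c_{j_i+1})$ and $\select(\widehat{N}_i, c_{j_i})$, then invoke Lemma~\ref{lemma:OCRS:MatroidTerminate} with $c \ge c_1 = \tfrac12 + \tfrac\varepsilon2$ and $b = \tfrac12$ to get the $\tfrac{1}{1+\varepsilon}$ geometric rank decay, and finally plug the resulting $\ell = O_\varepsilon(\log n)$ into $(1-1/k)^\ell = ((1-1/k)^k)^{\ell/k} \ge (1/4)^{\ell/k} \ge 1-\varepsilon$ using the chosen $k$. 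Your added care about the round-by-round conditioning (so that Lemma~\ref{lemma:bound-sample-based-protection} legitimately applies to the random $\widehat{N}_i$) and the $O(1)$ additive slack in $\ell$ is sound and in fact slightly tightens up an implicit step in the paper's exposition, but it is the same argument.
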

\begin{proof}

Consider any protected set \(\widehat{N}_i\) and an index \(j_i\) uniformly selected from \([k]\) in layer \(i\). Let \(\delta = \frac{\varepsilon}{4k}\) represent the gap between \(\frac{c_{j_i} + c_{j_i + 1}}{2}\) and both \(c_{j_i}\) and \(c_{j_i+1}\). Given the parameters previously defined, we take \(s = \Theta\left(\frac{k^2 \log n \log \varepsilon^{-1}}{\varepsilon^2}\right) = \Theta\left(\frac{\log n \log \varepsilon^{-1}}{\delta^2}\right)\) new samples during the execution of $\sselect{$\widehat{N}_i, \widehat{c}$}$.  By applying \Cref{lemma:bound-sample-based-protection}, it is established that with probability at least \(1 - \frac{\poly(\varepsilon)}{\poly(n)}\):
 \[\select{$\widehat{N}_i, c_{j_i + 1}$} \subseteq \widehat{N}_{i+1} = \sselect{$\widehat{N}_i, \frac{c_{j_i} + c_{j_i + 1}}{2}$} \subseteq \select{$\widehat{N}_i, c_{j_i}$} .\]

It always holds that \(\select{$\widehat{N}_i, c_{j_i}$} \subseteq \select{$\widehat{N}_i, c_1$}\). Therefore, 
\begin{align*}
\begin{split}
    \rank\InParentheses{\widehat{N}_{i+1}} &\leq \rank\InParentheses{\select{$\widehat{N}_i, c_1$}}\\
        & < \frac{1}{1 + \varepsilon}\rank\InParentheses{\widehat{N}_i}.
\end{split} 
\end{align*}where the last inequality follows from \Cref{lemma:OCRS:MatroidTerminate} with \(b = \frac{1}{2}\) and \(c = \frac{1}{2} + \frac{1}{2}\varepsilon\). Since \Cref{alg:sample-chain-decomposition} terminates when \(\rank(\widehat{N}_\ell) < 1\), it follows that 
\begin{align*}
    \ell \leq \log_{1 + \varepsilon} \rank\InParentheses{\widehat{N}_0} \leq \frac{\log n}{\log \InParentheses{1 + \varepsilon}} = O_{\varepsilon}(\log n).
\end{align*}

Now, to demonstrate the second property, recall that \(k\) is defined as \(\frac{\log n}{\log(1 + \varepsilon) \log_{\sfrac{1}{4}}(1 - \varepsilon)}\). It thus follows that
\begin{equation}\label{eq:k-large-enough}
    \frac{\ell}{k} \leq \log_{\sfrac14}{(1-\varepsilon)}.
\end{equation}

Therefore,
\begin{align*}
    \left(1 - \frac{1}{k}\right)^\ell & = \left(\left(1 - \frac{1}{k}\right)^k\right)^{\frac{\ell}{k}}
     \geq \left(\frac{1}{4}\right)^{\frac{\ell}{k}}
    \geq 1 - \varepsilon,
\end{align*}
where the first inequality leverages the property that \(\left(1 - \frac{1}{k}\right)^k\) is an increasing function for \(k \geq 2\), and \(k\) is trivially at least \(2\). The final inequality is justified by \Cref{eq:k-large-enough}.
\end{proof}

Our algorithm employs \(O_{\varepsilon}\InParentheses{\log^3 n}\) samples per layer across \(\ell\) layers in total. Combining with \Cref{lem:sample-chain-decomposition-property}, it is evident that the total number of samples required by our algorithm does not exceed \(O_{\varepsilon}\InParentheses{\log^4 n}\).\footnote{Should the conditions specified in \Cref{lem:sample-chain-decomposition-property} not be met, \Cref{alg:sample-chain-decomposition} terminates immediately and announces failure. This approach has a negligible impact on the algorithm's selectability (in the order of \(\frac{\poly(\varepsilon)}{\poly(n)}\)) and guarantees that the algorithm deterministically requires \(O_{\varepsilon}(\log^4 n)\) samples.}

The remaining task is to demonstrate the selectability of \Cref{alg:sample-chain-decomposition}. Consider any element \(e \in U\); our objective is to establish that \[\Pr[e \text{ is accepted } \mid e \text{ is active}] \geq \frac{1}{2} - O(\varepsilon).\] Given that \(\allocs\) is contained within \(\frac{1}{2} \cdot \polytope_{\matroid}\), it is deduced that our algorithm achieves a selectability of \(\frac{1}{4} - O(\varepsilon)\). For this purpose, we will introduce a stochastic process to simulate the execution of \Cref{alg:sample-chain-decomposition} with respect to a fixed element $e$.

\paragraph{Stochastic Process} 
Fix an element $e \in U$.
Initiating with \(i = 0\) and \(\widehat{N_0} = U=[n]\), our stochastic process sequentially reveals the randomness, i.e., \(j_0, j_1,\dots,j_\ell\), across each layer. Assuming the process is at layer \(i\), with the protected set \(\widehat{N}_i\) containing element \(e\), we proceed to define $N_{i + 1, t} = \select{$\widehat{N}_i, c_t$} $ for each \(t \in [k + 1]\). Specifically, we set \(N_{i + 1, 0} = \widehat{N}_i\) and \(N_{i + 1, k + 2} = \varnothing\). It is evident that the sequence \(\{N_{i, t}\}_{t=1}^{k+2}\) satisfies the following property:
\[\varnothing = N_{i, k+2} \subseteq N_{i, k+1} \subseteq \cdots \subseteq N_{i, 1} \subseteq N_{i, 0} = \widehat{N}_i.\]

Given that \(e\in \widehat{N}_i\), there must exist a unique \(t^{*} \in \{0, 1,\dots, k+1 \}\) such that \(e \in N_{i, t^*} \setminus N_{i, t^* + 1}\). We proceed by revealing the randomness and selecting \(j_i\) uniformly at random from the set \([k]\). With the choice of \(j_i\), the next step involves defining \(\widehat{N}_{i + 1} = \sselect{$\widehat{N}_i, (c_{j_i} + c_{j_i + 1})/2$}\). \Cref{lem:sample-chain-decomposition-property} ensures that\footnote{Similarly, should this property not be upheld, both our algorithm and the stochastic process will be terminated immediately. This affects the algorithm's selectability by only a negligible margin.} 
 \[N_{i, j_i + 1} = \select{$\widehat{N}_i, c_{j_i + 1}$} \subseteq \widehat{N}_{i+1} \subseteq \select{$\widehat{N}_i, c_{j_i}$} 
 = N_{i, j_i}.\]

Depending on the specific outcome of \(j_i\), we examine the subsequent scenarios:
\begin{itemize}
    \item \textbf{Case 1:} If \(j_i < t^*\), this indicates that 
\[e \in N_{i, t^*} \subseteq N_{i, j_i + 1} \subseteq  \widehat{N}_{i + 1}.\]
Consequently, \(e\) progresses to a subsequent layer. We then increase \(i\) by 1 to move to layer \(i + 1\).

    \item  \textbf{Case 2:} If \(j_i > t^*\), it implies that 
    \[\widehat{N}_{i + 1} \subseteq N_{i,j_i} \subseteq N_{i, t^* + 1}.\]
    
Observe that \(e \notin N_{i, t^* + 1}\). This observation implies that \(e\) is also not a member of \(\widehat{N}_{i + 1}\) and \(N_{i, j_l}\). Consequently, the probability of \(e\) being  selected in this scenario can be expressed as follows:
\begin{align*}
\begin{split}
\Pr[e \text{ is selected } \mid e \text{ is active}] &\geq 1 - \Pr[e \in \Span(((R(\allocs) \cap \widehat{N}_i) \cup \widehat{N}_{i+1})) \setminus \{e\})]\\
& \geq 1 - \Pr[e \in \Span(((R(\allocs) \cap \widehat{N}_i) \cup N_{i, j_i})) \setminus \{e\})] \\
& \geq \frac{1}{2} - \varepsilon.
\end{split}
\end{align*}
The first inequality is derived from the premise that \(e \in \widehat{N}_i\setminus \widehat{N}_{i + 1}\) while we are employing a greedy strategy on the matroid \(\matroid |_{\widehat{N}_i} / \widehat{N}_{i + 1}\). The subsequent inequality emerges from the inclusion \(\widehat{N}_{i +1}\subseteq N_{i, j_i}\), and the final inequality is in accordance with the definition that 
\[N_{i+1, j_i} = \select{$\widehat{N}_i, c_{j_i}$}.\]
Hence, in this scenario, our algorithm demonstrates effectiveness, thereby concluding the stochastic process.

\item  \textbf{Case 3:} If \(j_i = t^*\), the stochastic process immediately terminates with no guarantee for $e$.
\end{itemize}

Our algorithm fails at element $e$ only when the stochastic process ends in Case 3. As \(j_i\) is drawn uniformly at random from \([k]\), the stochastic process falls into Case 3 with a probability of \(\frac{1}{k}\) in each layer. As there are at most \(\ell\) layers in total, the selectability is given as
\begin{align*}
\begin{split}
    \Pr[e \text{ is selected } \mid e \text{ is active}] \geq \InParentheses{\frac12 - \varepsilon} \cdot \InParentheses{1 - \frac{1}{k}}^\ell \geq \frac{1}{2} - O(\varepsilon).
\end{split}
\end{align*} where the last inequality follows from \Cref{lem:sample-chain-decomposition-property}. We hereby conclude our proof.

\bibliographystyle{ACM-Reference-Format}
\bibliography{bibs}

\appendix

\section{Missing Proofs from Section \ref{sec:prophet}}

\subsection{Tie-Breaking for Distributions with Point Masses}
\label{appendix:point_mass}

Our algorithm for prophet inequality in Section \ref{sec:prophet} can only work with continuous distributions.
For distributions with point masses, we use the following tie-breaking technique from~\citep{RWW20} to make them continuous.

For a distribution $\mathcal{D}$ with point masses, we can define $\mathcal{D}'$ as the bivariate distribution $\mathcal{D}\times U([0,1])$, the product of $\mathcal{D}$ and the uniform distribution over $[0,1]$.
Then for $(x,t),(y,u)$ sampled from $\mathcal{D}'$, define $(x,t) > (y,u)$ if either $x > y$ or $x = y$ and $t>u$.
Observe that the equality holds with zero probability.
Therefore, if we define $F(x,t)=\Prx[(y,u)\sim \mathcal{D}']{(x,t)>(y,u)}$, we gain the desired property that $F$ is continuous and $F(x,t) > F(y,u) \Leftrightarrow (x,t) > (y,u)$.

\subsection{Proof of Lemma \ref{lemma:threshold_concentration}}
\label{appendix:threshold_concentration}

\begin{proof}[Proof of Lemma \ref{lemma:threshold_concentration}]

Let the C.D.F. of $\tau_i$ be $\disttaui$. From Observation \ref{obs:is_threshold}, we know that $i \in \OPT$ if and only if $v_i > \tau_i$.\footnote{We neglect the case where $v_i = \tau_i$ since it has zero probability.} Therefore, by definition, $p^i(v) = \Pr[i \in \OPT \mid v_i = v] = \Pr[v_i > \tau_i \mid v_i = v] = \disttaui(v)$. Together with $p_k = \varepsilon(1+\varepsilon)^k-\varepsilon^2$, we have that $p^i(T_i^{(k)}) \in [p_k, p_k + 2\varepsilon]$ if and only if $\disttaui(T_i^{(k)}) \in [\varepsilon(1+\varepsilon)^k-\varepsilon^2, \varepsilon(1+\varepsilon)^k+\varepsilon^2]$.

For each sample $1 \le s \le N$, $i \in U$ and $0 \le k < m$, let $l_{i,k}^s = \I{\disttaui(\tau_i^s) < \varepsilon(1+\varepsilon)^k-\varepsilon^2}$ and $r_{i,k}^s = \I{\disttaui(\tau_i^s) < \varepsilon(1+\varepsilon)^k+\varepsilon^2}$. 
By Chernoff bound we have
\begin{align*}
    \Prx{\sum_{s=1}^N l_{i,k}^s \ge \varepsilon(1+\varepsilon)^kN}&\le e^{-N\varepsilon^4/3} \le \frac{\varepsilon}{2nm},\\
    \Prx{\sum_{s=1}^N r_{i,k}^s \le \varepsilon(1+\varepsilon)^kN}&\le e^{-N\varepsilon^4/2} \le \frac{\varepsilon}{2nm}.
\end{align*}

Here $n$ is the number of elements in $U$, and $N$ is the number of samples. Observe that if $\disttaui(T^{(k)}_{i}) \notin [\varepsilon(1+\varepsilon)^k-\varepsilon^2,\varepsilon(1+\varepsilon)^k+\varepsilon^2]$, we either have $\sum_{s=1}^N l_{i,k}^s \ge \varepsilon(1+\varepsilon)^kN$ or $\sum_{s=1}^N r_{i,k}^s \le \varepsilon(1+\varepsilon)^kN$, thus
\begin{equation*}
    \Prx{\disttaui(T^{(k)}_{i}) \notin [\varepsilon(1+\varepsilon)^k-\varepsilon^2,\varepsilon(1+\varepsilon)^k+\varepsilon^2]} \le \frac{\varepsilon}{nm}.
\end{equation*}
Then by a union bound over all $i \in U$ and $0 \le k < m$, we conclude that
\begin{equation*}
\Prx{\exists i, k \text{ s.t. } \disttaui(T_i^{(k)}) \notin [\varepsilon(1+\varepsilon)^k-\varepsilon^2,\varepsilon(1+\varepsilon)^k+\varepsilon^2]} \le \varepsilon,
\end{equation*}
which implies $\Prx{\forall i, k \text{ s.t. } \disttaui(T_i^{(k)}) \in [\varepsilon(1+\varepsilon)^k-\varepsilon^2,\varepsilon(1+\varepsilon)^k+\varepsilon^2]} \ge 1-\varepsilon$.

Thus $\Prx{\forall i, k \text{ s.t. } p^i(T_i^{(k)})  \in [p_k, p_k + 2\varepsilon]} \ge 1-\varepsilon$.
\end{proof}

\subsection{Proof of \Cref{lemma:monotone_mapping}}
\label{appendix:monotone-exchange}

We will need the strong basis exchange property of matroids.

\begin{lemma}[Strong basis exchange, Theorem 39.12 in \cite{schrijver2003combinatorial}] \label{lemma:strong_exchange}
 Let $\matroid = (U, \indset)$ be a matroid. Let $A$ and $B$ be two bases and $x \in A \setminus B$. Then there exists $y \in B \setminus A$ such that $A - x + y$ and $B - y + x$ are both bases.
\end{lemma}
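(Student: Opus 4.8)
The plan is to prove the strong basis exchange property by working with the fundamental circuit of $x$ with respect to $B$. Since $B$ is a basis and $x \notin B$, the set $B + x$ is dependent, and a standard matroid fact (derivable from the circuit elimination axiom) is that it contains a \emph{unique} circuit $C$, which moreover satisfies $x \in C$ and $C - x \subseteq B$. The candidates for the exchange element $y$ will be exactly the elements of $C - x$: for \emph{any} such $y$, removing it from $B + x$ destroys the only circuit, so $B + x - y = B - y + x$ is independent of size $|B| = \rank(\matroid)$, hence a basis. Thus the ``second half'' of the exchange holds automatically for every $y \in C - x$, and the entire difficulty is to pick $y$ in this set so that $A - x + y$ is also a basis.

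For the ``first half'', observe that $A - x$ is independent (a subset of the basis $A$), and since $A$ is independent we have $x \notin \Span(A - x)$. Moreover, for any $y \neq x$, the set $A - x + y$ is a basis if and only if $y \notin \Span(A - x)$ (if $y\notin\Span(A-x)$ then $y\notin A-x$, the set has size $|A| = \rank(\matroid)$ and is independent; conversely $y\in\Span(A-x)$ forces dependence or size $<|A|$). So it suffices to show $C - x \not\subseteq \Span(A - x)$. Suppose otherwise; then $\Span(C - x) \subseteq \Span(A - x)$. But every element of a circuit lies in the span of the remaining elements of that circuit: $C$ is a minimal dependent set, so $\rank(C - x) = |C| - 1 = \rank(C)$, i.e.\ $x \in \Span(C - x)$. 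Combining, $x \in \Span(A - x)$, contradicting independence of $A$. Hence some $y \in C - x$ satisfies $y \notin \Span(A - x)$, and for this $y$ the set $A - x + y$ is a basis.

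It remains to confirm $y \in B \setminus A$. By construction $y \in C - x \subseteq B$. And $y \notin A$, because $y \neq x$, so $y \in A$ would give $y \in A - x \subseteq \Span(A - x)$, contradicting the choice of $y$. Together with the automatic fact that $B - y + x$ is a basis, this completes the proof.

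The only ingredient that is not completely elementary is the uniqueness of the fundamental circuit inside $B + x$, which I would either cite as standard or dispatch in one line: any circuit contained in $B + x$ must contain $x$ (since $B$ is independent), so if there were two distinct such circuits, the circuit elimination axiom applied to them would yield a circuit contained in $B$, impossible. I do not expect any genuine obstacle; the one point requiring care is making sure the chosen $y$ simultaneously lies outside $\Span(A - x)$ and inside $B \setminus A$ — which the span argument above delivers cleanly — and that the iff characterizing when $A - x + y$ is a basis is applied only for $y \neq x$, which holds since $y \in C - x$.
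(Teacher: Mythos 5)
Your proof is correct. Note, however, that the paper does not prove this lemma at all---it is imported as a black box (Theorem 39.12 of Schrijver), so there is no in-paper argument to compare against; your fundamental-circuit argument (unique circuit $C \subseteq B + x$ through $x$, every $y \in C - x$ makes $B - y + x$ a basis, and $C - x \not\subseteq \Span(A - x)$ since otherwise $x \in \Span(C - x) \subseteq \Span(A - x)$ would contradict independence of $A$) is the standard textbook proof and all the steps you flag as routine (uniqueness of the fundamental circuit via circuit elimination, the characterization of when $A - x + y$ is a basis) are handled correctly.
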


\paragraph{Proof of Lemma~\ref{lemma:monotone_mapping}} 
Let $U = [n]$ and $w(1) \geq w(2) \geq \cdots \geq w(n)$. Suppose the rank of the matroid is $r$. We denote the elements in $A$ by $\{a_i\}_{i \in [r]}$ where $a_i$'s are in ascending order. Consequently, we have $w(a_1) \ge w(a_2) \ge \cdots \geq w(a_r)$. Without loss of generality, we can assume that the basis $A$ is generated by feeding $1, 2, \dots, n$ to the greedy algorithm sequentially.

Next, we state our algorithm for constructing the bijection $f$. We will need the strong basis exchange property stated in Lemma~\ref{lemma:strong_exchange}. 

\begin{algorithm}[H]
	\SetAlgoLined
	\caption{Construct mapping $f$}
	\label{alg:mapping}
	{Recall $r$ is the rank of $\matroid$}
	
	{$A_{r + 1} \gets A$}
	
	\For{$i \gets$ $r, r - 1, \dots, 1$ in descending order}{
		\If {$a_i \in A_{i + 1} \setminus B$} {
			{By Lemma \ref{lemma:strong_exchange}, there exists $b_i \in B \setminus A_{i + 1}$ such that both $A_{i + 1} - a_i + b_i$ and $B - b_i + a_i$ are bases}
			
			{Let $f(a_i) \gets b_i$}
			
			{$A_i \gets A_{i + 1} - a_i + f(a_i)$}
		}
		\Else {
			{Let $f(a_i) \gets a_i$}
			
			{$A_i \gets A_{i + 1}$}
		}
	}
\end{algorithm}

We first prove the following facts about Algorithm \ref{alg:mapping}. 

\begin{fact} \label{fact:subset}
	$\{f(a_i), f(a_{i + 1}), \dots, f(a_r)\} \subset A_i$. 
\end{fact}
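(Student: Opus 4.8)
\textbf{Proof plan for Fact~\ref{fact:subset}.} The plan is to prove the statement by downward induction on $i$, from $i = r+1$ down to $i = 1$, where at $i = r+1$ the claim reads $\emptyset \subseteq A_{r+1} = A$ and is vacuous. The key move is to strengthen the inductive hypothesis: I would prove simultaneously that for every $i$ we have (i) $\{a_1,\dots,a_{i-1}\} \subseteq A_i$ and (ii) $\{f(a_i),\dots,f(a_r)\} \subseteq A_i$. Invariant (i) records the (intuitively obvious but necessary) fact that the only element of $A$ that step $i$ of \Cref{alg:mapping} can possibly delete from $A_{i+1}$ is $a_i$ itself, so the ``untouched'' prefix $a_1,\dots,a_{i-1}$ always survives; both invariants are trivially true at $i=r+1$.

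For the inductive step, assume (i) and (ii) hold at index $i+1$ and consider step $i$. In the first branch ($a_i \in A_{i+1}\setminus B$), the algorithm sets $f(a_i) = b_i$ with $b_i \in B \setminus A_{i+1}$ and $A_i = A_{i+1} - a_i + b_i$; then $f(a_i) = b_i \in A_i$, and for each $j \in \{i+1,\dots,r\}$ the value $f(a_j)$ is either $a_j \neq a_i$ or some $b_j \in B$, while $a_i \notin B$ in this branch, so deleting $a_i$ from $A_{i+1}$ removes none of the previously assigned values; combined with the hypothesis (ii) at $i+1$ this gives (ii) at $i$, and (i) at $i$ follows since $b_i \notin \{a_1,\dots,a_i\}$ (as $b_i \notin A_{i+1}$). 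In the second branch, $A_i = A_{i+1}$ and $f(a_i) = a_i$; here invariant (i) at index $i+1$ is exactly what is needed, since it tells us $a_i \in A_{i+1} = A_i$, so $f(a_i) \in A_i$, and (ii) at $i+1$ together with $A_i = A_{i+1}$ handles the rest; (i) at $i$ is immediate.

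The step I expect to be the real (if small) obstacle is the second branch: a priori the condition $a_i \notin A_{i+1}\setminus B$ leaves open the possibility $a_i \notin A_{i+1}$, in which case setting $f(a_i) = a_i$ would \emph{not} land in $A_i$ and the Fact would fail; this is precisely why one cannot prove (ii) in isolation and must carry the auxiliary invariant (i) that keeps the prefix $a_1,\dots,a_{i-1}$ (hence, one step earlier, $a_1,\dots,a_i$) inside the current set. Everything else is a routine bookkeeping of the two branches of \Cref{alg:mapping}, using only that the $a_j$ are distinct and that the strong basis exchange step produces $b_i \notin A_{i+1}$.
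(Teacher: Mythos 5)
Your proposal is correct and follows essentially the same route as the paper: both track which elements of $\{f(a_j)\}_{j\ge i}$ can ever be deleted when passing from $A_{j+1}$ down to $A_i$, and both rely on the observation that the only element step $k$ can remove from $A_{k+1}$ is $a_k$. The one thing worth noting is that the paper's proof uses the prefix invariant $\{a_1,\dots,a_j\}\subseteq A_{j+1}$ tacitly (it asserts ``$f(a_j)\in B\setminus A_{j+1}$ implies $f(a_j)\notin\{a_i,\dots,a_j\}$'' and ``$a_j$ is in both $A_j$ and $A_i$'' without spelling out why), whereas you carry it explicitly as invariant (i); you have correctly identified that this is exactly what is needed to justify $f(a_i)=a_i\in A_i$ in the else branch, so your write-up is, if anything, a slightly more careful presentation of the same argument.
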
 
\begin{proof}
	For all $j \geq i$, if $a_j \in A_{j + 1} \setminus B$, we have $f(a_j) \in B \setminus A_{j + 1}$ which implies $f(a_j) \not\in \{a_i, \dots, a_j\}$. Together with $f(a_j) \in A_j$, we can see that $f(a_j)$ is never eliminated and $f(a_j) \in A_i$. If $a_j \not\in A_{j + 1} \setminus B$, we know $f(a_j) = a_j$, and it is in both $A_j$ and $A_i$. 
\end{proof}

\begin{fact} \label{fact:bijection}
	The resulting mapping $f$ is a bijection. 
\end{fact}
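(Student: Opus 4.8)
The plan is to prove the bijectivity of $f$ by checking just two things: that $f$ maps $A$ into $B$, and that $f$ is injective; since $f$ is defined on every $a_i$ and $\lvert A\rvert=\lvert B\rvert=r$, these two facts already force $f$ to be a bijection from $A$ onto $B$. As a preliminary I would record, by downward induction on $i$, that every $A_i$ produced by Algorithm~\ref{alg:mapping} is a basis: $A_{r+1}=A$ is a basis, the exchange step preserves this because Lemma~\ref{lemma:strong_exchange} guarantees $A_{i+1}-a_i+b_i$ is a basis, and the ``else'' step leaves $A_i$ unchanged. In particular every invocation of strong basis exchange inside Algorithm~\ref{alg:mapping} is legitimate, so $f$ is well defined.

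The next step is the structural invariant that $\{a_1,\dots,a_i\}\subseteq A_{i+1}$ for every $i$, proved again by downward induction. At the top $A_{r+1}=A$ contains all of $a_1,\dots,a_r$; and passing from $A_{j+1}$ to $A_j$ either does nothing (``else'' branch) or deletes only $a_j$ and inserts $f(a_j)=b_j\notin A_{j+1}$, so none of $a_1,\dots,a_{j-1}$ is removed. In particular $a_i\in A_{i+1}$ always, which means the ``else'' branch of Algorithm~\ref{alg:mapping} is taken precisely when $a_i\in B$; in that branch $f(a_i)=a_i\in B$, and in the exchange branch $f(a_i)=b_i\in B\setminus A_{i+1}\subseteq B$ by construction. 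Hence $f(A)\subseteq B$.

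For injectivity I would process $i$ in descending order $r,r-1,\dots,1$ and show $f(a_i)\notin\{f(a_{i+1}),\dots,f(a_r)\}$, which is exactly pairwise distinctness of the images. By Fact~\ref{fact:subset} we have $\{f(a_{i+1}),\dots,f(a_r)\}\subseteq A_{i+1}$. If step $i$ is the exchange branch, then $f(a_i)=b_i\notin A_{i+1}$, so it is new. If step $i$ is the ``else'' branch, then $f(a_i)=a_i$, and for any $j>i$ we have $f(a_j)\neq a_i$: either $f(a_j)=b_j\notin A_{j+1}$ while $a_i\in A_{j+1}$ by the invariant (since $i<j$), or $f(a_j)=a_j\neq a_i$. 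Either way the image $a_i$ does not reappear, so all $f(a_i)$ are distinct and $f$ is injective; combined with $f(A)\subseteq B$ and the cardinality count, $f$ is a bijection $A\to B$.

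I expect the only genuine bookkeeping is establishing the invariant $\{a_1,\dots,a_i\}\subseteq A_{i+1}$ and combining it with the already-proved Fact~\ref{fact:subset}; once those are available, both $f(A)\subseteq B$ and injectivity reduce to a short case split on whether a given step took the exchange or the ``else'' branch, so there is no real obstacle beyond care with indices. (The remaining assertions of Lemma~\ref{lemma:monotone_mapping} --- that $B-f(x)+x$ is a basis, that $w(f(x))\le w(x)$, and that $f(x)=x$ on $A\cap B$ --- are separate from bijectivity and are handled by the exchange guarantees and the greedy ordering of $A$; the argument above concerns only the claim that $f$ is a bijection.)
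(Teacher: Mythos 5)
Your proposal is correct and takes essentially the same approach as the paper: establish injectivity (splitting into the exchange and ``else'' branches, using Fact~\ref{fact:subset} for the former and the containment $a_i\in A_{j+1}$ for the latter) and then invoke $|A|=|B|$ to conclude bijectivity. The only difference is that you make explicit the invariant $\{a_1,\dots,a_i\}\subseteq A_{i+1}$ and the resulting fact $f(A)\subseteq B$, both of which the paper uses implicitly (e.g., when it asserts $a_i\in A_{j+1}$ without proof), so your write-up is a slightly more careful version of the same argument.
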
 
\begin{proof}
	If $a_i \in A_{i + 1} \setminus B$, since $b_i \in B \setminus A_{i + 1}$, we know $f(a_i) = b_i \not\in \{f(a_{i + 1}), \dots, f(a_r)\}$ (by Fact \ref{fact:subset}). If $a_i \not\in A_{i + 1} \setminus B$, $f(a_i)$ equals $a_i$ itself. Note for all $j > i$, $A_{j + 1} - a_j + f(a_j)$ is a basis and $a_i \in A_{j + 1}$. Therefore $f(a_i) = a_i$ cannot equal to any of such $f(a_j)$. 
	
	Thus $f(a_i) \not= f(a_j)$ for all $j > i$. $f$ is an injection. Since $A$ and $B$ are two bases and $|A| = |B|$, $f$ is a bijection. 
\end{proof}

\begin{fact} \label{fact:monotone}
	For all $a_i \in A$, $w(f(a_i)) \leq w(a_i)$. 
\end{fact}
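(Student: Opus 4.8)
The plan is to establish Fact~\ref{fact:monotone} by a case analysis matching the two branches of Algorithm~\ref{alg:mapping}, combined with a short argument by contradiction that appeals to the greedy construction of $A$. If the loop takes the \emph{else} branch at index $i$ then $f(a_i)=a_i$ and $w(f(a_i))=w(a_i)$, so there is nothing to prove; hence assume $a_i\in A_{i+1}\setminus B$ and write $b_i=f(a_i)\in B\setminus A_{i+1}$ for the element supplied by the strong exchange step (Lemma~\ref{lemma:strong_exchange}). Suppose for contradiction that $w(b_i)>w(a_i)$.

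First I would pin down which elements are present in the intermediate bases at the moment index $i$ is processed. Because the loop runs over $r,r-1,\dots,1$ in descending order and step $j$ only ever deletes $a_j$, none of $a_1,\dots,a_i$ has been removed yet, so $\{a_1,\dots,a_i\}\subseteq A_{i+1}$; consequently $\{a_1,\dots,a_{i-1},b_i\}\subseteq A_{i+1}-a_i+b_i=A_i$. Since every $A_i$ is a basis (it is $A$ for $i=r+1$, and both the exchange step and the else branch preserve this), the set $\{a_1,\dots,a_{i-1},b_i\}$ is independent. I would also note that $b_i\notin A$: we have $b_i\notin\{a_1,\dots,a_i\}$ because $\{a_1,\dots,a_i\}\subseteq A_{i+1}$ while $b_i\notin A_{i+1}$, and the assumption $w(b_i)>w(a_i)$ forces $b_i$ to lie strictly before $a_i$ in the weight order $1,2,\dots,n$, so if $b_i$ were in $A$ it would be one of $a_1,\dots,a_{i-1}$, a contradiction.

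Now I would invoke the assumed greedy construction: $A$ is the output of running the greedy algorithm on $1,2,\dots,n$, where the weights are nonincreasing along this order. Since $w(b_i)>w(a_i)$, greedy processes $b_i$ strictly before $a_i$, and at that instant its partial solution is exactly the set $C$ of $A$-elements preceding $b_i$, so $C\subseteq\{a_1,\dots,a_{i-1}\}$. Because $b_i\notin A$, greedy rejected $b_i$, i.e.\ $C+b_i\notin\indset$, hence $b_i\in\Span(C)\subseteq\Span(\{a_1,\dots,a_{i-1}\})$; this contradicts the independence of $\{a_1,\dots,a_{i-1},b_i\}$ shown above. Therefore $w(b_i)\le w(a_i)$, which is the claim. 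The only delicate points are bookkeeping ones — tracking precisely which $a_j$'s survive in $A_{i+1}$ and $A_i$ at step $i$, and disposing of the degenerate possibility $b_i\in A$ (the tie $w(b_i)=w(a_i)$ being harmless since then the inequality already holds) — and I expect the contradiction itself to be immediate once the independence of $\{a_1,\dots,a_{i-1},b_i\}$ and the ``greedy rejected $b_i$'' facts are in hand; this, together with Facts~\ref{fact:subset} and~\ref{fact:bijection}, completes Lemma~\ref{lemma:monotone_mapping}.
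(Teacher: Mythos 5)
Your proof is correct and follows essentially the same route as the paper's: both hinge on showing $\{a_1,\dots,a_{i-1},f(a_i)\}$ is independent (via $A_i$ being a basis containing it) and then invoking the greedy construction of $A$ to derive a contradiction with $f(a_i)$ lying strictly before $a_i$ in the weight order. The only cosmetic difference is that you first establish $b_i\notin A$ explicitly and contradict independence directly, whereas the paper phrases the contradiction as $f(a_i)\in A_{i+1}$ versus $f(a_i)\in B\setminus A_{i+1}$.
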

\begin{proof}
	If $w(f(a_i)) > w(a_i)$, we know $f(a_i) < a_i$. Thus $f(a_i)$ is fed to the greedy algorithm before $a_i$. 
	On the other hand, since $f(a_i) \ne a_i$, we must have $a_i \in A_{i + 1} \setminus B$ and $A_{i + 1} - a_i + f(a_i)$ is a basis. Therefore $\{a_1, a_2, \dots, a_{i - 1}, f(a_i)\} \in \indset$. Then our greedy algorithm should have selected $f(a_i)$ before it selected $a_i$. Since we enumerate $a_i$ in decreasing order of their index, if our algorithm selected $f(a_i)$, we must have $f(a_i) \in A_{i + 1}$. This contradicts with the fact $f(a_i) \in B \setminus A_{i + 1}$. 
\end{proof}

Finally, we verify that for all $a_i$, $B - f(a_i) + a_i$ is a basis. It is trivial when $f(a_i) = a_i$. Otherwise, $B - f(a_i) + a_i$ is also a basis according to our construction. This finishes the proof of the lemma.
\qed

\section{Missing Proofs from Section \ref{sec:orcs}}

\subsection{Uniqueness of $\mathsf{Select}(N,c)$ and $\widehat{\mathsf{Select}}(N,c)$}
\label{appendix:ocrs-protection-uniqueness}

In this section, we show $\select(N,c)$ can be uniquely defined by the set $S$ returned by Algorithm~\ref{alg:protection}, no matter which element $e$ is chosen each time at Line~\ref{line:choose_element}.

Let $\mathcal{T}$ be the family of sets that do not satisfy the condition in Line~\ref{line:choose_element}, i.e., for every $T \in \mathcal{T}$, $e \in N \setminus T$,
\begin{equation*}
\Prx{e \in \Span(((R(\allocs) \cap N) \cup T) \setminus \{e\})} \le c.
\end{equation*}
By definition, the set $S$ returned by Algorithm~\ref{alg:protection} always belongs to $\mathcal{T}$.
We claim such $S$ is the unique minimum set in $\mathcal{T}$.
The plan is to prove by induction that the changing $S$ in always remains a subset of $T$ during the execution of Algorithm~\ref{alg:protection} for every $T \in \mathcal{T}$.
Initially, $S = \emptyset$ and the statement trivially holds.
Each time an element is added to $S$, we must have
$$\Prx{e \in \Span(((R(\allocs) \cap N) \cup T) \setminus \{e\})} \ge \Prx{e \in \Span(((R(\allocs) \cap N) \cup S) \setminus \{e\})} > c,$$
where the first inequality is due to the monotonicity of $\Span$ and the induction hypothesis $S \subseteq T$.
That implies $e \in T$ by definition of the family $\mathcal{T}$. 
Therefore, $S \cup \{e\} \subseteq T$ and we have completed the proof.
The same argument holds if we replace $\mathbf{Pr}$ with $\widehat{\mathbf{Pr}}$, and thus $\sselect(N,c)$ is also uniquely defined by Algorithm~\ref{alg:sample-select} (for any fixed samples $R_1,R_2,\dots,R_s$).

\subsection{Proof of Lemma~\ref{lemma:OCRS:MatroidTerminate}}
\label{appendix:ocrs-termination}

\begin{proof}[Proof of Lemma~\ref{lemma:OCRS:MatroidTerminate}]
    Let $r$ be $\rank(S)$. Take $e_1, e_2, \dots, e_r$ from $S$ as follows:  $e_i$ is the item that increases the rank of $S$ from $i - 1$ to $i$ during the algorithm. 
    
    Before the formal proof, we first explain the intuition. Note we know that $\Prx{e_i \notin \Span(((R(\allocs) \cap N)\cup \{e_1,e_2,\dots,e_{i-1}\}) \setminus \{e_i\})} < 1 - c$ holds for all $i\in [r]$ since $e_i$ is added to $S$ by our algorithm. So instead of directly consider $\rank(S)$, we start from the random set $R(\allocs) \cap N$ and add $e_1, e_2, \dots, e_r$ to it in order. Each item we add $e_i$, the probability that it increase the rank by one is strictly smaller than $1 - c$. Finally, we get a set with rank $\Ex{\rank(S \cup (R(\allocs) \cap N))}$ in expectation. This is an upper bound of $\rank(S)$, and we can further upper bound the rank of it by $\rank(N)$ using the fact that $e_i$ increase its rank with probability strictly less than $1 - c$. 

    Formally, we have
\begin{align*}
    \rank(S)&\le \Ex{\rank(S\cup (R(\allocs) \cap N))}\\
            &\leq \Ex{|R(\allocs) \cap N|} + \sum_{i=1}^r \Prx{e_i \notin \Span(((R(\allocs) \cap N)\cup \{e_1,e_2,\dots,e_{i-1}\}) \setminus \{e_i\})}\\
            &< b \rank(N) + (1-c)\rank(S),
\end{align*}
where we used the fact that $\Ex{|R(\allocs) \cap N|} \leq \sum_{i \in N} \alloci \leq b \rank(N)$ and $\Prx{e_i \notin \Span(((R(\allocs) \cap N)\cup \{e_1,\cdots,e_{i-1}\}) \setminus \{e_i\})} < 1 - c$. 
As a result, $\rank(S) < \frac{b}{c} \cdot \rank(N)$.
\end{proof}

\section{Non-existence of $\Omega(1)$-Balanced Matroid CRS with $o(\log n)$ Samples}
\label{appendix:ocrs-sample-lower-bound}

A previous work~\citep{HPT22} showed the non-existence of $\Omega(1)$-balanced matroid contention resolution scheme (CRS) with $O(1)$ samples.
We note that their impossibility result can be extended to any $\Omega(1)$-balanced matroid CRS with $o(\log n)$ samples by using the same hard instance.
As a direct consequence, it is also impossible to obtain a constant-selectable OCRS with $o(\log n)$ samples.

One of their hard instance is the graphic matroid $\mathcal{M}=(E,\mathcal{I})$ defined on the complete bipartite graph $K_{N,M}=(U \cup V, E)$ with bipartition $U=\{u_1,u_2,\dots,u_N\}$ and $V=\{v_1,v_2,\dots,v_M\}$. 
Let $\polytope_{\mathcal{M}}$ be the polytope of the matroid and $\delta(u)$ be the set of edges incident to a vertex $u \in U \cup V$.

\begin{figure}[H]
\centering
\begin{subfigure}[b]{0.45\textwidth}
\centering
\begin{tikzpicture}[thick,amat/.style={matrix of nodes,nodes in empty cells,
  fsnode/.style={draw,solid,circle,execute at begin node={$u_{\the\pgfmatrixcurrentrow}$}},
  ssnode/.style={draw,solid,circle,execute at begin node={$v_{\the\pgfmatrixcurrentrow}$}}}]

 \matrix[amat,nodes=fsnode,label=above:$U$,row sep=1em,dashed,draw,rounded corners] (mat1) {
 \\
 \\ 
 \\
 \\
 \\};

 \matrix[amat,right=2cm of mat1,nodes=ssnode,label=above:$V$,row sep=1em,dashed,draw,rounded corners] (mat2) {\\
 \\ 
 \\};
\foreach \x in {1,...,5}
    \foreach \y in {1,...,3} 
    {\draw  (mat1-\x-1) edge[thick, dotted] (mat2-\y-1); } 

\foreach \y in {1,...,3} 
	{\draw  (mat1-4-1) edge[very thick] (mat2-\y-1); }
\end{tikzpicture}
\caption{The complete bipartite graph $K_{N,M}$ and $\allocs^i$. Here $i = 4$, edges adjacent to $u_i$ has probability $x_e^i = 1$ of being active, while other edges each only has probability $\frac{1}{M}$ of being active. In any $\Omega(1)$-balanced CRS, they have to be selected with constant probability.}
\end{subfigure}
\hfill
\begin{subfigure}[b]{0.45\textwidth}
\centering
\begin{tikzpicture}[thick,amat/.style={matrix of nodes,nodes in empty cells,
  fsnode/.style={draw,solid,circle,execute at begin node={$u_{\the\pgfmatrixcurrentrow}$}},
  ssnode/.style={draw,solid,circle,execute at begin node={$v_{\the\pgfmatrixcurrentrow}$}}}]

 \matrix[amat,nodes=fsnode,row sep=1em] (mat1) {
 \\
 \\ 
 \\
 \\
 \\};

 \matrix[amat,right=2cm of mat1,nodes=ssnode,label=above:$V$,row sep=1em,dashed,draw,rounded corners] (mat2) {\\
 \\ 
 \\};
\draw  (mat1-1-1) edge (mat2-1-1); 
\draw  (mat1-5-1) edge (mat2-3-1); 

\draw[thick,dashed,rounded corners] ($(mat1-3-1)+(-0.55,0.6)$)  rectangle node[left, xshift = -0.5cm] {$U^*$} ($(mat1-4-1)+(0.55,-0.6)$);

\foreach \y in {1,...,3} 
	{\draw  (mat1-3-1) edge (mat2-\y-1); \draw  (mat1-4-1) edge (mat2-\y-1); }
\end{tikzpicture}
\caption{A realization $R(\allocs^i)$ of this instance. Let $U^*$ be the set of all vertices on left side of degree $M$. If $N$ is large enough, there will be many vertices happen to be in $U^*$. These vertices in $U^*$ are indistinguishable to CRS, and $u_i$ ($i = 4$) is hidden between them.}
\end{subfigure}
\caption{The hard instance for graphic matroids in~\citep{HPT22}}
\label{fig:Graphical}
\end{figure}
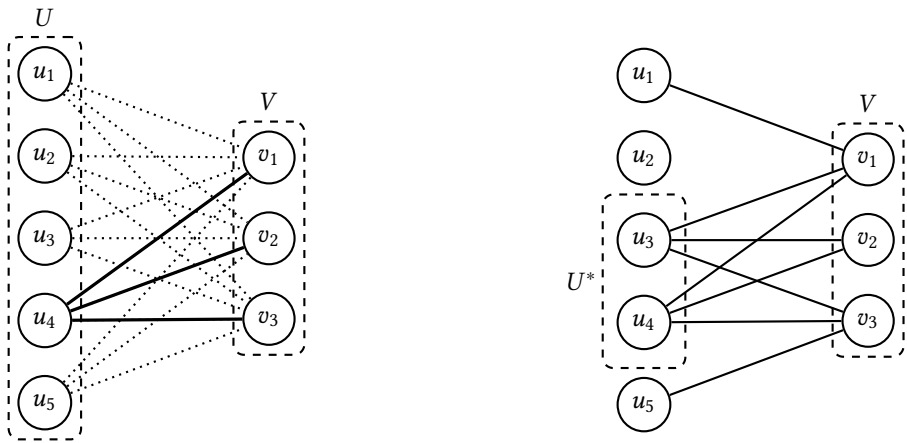

Consider $N$ points $\allocs^1, \allocs^2, \dots, \allocs^N$ where for every $i \in [N]$ and $e \in E$, 
\begin{equation*}
\alloc^i_e = \begin{cases}
1 & \text{if } e \in \delta(u_i), \\ 
\frac{1}{M} & \text{otherwise}. 
\end{cases}
\end{equation*}

Their intermediate result can be summarized as follows:
\begin{lemma}[Section 3.1,~\citep{HPT22}]
    The points $\allocs^1, \allocs^2, \dots, \allocs^N$ are in the matroid polytope $\polytope_{\mathcal{M}}$. 
    Moreover, for any $N,M \ge 1$, $s \ge 0$, and $0 < c \le 1$, if there exists a $c$-balanced CRS with $s$ samples for $\allocs^1, \allocs^2, \dots, \allocs^N \in \polytope_{\mathcal{M}}$, then
    $$c \le \frac{1}{M} + \frac{M^{(s+1)M}}{N}.$$
\end{lemma}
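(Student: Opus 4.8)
The plan is to prove the two assertions separately: the polytope membership is a routine feasibility check, while the upper bound on $c$ is an indistinguishability argument that hides the ``special'' left vertex $u_i$ among statistically identical look-alikes.

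\emph{Polytope membership.} I would use the standard forest-polytope description of a graphic matroid: for a graph $G = (W,E)$, a vector $\allocs \ge 0$ lies in $\polytope_{\matroid}$ iff $\sum_{e \in E[Z]} \alloc_e \le |Z| - 1$ for every nonempty $Z \subseteq W$, where $E[Z]$ denotes the edges with both endpoints in $Z$. Plugging in $\allocs^i$ and writing $a = |Z \cap U|$, $b = |Z \cap V|$: if $u_i \notin Z$ then $\sum_{e \in E[Z]} \alloc^i_e = ab/M \le a \le a + b - 1$ whenever $b \ge 1$ (and it is $0$ when $b = 0$); if $u_i \in Z$ then $\sum_{e \in E[Z]} \alloc^i_e = b + (a-1)b/M$, which is $\le a + b - 1$ exactly when $(a-1)b/M \le a - 1$, i.e.\ when $b \le M$, and this always holds. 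Hence $\allocs^i \in \polytope_{\matroid}$ for every $i$. This step is purely mechanical.

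\emph{The lower bound: setup.} Fix $i \in [N]$ and let $\alg$ be a $c$-balanced CRS using $s$ samples. Under $\allocs^i$, every edge incident to $u_i$ is active with probability $1$, so $c$-balancedness gives $\E{|\alg \cap \delta(u_i)|} \ge cM$; the whole task is to prove the matching upper bound $\E{|\alg \cap \delta(u_i)|} \le 1 + M \cdot \frac{M^{(s+1)M}}{N}$. The idea is that $\alg$ only ever observes $s+1$ independent draws from $\allocs^i$ — its $s$ samples together with the single online realization — and $u_i$ is statistically hidden among all left vertices that happen to look fully active in all $s+1$ of those draws. Call $u_j$ ($j \ne i$) an \emph{impostor} if all $M$ of its edges are active in each of the $s+1$ draws; under $\allocs^i$ this happens independently for each $j \ne i$ with probability $(M^{-M})^{s+1} = M^{-(s+1)M}$, so the number of impostors $X$ has distribution $\mathrm{Bin}\bigl(N-1,\, M^{-(s+1)M}\bigr)$.

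\emph{The lower bound: symmetrization and conclusion.} The heart of the argument — and the step I expect to be the main obstacle to phrase carefully — is the symmetrization. Conditioned on the impostor set being exactly $T$, the joint distribution of the $s+1$ active sets handed to $\alg$ is invariant under every permutation of $\{u_i\} \cup \{u_j : j \in T\}$, because each of these $|T|+1$ left vertices is fully active in all draws while every remaining edge is i.i.d.\ active with probability $1/M$ (subject only to the symmetric side-condition that each non-impostor left vertex fails to be fully active in at least one draw). Since $\alg$ is a fixed scheme, this forces the conditional expectations $\E{|\alg \cap \delta(u_j)| \mid \text{impostors} = T}$ to be equal for all $j \in \{i\} \cup T$. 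On the other hand, the edges incident to these $|T|+1$ left vertices are all active online, the stars $\delta(u_j)$ are pairwise edge-disjoint, and their union spans a connected subgraph on $|T| + 1 + M$ vertices, hence has rank $M + |T|$; as $\alg$ outputs an independent set, $\sum_{j \in \{i\} \cup T} |\alg \cap \delta(u_j)| \le M + |T|$, so averaging over the $|T|+1$ equal terms yields
\[
\E{\,|\alg \cap \delta(u_i)| \mid \text{impostors} = T\,} \;\le\; \frac{M + |T|}{|T| + 1} \;=\; 1 + \frac{M - 1}{|T| + 1}.
\]
Taking expectation over $X$ and using the identity $\E{\tfrac{1}{X+1}} = \frac{1 - (1-p)^{N}}{Np} \le \frac{1}{Np}$ for $X \sim \mathrm{Bin}(N-1, p)$ with $p = M^{-(s+1)M}$,
\[
\E{|\alg \cap \delta(u_i)|} \;\le\; 1 + (M-1)\,\E{\tfrac{1}{X+1}} \;\le\; 1 + (M-1) \cdot \frac{M^{(s+1)M}}{N}.
\]
Combining with $cM \le \E{|\alg \cap \delta(u_i)|}$ gives $c \le \frac1M + \frac{M^{(s+1)M}}{N}$, as claimed. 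The two easy ingredients are the polytope check and the binomial identity; the genuinely delicate point is the exchangeability claim — arguing that conditioning on the impostor set makes $u_i$ indistinguishable from the impostors \emph{even though} $\alg$ holds $s$ samples, which is precisely why $M^{-(s+1)M}$ (rather than $M^{-M}$) appears: the samples grant $\alg$ $s$ extra chances to single out the special vertex.
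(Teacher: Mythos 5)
Your architecture -- hiding $u_i$ among ``impostors'' that are fully active in all $s+1$ draws, bounding the impostor count by a $\mathrm{Bin}(N-1, M^{-(s+1)M})$ variable, and then invoking the rank of the union of fully-active stars -- is the right indistinguishability argument, and the polytope check, the rank bound $M + |T|$, and the identity $\E{\tfrac{1}{X+1}} \le \tfrac{1}{Np}$ are all correct. However, the exchangeability step has a real gap, and it is not the one you flag. You write that since $\alg$ is a fixed scheme, the permutation-invariance of the conditional input distribution forces $\E{|\alg \cap \delta(u_j)| \mid T}$ to be equal across $j \in \{i\} \cup T$. That inference is false in general. Invariance of the input distribution under a permutation $\sigma$ of $\{u_i\} \cup T$ only gives $\E{f(\mathbf{R})} = \E{f(\sigma(\mathbf{R}))}$ for the \emph{same} fixed $f$; to deduce $\E{|\alg(\mathbf{R}) \cap \delta(u_j)|} = \E{|\alg(\mathbf{R}) \cap \delta(u_i)|}$ you would need $\alg$ to be \emph{equivariant}, i.e.\ $\alg(\sigma(\mathbf{R})) = \sigma(\alg(\mathbf{R}))$, which a fixed, label-aware scheme need not satisfy (it could, say, systematically prefer edges incident to $u_1$). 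The fact that $\alg$ is ``fixed'' is precisely what permits it to be asymmetric, so your stated justification is backwards.

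The gap is fixable in either of two standard ways, one of which you should make explicit. (i) Symmetrize the scheme: replace $\alg$ by $\alg'$ that applies a uniformly random permutation $\pi$ of the left vertices to its input (including the samples), runs $\alg$, and inverts $\pi$ on the output. Since the family $\{\allocs^{i'}\}_{i' \in [N]}$ is closed under these permutations, $\alg'$ is still $c$-balanced for every $\allocs^{i'}$, and $\alg'$ is equivariant by construction, so your exchangeability claim becomes valid. (ii) Average over $i$ instead of fixing it: sum $cM \le \E[\allocs^i]{|\alg \cap \delta(u_i)|}$ over all $i \in [N]$, condition each term on the full set $T' = \{i\} \cup T$ of left vertices fully active in all $s+1$ draws (both $\Prx{T'}$ and the conditional law of $\mathbf{R}$ given $T'$ are the same for every $i \in T'$), swap the order of summation, and apply the pointwise rank bound $\sum_{j \in T'}|\alg(\mathbf{R}) \cap \delta(u_j)| \le |T'| + M - 1$; this route needs no equivariance assumption at all. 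After either fix, the rest of your computation goes through and yields the stated inequality.
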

Then they concluded the non-existence of $\Omega(1)$-balanced CRS for any constant $s$ by letting $N \gg M^M$.
The same argument, notably, remains valid to cases where $s$ is beyond constant.
Specifically, for any $s \le \frac{\log N}{2M \log M} - 1$, the inequality above becomes
$$c \le \frac{1}{M} + \frac{1}{\sqrt{N}}.$$
In other words, a $(\frac{1}{M} + \frac{1}{\sqrt{N}})$-balanced CRS for $\mathcal{M}$ requires at least $\frac{\log N}{2M \log M}$ samples. By choosing $N \gg M^M$ (and note that $n=|E|=NM$), we conclude that $\Omega(1)$-balanced matroid CRS is impossible given $o(\log n)$ samples.

\end{document}